\providecommand{\tabularnewline}{\\}
\theoremstyle{plain}
\newtheorem{thm}{\protect\theoremname}
\theoremstyle{plain}
\newtheorem{assumption}[thm]{\protect\assumptionname}
\theoremstyle{plain}
\newtheorem{prop}[thm]{\protect\propositionname}
\theoremstyle{plain}
\newtheorem{lem}[thm]{\protect\lemmaname}
\definecolor{lg}{gray}{0.8}
\numberwithin{thm}{section}
\numberwithin{equation}{section}
\numberwithin{table}{section}
\numberwithin{figure}{section}
\newtheoremstyle{boldremark}
  {}
  {}
  {}
  {}
  {\bfseries}
  {.}
  { }
  {}%
\theoremstyle{boldremark}
\newtheorem{brem}{Remark} 
\numberwithin{brem}{section}
\patchcmd{\appendices}{\quad}{: }{}{}
\providecommand{\assumptionname}{Assumption}
\providecommand{\lemmaname}{Lemma}
\providecommand{\propositionname}{Proposition}
\providecommand{\theoremname}{Theorem}
\begin{document}
\title{A parsimonious neural network approach to solve portfolio optimization
problems without using dynamic programming}
\author{Pieter M. van Staden\thanks{National Australia Bank, Melbourne, Victoria, Australia 3000. The
research results and opinions expressed in this paper are solely those
of the authors, are not investment recommendations, and do not reflect
the views or policies of the NAB Group. \texttt{pieter.vanstaden@gmail.com}} \and Peter A. Forsyth\thanks{Cheriton School of Computer Science, University of Waterloo, Waterloo
ON, Canada, N2L 3G1, \texttt{paforsyt@uwaterloo.ca}} \and Yuying Li\thanks{Cheriton School of Computer Science, University of Waterloo, Waterloo
ON, Canada, N2L 3G1, \texttt{yuying@uwaterloo.ca}} }
\date{\today}
\maketitle
\begin{abstract}
We present a parsimonious neural network approach, which does not
rely on dynamic programming techniques, to solve dynamic portfolio
optimization problems subject to multiple investment constraints.
The number of parameters of the (potentially deep) neural network
remains independent of the number of portfolio rebalancing events, and in contrast to, for example, reinforcement learning, the approach
avoids the computation of high-dimensional conditional expectations.
As a result, the approach remains practical even when considering
large numbers of underlying assets, long investment time horizons
or very frequent rebalancing events. We prove convergence of the numerical
solution to the theoretical optimal solution of a large class of problems
under fairly general conditions, and present ground truth analyses
for a number of popular formulations, including mean-variance and
mean-conditional value-at-risk problems. We also show that it is feasible
to solve Sortino ratio-inspired objectives (penalizing only the variance
of wealth outcomes below the mean) in dynamic trading settings with
the proposed approach. Using numerical experiments, we demonstrate
that if the investment objective functional is separable in the sense
of dynamic programming, the correct time-consistent optimal investment
strategy is recovered, otherwise we obtain the correct pre-commitment
(time-inconsistent) investment strategy. The proposed approach remains
agnostic as to the underlying data generating assumptions, and results
are illustrated using (i) parametric models for underlying asset returns,
(ii) stationary block bootstrap resampling of empirical returns, and
(iii) generative adversarial network (GAN)-generated synthetic asset
returns. 

\medskip{}

\noindent \textbf{Keywords}: Asset allocation, portfolio optimization,
neural network, dynamic programming\medskip{}

\noindent \textbf{JEL classification}: G11, C61
\end{abstract}

\section{Introduction\label{sec:Introduction}}

We present, and analyze the convergence of, a parsimonious and flexible
neural network approach to obtain the numerical solution of a large
class of dynamic (i.e. multi-period) portfolio optimization problems
that can be expressed in the following form, 

\begin{equation}
\inf_{\xi\in\mathbb{R}}\inf_{\mathcal{P}\in\mathcal{A}}\left\{ \:E_{\mathcal{P}}^{t_{0},w_{0}}\left[F\left(W\left(T\right),\xi\right)+G\left(W\left(T\right),E_{\mathcal{P}}^{t_{0},w_{0}}\left[W\left(T\right)\right],w_{0},\xi\right)\begin{array}{c}
\!\!\!\!\!\!\!\!\\
\!\!\!\!\!\!\!\!
\end{array}\right]\:\right\} .\label{eq: Initial full objective}
\end{equation}
While rigorous definitions and assumptions are discussed in subsequent
sections, here we simply note that in general, $F:\mathbb{R}^{2}\rightarrow\mathbb{R}$
and $G:\mathbb{R}^{4}\rightarrow\mathbb{R}$ denote some continuous
functions and $\xi\in\mathbb{R}$ some auxiliary variable, with $T>0$
denoting the investment time horizon, $W\left(t\right),t\in\left[t_{0},T\right]$,
the controlled wealth process, and $\mathcal{P}$ representing the
investment strategy (or control) implemented over $\left[t_{0},T\right]$.
Typically, $\mathcal{P}$ specifies the amount or fraction of wealth
to invest in each of (a potentially large number of) the underlying
assets at each portfolio rebalancing event, which in practice occurs
at some discrete subset of rebalancing times in $\left[t_{0},T\right]$.
$\mathcal{A}$ denotes the set of admissible investment strategies
encoding the (possibly multiple) investment constraints faced by the
investor. Finally, $E_{\mathcal{P}}^{t_{0},w_{0}}\left[\cdot\right]$
denotes the expectation given control $\mathcal{P}$ and initial wealth
$W\left(t_{0}\right)=w_{0}$. 

Although (\ref{eq: Initial full objective}) is written for objective
functions involving the terminal portfolio wealth $W\left(T\right)$,
the approach and convergence analysis could be generalized without
difficulty to objective functions that are wealth path-dependent,
i.e. functions of $\left\{ W\left(t\right):t\in\mathcal{T}\right\} $
for some subset $\mathcal{T}\subseteq\left[t_{0,}T\right]$ - see
\cite{ForsythPvSLi2022,PvSForsythLi2022_CD} for examples. However,
since a sufficiently rich class of problems are of the form (\ref{eq: Initial full objective}),
this will remain the main focus of this paper. 

The proposed approach does not rely on the separability of the objective
functional in (\ref{eq: Initial full objective}) in the sense of
dynamic programming, remains agnostic as to the underlying data generation
assumptions, and is sufficiently flexible such that practical considerations
such as multiple investment constraints and discrete portfolio rebalancing
can be incorporated without difficulty.

Leaving the more formal treatment for subsequent sections, for introductory
purposes we highlight some specific examples of problems of the form
(\ref{eq: Initial full objective}):
\begin{enumerate}
\item Utility maximization (see for example \cite{Vigna_efficiency2014}),
in which case there is no outer optimization problem and $G\equiv0$,
while $w\rightarrow U\left(w\right)$ denotes the investor's utility
function, so that (\ref{eq: Initial full objective}) therefore reduces
to 
\begin{equation}
\sup_{\mathcal{P}\in\mathcal{A}}\left\{ E_{\mathcal{P}}^{t_{0},w_{0}}\left[U\left(W\left(T\right)\right)\right]\right\} .\label{eq: Intro - Utility maximization}
\end{equation}
\item Mean-variance (MV) optimization (see e.g. \cite{ZhouLi2000,LiNg2000}),
with $\rho>0$ denoting the scalarization (or risk aversion) parameter,
where the problem
\begin{equation}
\sup_{\mathcal{P}\in\mathcal{A}}\left\{ E_{\mathcal{P}}^{t_{0},w_{0}}\left[W\left(T\right)\right]-\rho\cdot Var_{\mathcal{P}}^{t_{0},w_{0}}\left[W\left(T\right)\right]\right\} ,\label{eq: Intro - MV optimization}
\end{equation}
 can also be written in the general form (\ref{eq: Initial full objective}). 
\item Mean-Conditional Value-at-Risk (CVaR) optimization, in which case
we do have both an inner and an outer optimization problems (see e.g.
\cite{MillerYang2017,Forsyth2019CVaR}, resulting in a problem of
the form 
\begin{equation}
\inf_{\xi\in\mathbb{R}}\inf_{\mathcal{P}\in\mathcal{A}}\left\{ E_{\mathcal{P}}^{t_{0},w_{0}}\left[F\left(W\left(T\right),\xi\right)\right]\right\} ,\label{eq: Intro mean CVaR}
\end{equation}
for a particular choice of the function $F$ (see (\ref{eq: MCV objective})
below).
\item To illustrate the flexibility and generality of the proposed approach,
we also consider a ``mean semi-variance'' portfolio optimization
problem that is inspired by the popular Sortino ratio (\cite{BodieEtAl2014})
in the case of one-period portfolio analysis, where only the variance
of downside outcomes relative to the mean is penalized. In the case
of dynamic trading strategies, this suggests an objective function
of the form 
\begin{equation}
\sup_{\mathcal{P}\in\mathcal{A}}\left\{ E_{\mathcal{P}}^{t_{0},w_{0}}\left[W\left(T\right)-\rho\cdot\left(\min\left\{ W\left(T\right)-E_{\mathcal{P}}^{t_{0},w_{0}}\left[W\left(T\right)\right],0\right\} \right)^{2}\right]\right\} ,\label{eq: Intro - Sortino}
\end{equation}
where, as in the case of (\ref{eq: Intro - MV optimization}), the
parameter $\rho>0$ encodes the trade-off between risk and return.
Note that (\ref{eq: Intro - Sortino}) is not separable in the sense
of dynamic programming, and in the absence of embedding results (analogous
to those of \cite{ZhouLi2000,LiNg2000} in the case of MV optimization
(\ref{eq: Intro - MV optimization})), problem (\ref{eq: Intro - Sortino})
cannot be solved using traditional dynamic programming-based methods.
\end{enumerate}
However, we emphasize that (\ref{eq: Intro - Utility maximization})-(\ref{eq: Intro - Sortino})
are only a selection of examples, and the proposed approach and theoretical
analysis remains applicable to problems that can be expressed in the
general form (\ref{eq: Initial full objective}).

Portfolio optimization problems of the form (\ref{eq: Initial full objective})
can give rise to investment strategies that are not time-consistent
due to the presence of the (possibly non-linear) function $G$ (\cite{BjorkEtAl_BOOK_2021}).
Since the objective in (\ref{eq: Initial full objective}) is therefore
potentially not separable in the sense of dynamic programming (see
for example (\ref{eq: Intro - MV optimization}) or (\ref{eq: Intro - Sortino})).
This gives rise to two related problems: (i) Since (\ref{eq: Initial full objective})
cannot be solved using a dynamic programming-based approach, some
other solution methodology has to be implemented, or some re-interpretation
of the problem or the concept of ``optimality'' might be required
(see for example \cite{Vigna2017TailOptimality,BjorkMurgoci2014}),
(ii) if the investment strategies are time-inconsistent, this can
raise questions as to whether these strategies are feasible to implement
as practical investment strategies. 

We make the following general observations:

\begin{itemize}

\item It may be desirable to avoid using dynamic programming (DP)
even if (\ref{eq: Initial full objective}) \textit{can} be solved
using DP techniques, such as in the special case where $G\equiv0$
in (\ref{eq: Initial full objective}) and the investment strategies
are time-consistent. For example, it is well known that DP has an
associated ``curse of dimensionality'', in that as the number state
variables increases linearly, the computational burden increases exponentially
(\cite{HanWeinan2016,FernandezEtAl2020}). In addition, since DP techniques
necessarily incur estimation errors at each time step, significant
error amplification can occur which is further exacerbated in high-dimensional
settings (see for example \cite{WangFoster2020,TsangWong2020,LiEtAl2020}).

However, instead of relying on DP-based techniques and attempting
to address the challenges of dimensionality using machine learning
techniques (see for example \cite{DixonEtAl2020_BOOK,ParkEtAl2020,LucarelliEtAl2020,GaoEtAL2020,FernandezEtAl2020,HenryLabordere2017,HureEtAl2021,BachouchEtAl2021}),
the proposed method fundamentally avoids DP techniques altogether.
This is especially relevant in our setting, since we have shown that
in the case of portfolio optimization problems specifically, DP can
be \textit{unnecessarily} high-dimensional even in simple settings
(see \cite{PvSForsythLi2022_stochbm}). This occurs since the objective
functional {\color{black}(or performance
criteria (\cite{OksendalSulemBook}) )} is typically high-dimensional while the optimal investment
strategy (the fundamental quantity of concern) remains relatively
low-dimensional. The proposed method therefore forms part of the significant
recent interest in developing machine learning techniques to solve
multi-period portfolio optimization problems that avoids using DP techniques altogether
(see for example \cite{LiForsyth2019,TsangWong2020,PvSForsythLi2022_stochbm,NiLiForsyth2020}).

\item Time-inconsistent problems naturally arise in financial applications
(see \cite{BjorkEtAl_BOOK_2021} for numerous examples), and as a
result their solution is often an area of active research due to the
unique challenges involved in solving these problems without resorting
to DP techniques. Examples include the mean-variance problem, which
remained an open problem for decades until the solution using the
embedding technique of \cite{ZhouLi2000,LiNg2000}. As a result, being
able to obtain a numerical solution to problems of the form (\ref{eq: Initial full objective})
directly is potentially very valuable for research.

{\color{black}
The solution of time-inconsistent problems is also practical interest,
since in many cases, there exists an induced time consistent objective
function (\cite{StrubEtAl2019_CVAR,StrubLiCui2019,Forsyth2019CVaR}).
The optimal policy for this induced time consistent objective function is identical
to the pre-commitment policy at time zero.  The induced time consistent
strategy is, of course implementable (\cite{Forsyth2019CVaR}), in the
sense that the investor has no incentive to deviate from the strategy 
determined at time zero, at later times.

An alternative approach to handling time-inconsistent problems is
to search for the equilibrium control (\cite{BjorkEtAl_BOOK_2021}).  A
fascinating result obtained in \cite{BjorkMurgociORIGINAL} is that
for every equilibrium control, there exists a standard, time consistent problem
which has the same control, under a different objective function.

This essentially means that the question of time-consistency is a
often matter of perspective, since there may be alternative objective
functions which give rise to the same pre-commitment control, yet are time-consistent.
In fact, other subtle issues arise in comparing pre-commitment and time
consistent controls, see \cite{Vigna2016TC,Vigna2017TailOptimality} for
further discussion.

Furthermore, over very short time horizons such as those encountered
in optimal trade execution, time consistency or its absence may not
be of much concern to the investor or market participant (see for
example \cite{ForsythKennedyTseWindcliff2011,TseForsythKennedy2012}).

In addition, as noted by
\cite{Bernard_2014}, if the strategy is realized in an investment
product sold to a retail investor, then the optimal policy from the
investor's point of view is in fact of pre-commitment type, since the
retail client does not herself trade in the underlying assets during the
lifetime of the contract.

}

\end{itemize}

As a result of these observations, we will consider problem (\ref{eq: Initial full objective})
in its general form. Our method builds on and formalizes the initial
results described in \cite{LiForsyth2019} where a shallow NN was
applied to a portfolio optimization problem with an objective that
is separable in the sense of DP. The contributions of this paper are
as follows:

\begin{enumerate}[label=(\roman*)]

\item We present a flexible neural network (NN) approach to solve
problems of the form (\ref{eq: Initial full objective}) that does
not rely on DP techniques. Our approach only requires the solution
of a \textit{single} optimization problem, and therefore avoids the
error amplification problems associated with the time-recursion in
DP-based techniques, including for example Reinforcement Learning
algorithms such as Q-learning (see for example \cite{ParkEtAl2020,DixonEtAl2020_BOOK,GaoEtAL2020})
or other algorithms relying at some level on the DP principle for
a time-stepping {\color{black} backward} recursion 
(see for example \cite{BachouchEtAl2021,VanHeeswijkPoutre2019}). 
{\color{black}
Perhaps the best descriptor of our approach is Policy Function Approximation,
in the taxonomy in \cite{Powell_2023}.
}

We make very limited assumptions regarding the underlying asset dynamics.
In particular, if underlying asset (and by extension wealth) dynamics
are specified, this can be incorporated as easily as the case where
the underlying dynamics can only be observed without any parametric
assumptions. 

The proposed solution methodology is \textit{parsimonious}, in that
the number of parameters does not scale with the number of rebalancing
events. This contrasts the proposed methodology with for example that
of \cite{HanWeinan2016,TsangWong2020,HureEtAl2021}, and ensures that
our approach remains feasible even for problems with very long time
horizons (for example the accumulation phase of a pension fund - see
\cite{ForsythVetzalWestmacott2019}) or with shorter time horizon
but with frequent trading/rebalancing (for example the trade execution
problems encountered in \cite{ForsythKennedyTseWindcliff2011}). The
solution approach only places very weak requirements on the form of
the investment objective in (\ref{eq: Initial full objective}). In
addition, we find that using relatively shallow neural networks (at
most two hidden layers) in our approach achieve very accurate results
in ground truth testing, thereby ensuring that the resulting NN in
the proposed approach is relatively easy and efficient to train since
it is less likely to be susceptible to problems of vanishing or exploding
gradients associated with very deep neural networks (\cite{GoodfellowEtAl_BOOK}). 

\item We analyze the convergence of the proposed approach, and show
that the theoretical optimal investment strategy of (\ref{eq: Initial full objective}),
provided it exists, can be attained by the numerical solution.

\item Finally, we present ground truth analyses confirming that the
proposed approach is very effective in solving portfolio optimization
problems of the form (\ref{eq: Initial full objective}). The results
illustrate numerically that if (\ref{eq: Initial full objective})
is not separable in the sense of DP, our approach recovers the correct
pre-commitment (time-inconsistent) optimal control, otherwise it recovers
the correct time-consistent optimal control. To emphasize that the
approach remains agnostic to the underlying data generation assumptions,
results are illustrated using (i) parametric models for asset dynamics,
(ii) stationary block bootstrap resampling of empirical asset returns,
and (ii) generative adversarial network (GAN)-generated synthetic
asset returns. 

\end{enumerate}

The remainder of the paper is organized as follows: Section \ref{sec:Problem-formulation}
presents the problem formulation, while Section \ref{sec: NN approach - overview}
provides a summary of the proposed approach, with additional technical
and practical details provided in Appendix \ref{sec:Appendix A:-Technical details and analytical results}
and Appendix \ref{sec: Appendix B - NN approach - practical considerations}.
Section \ref{sec:Convergence-analysis} presents the convergence analysis
of the proposed approach. Finally, Section \ref{sec:Numerical results}
provides ground truth analyses, with Section \ref{sec:Conclusion}
concluding the paper and discussing possible avenues for future research.

\section{Problem formulation\label{sec:Problem-formulation}}

We start by formulating portfolio optimization problems of the form
(\ref{eq: Initial full objective}) more rigorously in a setting of
discrete portfolio rebalancing and multiple investment constraints.
Throughout, we work on filtered probability space $\left(\Omega,\mathcal{F},\left\{ \mathcal{F}\left(t\right)\right\} _{t\in\left[t_{0},T\right]},\mathbb{P}\right)$
satisfying the usual conditions, with $\mathbb{P}$ denoting the actual
(and not the risk-neutral) probability measure.

Let $\mathcal{T}$ denote the set of $N_{rb}$ discrete portfolio
rebalancing times in $\left[t_{0}=0,T\right]$, which we assume to
be equally-spaced to lighten notation, 
\begin{eqnarray}
\mathcal{\mathcal{T}} & = & \left\{ \left.t_{m}=m\Delta t\right|m=0,...,N_{rb}-1\right\} ,\qquad\Delta t=T/N_{rb},\label{eq: Set of rebalancing times}
\end{eqnarray}
where we observe that the last rebalancing event occurs at time $t_{N_{rb}-1}=T-\Delta t$.

At each rebalancing time $t_{m}\in\mathcal{T}$, the investor observes
the $\mathcal{F}\left(t_{m}\right)$-measurable vector $\boldsymbol{X}\left(t_{m}\right)=\left(X_{i}\left(t_{m}\right):i=1,...,\eta_{X}\right)\in\mathbb{R}^{\eta_{X}}$,
which can be interpreted informally as the information taken into
account by the investor in reaching their asset allocation decision.
As a concrete example, we assume below that $\boldsymbol{X}\left(t_{m}\right)$
includes at least the wealth available for investment, an assumption
which can be rigorously justified using analytical results (see for example
\cite{PvSForsythLi2022_stochbm}).

Given $\boldsymbol{X}\left(t_{m}\right)$, the investor then rebalances
a portfolio of $N_{a}$ assets to new positions given by the vector
\begin{eqnarray}
\boldsymbol{p}_{m}\left(t_{m},\boldsymbol{X}\left(t_{m}\right)\right) & = & \left(p_{m,i}\left(t_{m},\boldsymbol{X}\left(t_{m}\right)\right):i=1,..,N_{a}\right)\in\mathbb{R}^{N_{a}},\label{eq: Control at t_m}
\end{eqnarray}
where $p_{m,i}\left(t_{m},\boldsymbol{X}\left(t_{m}\right)\right)$
denotes the fraction of wealth $W\left(t_{m}\right)$ invested in
the $i$th asset at rebalancing time $t_{m}$. The subscript ``$m$''
in the notation $\boldsymbol{p}_{m}$ emphasizes that in general,
each rebalancing time $t_{m}\in\mathcal{T}$ could be associated with
potentially a different function $\boldsymbol{p}_{m}:\mathbb{R}^{\eta_{X}+1}\rightarrow\mathbb{R}^{N_{a}}$,
while the subscript is removed below when we consider a single function
that is simply evaluated at different times, in which case we will
write $\boldsymbol{p}:\mathbb{R}^{\eta_{X}+1}\rightarrow\mathbb{R}^{N_{a}}$.

For purposes of concreteness, we assume that the investor is subject
to the constraints of (i) no short-selling and (ii) no leverage being
allowed, although the proposed methodology can be adjusted without
difficulty to treat different constraint formulations\footnote{As discussed in Section \ref{sec: NN approach - overview} and Appendix
\ref{sec:Appendix A:-Technical details and analytical results}, adjustments
to the output layer of the neural network may be required.}. For illustrative purposes, we therefore assume that each allocation
(\ref{eq: Control at t_m}) is only allowed to take values in $\left(N_{a}-1\right)$-dimensional
probability simplex $\mathcal{Z}$, 
\begin{eqnarray}
\mathcal{Z} & = & \left\{ \left(y_{1},...,y_{N_{a}}\right)\in\mathbb{R}^{N_{a}}:\sum_{i=1}^{N_{a}}y_{i}=1\textrm{ and }y_{i}\geq0\textrm{ for all }i=1,...,N_{a}\right\} .\label{eq: Set Z}
\end{eqnarray}

In this setting, an investment strategy or control $\mathcal{P}$
applicable to $\left[t_{0},T\right]$ is therefore of the form,
\begin{eqnarray}
\mathcal{P} & = & \left\{ \boldsymbol{p}_{m}\left(t_{m},\boldsymbol{X}\left(t_{m}\right)\right)=\left(p_{m,i}\left(t_{m},\boldsymbol{X}\left(t_{m}\right)\right):i=1,..,N_{a}\right):t_{m}\in\mathcal{T}\right\} ,\label{eq: Control - generic}
\end{eqnarray}
while the set of admissible controls $\mathcal{A}$ is defined by
\begin{eqnarray}
\mathcal{A} & = & \left\{ \left.\mathcal{P}=\left\{ \boldsymbol{p}_{m}\left(t_{m},\boldsymbol{X}\left(t_{m}\right)\right):t_{m}\in\mathcal{T}\right\} \right|\boldsymbol{p}_{m}\left(t_{m},\boldsymbol{X}\left(t_{m}\right)\right)\in\mathcal{Z},\forall t_{m}\in\mathcal{T}\right\} .\label{eq: Set A}
\end{eqnarray}

The randomness in the system is introduced through the returns of
the underlying assets. Specifically, let $R_{i}\left(t_{m}\right)$
denote the $\mathcal{F}\left(t_{m+1}\right)$-measurable return observed
on asset $i$ over the interval $\left[t_{m},t_{m+1}\right]$. We
make no assumptions regarding the underlying asset dynamics, but at
a minimum, we do require ($\mathbb{P}$) integrability, i.e. $\mathbb{E}\left|R_{i}\left(t_{m}\right)\right|<\infty$
for all $i\in\left\{ 1,...,N_{a}\right\} $ and $m\in\left\{ 0,...,N_{rb}-1\right\} $.
Informally, we will refer to the set 
\begin{eqnarray}
\boldsymbol{Y} & = & \left\{ \left(Y_{i}\left(t_{m}\right)\coloneqq1+R_{i}\left(t_{m}\right):i=1,...,N_{a}\right)^{\top}:m\in\left\{ 0,...,N_{rb}-1\right\} \right\} \label{eq: Path Y of asset returns}
\end{eqnarray}
as the \textit{path} of (joint) asset returns over the investment
time horizon $\left[t_{0},T\right]$. 

To clarify the subsequent notation, for any functional $\psi\left(t\right),t\in\left[t_{0},T\right]$
we will use the notation $\psi\left(t^{-}\right)$ and $\psi\left(t^{+}\right)$
as shorthand for the one-sided limits $\psi\left(t^{-}\right)=\lim_{\epsilon\downarrow0}\psi\left(t-\epsilon\right)$
and $\psi\left(t^{+}\right)=\lim_{\epsilon\downarrow0}\psi\left(t+\epsilon\right)$,
respectively. 

Given control $\mathcal{P}\in\mathcal{A}$, asset returns $\boldsymbol{Y}$,
initial wealth $W\left(t_{0}^{-}\right)\coloneqq w_{0}>0$ and a (non-random)
cash contribution schedule $\left\{ q\left(t_{m}\right):t_{m}\in\mathcal{T}\right\} $,
the portfolio wealth dynamics for $m=0,...,N_{rb}-1$ are given by
the general recursion
\begin{eqnarray}
W\left(t_{m+1}^{-};\mathcal{P},\boldsymbol{Y}\right) & = & \left[W\left(t_{m}^{-};\mathcal{P},\boldsymbol{Y}\right)+q\left(t_{m}\right)\right]\cdot\sum_{i=1}^{N_{a}}p_{m,i}\left(t_{m},\boldsymbol{X}\left(t_{m}\right)\right)\cdot Y_{i}\left(t_{m}\right).\label{eq: W dynamics - general formulation}
\end{eqnarray}
Note that we write $W\left(u\right)=W\left(u;\mathcal{P},\boldsymbol{Y}\right)$
to emphasize the dependence of wealth on the control $\mathcal{P}$
and the (random) path of asset returns in $\boldsymbol{Y}$ that relates
to the time period $t\in\left[t_{0},u\right]$. In other words, despite
using $\boldsymbol{Y}$ in the notation for simplicity, $W\left(u;\mathcal{P},\boldsymbol{Y}\right)$
is $\mathcal{F}\left(u\right)$-measurable. Finally, since there are
no contributions or rebalancing at maturity, we simply have $W\left(t_{N_{rb}}^{-}\right)=W\left(T^{-}\right)=W\left(T\right)=W\left(T;\mathcal{P},\boldsymbol{Y}\right)$. 

\subsection{Investment objectives\label{subsec:Investment-objectives}}

Given this general investment setting and wealth dynamics (\ref{eq: W dynamics - general formulation}),
our goal is to solve dynamic portfolio optimization problems of the
general form
\begin{equation}
\inf_{\xi\in\mathbb{R}}\inf_{\mathcal{P}\in\mathcal{A}}J\left(\mathcal{P},\xi;t_{0},w_{0}\right),\label{eq: Basic general problem}
\end{equation}
where, for some given continuous functions $F:\mathbb{R}^{2}\rightarrow\mathbb{R}$
and $G:\mathbb{R}^{3}\rightarrow\mathbb{R}$, the objective functional
$J$ is given by 
\begin{eqnarray}
J\left(\mathcal{P},\xi;t_{0},w_{0}\right) & = & E_{\mathcal{P}}^{t_{0},w_{0}}\left[F\left(W\left(T;\mathcal{P},\boldsymbol{Y}\right),\xi\right)+G\left(W\left(T;\mathcal{P},\boldsymbol{Y}\right),E_{\mathcal{P}}^{t_{0},w_{0}}\left[W\left(T;\mathcal{P},\boldsymbol{Y}\right)\right],w_{0},\xi\right)\begin{array}{c}
\!\!\!\!\!\!\!\!\\
\!\!\!\!\!\!\!\!
\end{array}\right].\label{eq: Objective functional J}
\end{eqnarray}
Note that the expectations $E^{t_{0},w_{0}}\left[\cdot\right]$ in
(\ref{eq: Objective functional J}) are taken over $\boldsymbol{Y}$,
given initial wealth $W\left(t_{0}^{-}\right)=w_{0}$, control $\mathcal{P}\in\mathcal{A}$
and auxiliary variable $\xi\in\mathbb{R}$. In addition to the assumption
of continuity of $F$ and $G$, we will make only the minimal assumptions
regarding the exact properties of $J$, including that $\xi\rightarrow F\left(\cdot,\xi\right)$
and $\xi\rightarrow G\left(\cdot,\cdot,w_{0},\xi\right)$ are convex
for all admissible controls $\mathcal{P}\in\mathcal{A}$, and the
standard assumption (see for example \cite{BjorkEtAl_BOOK_2021})
that an optimal control $\mathcal{P}^{\ast}\in\mathcal{A}$ exists.

For illustrative and ground truth analysis purposes, we consider a
number of examples of problems of the form (\ref{eq: Basic general problem})-(\ref{eq: Objective functional J}).

As noted in the Introduction, the simplest examples of problems of
the form (\ref{eq: Basic general problem}) arise in the special case
where $G\equiv0$ and there is no outer optimization problem over
$\xi$, such as in the case of standard utility maximization problems.
As concrete examples of this class of objective functions, we will
consider the quadratic target minimization (or quadratic utility)
described in for example \cite{Vigna_efficiency2014,ZhouLi2000},
\begin{eqnarray}
\left(DSQ\left(\gamma\right)\right): &  & \inf_{\mathcal{P}\in\mathcal{A}}\left\{ E^{t_{0},w_{0}}\left[\left(W\left(T;\mathcal{P},\boldsymbol{Y}\right)-\gamma\right)^{2}\right]\right\} ,\qquad\gamma>0,\label{eq: DSQ objective}
\end{eqnarray}
as well as the (closely-related) one-sided quadratic loss minimization
used in for example \cite{DangForsyth2016,LiForsyth2019}, 
{\color{black}
\begin{eqnarray}
\left(OSQ\left(\gamma\right)\right):\qquad &  & 
     \inf_{\mathcal{P}\in\mathcal{A}}
        \left\{ 
              E^{t_{0},w_{0}}
          \left[ 
              \left(
                \min \left\{ 
                          W\left(T;\mathcal{P},\boldsymbol{Y}\right) -\gamma, 0
                    \right\} 
              \right)^{2}
                - \epsilon \cdot W\left(T;\mathcal{P},\boldsymbol{Y}\right)
          \right]
      \right\} ,
                  \qquad  \gamma>0.\label{eq: OSQ objective}
\end{eqnarray}
}
The term $\epsilon W(\cdot)$ in equation (\ref{eq: OSQ objective}) ensures
that the problem remains well-posed\footnote{
{\color{black}  Although this is a
mathematical necessity (see e.g. \citep{LiForsyth2019}), in practice, if we use a very small value of $\epsilon$, then
this has no perceptible effect on the summary statistics. In the numerical results of Section \ref{sec:Numerical results}, we use $\epsilon = 10^{-6}$; see Appendix \ref{sec: Appendix B - NN approach - practical considerations} for a discussion.}} in the event that $W(t) \gg \gamma$. Observe that problems of the form (\ref{eq: DSQ objective})
or (\ref{eq: OSQ objective}) are separable in the sense of dynamic
programming, so that the resulting optimal control is therefore time-consistent.

As a classical example of the case where $G$ is nonlinear and the
objective functional (\ref{eq: Objective functional J}) is not separable
in the sense of dynamic programming, we consider the mean-variance
(MV) objective with scalarization or risk-aversion parameter $\rho>0$
(see for example \cite{BjorkKhapkoMurgoci2016}),
\begin{eqnarray}
\left(MV\left(\rho\right)\right): &  & \sup_{\mathcal{P}\in\mathcal{A}}\left\{ E^{t_{0},w_{0}}\left[W\left(T;\mathcal{P},\boldsymbol{Y}\right)\right]-\rho\cdot Var^{t_{0},w_{0}}\left[W\left(T;\mathcal{P},\boldsymbol{Y}\right)\right]\right\} ,\qquad\rho>0.\nonumber \\
 & = & \sup_{\mathcal{P}\in\mathcal{A}}E_{\mathcal{P}}^{t_{0},w_{0}}\left[W\left(T;\mathcal{P},\boldsymbol{Y}\right)-\rho\cdot\left(W\left(T;\mathcal{P},\boldsymbol{Y}\right)-E_{\mathcal{P}}^{t_{0},w_{0}}\left[W\left(T;\mathcal{P},\boldsymbol{Y}\right)\right]\right)^{2}\right].\label{eq: MV objective}
\end{eqnarray}
Note that issues relating to the time-inconsistency of the optimal
control of (\ref{eq: MV objective}) are discussed in Remark \ref{rem: Time-consistency}
below, along with the relationship between (\ref{eq: DSQ objective})
and (\ref{eq: MV objective}).

As an example of a problem involving both the inner and outer optimization
in (\ref{eq: Basic general problem}), we consider the Mean - Conditional
Value-at-Risk (or Mean-CVaR) problem, subsequently simply abbreviated
the MCV problem. First, as a measure of tail risk, the CVaR at level
$\alpha$, or $\alpha$-CVaR, is the expected value of the worst $\alpha$
percent of wealth outcomes, with typical values being $\alpha\in\left\{ 1\%,5\%\right\} $.
As in \cite{Forsyth2019CVaR}, a \textit{larger} value of the CVaR
is preferable to smaller value, since our definition of $\alpha$-CVaR
is formulated in terms of the terminal \textit{wealth}, not in terms
of the \textit{loss}. Informally, if the distribution of terminal
wealth $W\left(T\right)$ is continuous with PDF $\hat{\psi}$, then
the $\alpha$-CVaR in this case is given by
\begin{eqnarray}
\textrm{CVAR}_{\alpha} & = & \frac{1}{\alpha}\int_{-\infty}^{w_{\alpha}^{\ast}}W\left(T\right)\cdot\hat{\psi}\left(W\left(T\right)\right)\cdot dW\left(T\right),\label{eq: alpha-CVaR definition}
\end{eqnarray}
where $w_{\alpha}^{\ast}$ is the corresponding Value-at-Risk (VaR)
at level $\alpha$ defined such that $\int_{-\infty}^{w_{\alpha}^{\ast}}\hat{\psi}\left(W\left(T\right)\right)dW\left(T\right)=\alpha$.
We follow for example \cite{Forsyth2019CVaR} in defining the MCV
problem with scalarization parameter $\rho>0$ formally as
\begin{equation}
\sup_{\mathcal{P}\in\mathcal{A}}\left\{ \rho\cdot E^{t_{0},w_{0}}\left[W\left(T\right)\right]+\textrm{CVAR}_{\alpha}\right\} ,\qquad\rho>0.\label{eq: Mean-CVaR - objective 1}
\end{equation}
However, instead of (\ref{eq: alpha-CVaR definition}), we use the
definition of CVaR from \cite{RockafellarUryasev2002} that is applicable
to more general terminal wealth distributions, so that the MCV problem
definition used subsequently aligns with the definition given in \cite{MillerYang2017,Forsyth2019CVaR}),
\begin{eqnarray}
\left(MCV\left(\rho\right)\right): &  & \inf_{\xi\in\mathbb{R}}
   \inf_{\mathcal{P}\in\mathcal{A}}E^{t_{0},w_{0}}\left[-\rho\cdot W\left(T;\mathcal{P},\boldsymbol{Y}\right)
     - \xi+\frac{1}{\alpha}\max\left(\xi-W\left(T;\mathcal{P},\boldsymbol{Y}\right),0\right)\right],\quad\rho>0.
       \label{eq: MCV objective}
\end{eqnarray}

Finally, as noted in the Introduction, we apply the ideas underlying
the Sortino ratio where the variance of returns below the mean are
penalized, to formulate the following objective function for dynamic
trading,
\begin{equation}
\left(MSemiV\left(\rho\right)\right):\sup_{\mathcal{P}\in\mathcal{A}}
    \left\{ E_{\mathcal{P}}^{t_{0},w_{0}}\left[W\left(T;\mathcal{P},\boldsymbol{Y}\right)
      -\rho\cdot\left(\min\left\{ W\left(T;\mathcal{P},\boldsymbol{Y}\right)
      -E_{\mathcal{P}}^{t_{0},w_{0}}\left[W\left(T;\mathcal{P},\boldsymbol{Y}\right)\right],0\right\} \right)^{2}\right]\right\} 
     ,\label{eq: Sortino}
\end{equation}
which we refer to as the ``Mean- Semi-variance'' problem, with scalarization
(or risk-aversion) parameter $\rho>0$.\footnote{
{\color{black}
In continuous time, the unconstrained Mean-Semi-variance problem is ill-posed (\cite{Jin_2005}).
However, we will impose bounded leverage constraints, which is, of course,
a realistic condition.  This makes problem $\left(MSemiV\left(\rho\right)\right)$
well posed.
}
}

The following remark discusses issues relating to the possible time-inconsistency
of the optimal controls of (\ref{eq: MV objective}) , (\ref{eq: MCV objective})
and (\ref{eq: Sortino}).

{\color{black}
\noindent \begin{brem}\label{rem: Time-consistency}(Time-\textit{in}consistency
and induced time-consistency) Formally, the optimal controls for problems
$MV\left(\rho\right)$, $MCV\left(\rho\right)$ and $MSemiV\left(\rho\right)$
are not time-consistent, but instead are of the pre-commitment type
(see \cite{BasakChabakauri2010,BjorkMurgoci2014,Forsyth2019CVaR}).
However, in many cases, there exists an induced time consistent problem
formulation which has the same controls at time zero as the 
pre-commitment problem (see \cite{StrubLiCui2019,Forsyth2019CVaR,StrubEtAl2019_CVAR}).

As a concrete example of induced time-consistency, the
embedding result of \cite{LiNg2000,ZhouLi2000} establishes that the
$DSQ\left(\gamma\right)$ objective is the induced time-consistent
objective function associated with the $MV\left(\rho\right)$ problem,
which is a result that we exploit for ground truth analysis purposes
in Section \ref{sec:Numerical results}. 

Similarly, there is an induced time consistent objective
function for the  Mean-CVAR problem $MCV\left(\rho\right)$ in (\ref{eq: MCV objective}) - see  
\cite{Forsyth2019CVaR}.

Consequently, when we refer to a strategy as optimal, for either
the Mean-CVAR ($MCV\left(\rho\right)$) or Mean-Variance ($MV\left(\rho\right)$) problems, this will
be understood to mean that at any $t>t_{0}$, the investor follows
the associated induced time-consistent strategy rather than a pre-commitment
strategy. 

In the Mean-Semi-variance $\left(MSemiV\left(\rho\right)\right)$ case as per (\ref{eq: Sortino}), there is no obvious induced time consistent objective function.
In this case, we seek the pre-commitment policy.

For a detailed discussion of the many subtle issues involved
in the case of time-inconsistency, induced time-consistency, 
and equilibrium controls, see
for example \cite{Forsyth2019CVaR,StrubLiCui2019,StrubEtAl2019_CVAR,Vigna2017TailOptimality,Vigna2016TC,BjorkMurgoci2014,BjorkEtAl_BOOK_2021}.
\qed\end{brem}
}

\section{Neural network approach\label{sec: NN approach - overview}}

In this section, we provide an overview of the neural network (NN)
approach. Additional technical details and practical considerations
are discussed in Appendices \ref{sec:Appendix A:-Technical details and analytical results}
and \ref{sec: Appendix B - NN approach - practical considerations},
while the theoretical justification via convergence analysis will
be discussed in Section \ref{sec:Convergence-analysis} (and Appendix
\ref{sec: Appendix B - NN approach - practical considerations}).

Recall from (\ref{eq: Control at t_m}) that $\boldsymbol{X}\left(t_{m}\right)\in\mathbb{R}^{\eta_{X}}$
denotes the information taken into account in determining the investment
strategy (\ref{eq: Control at t_m}) at rebalancing time $t_{m}$.
Using the initial experimental results of \cite{LiForsyth2019} and
the analytical results of \cite{PvSForsythLi2022_stochbm} applied
to this setting, we assume that $\boldsymbol{X}\left(t_{m}\right)$
includes at least the wealth available for investment at time $t_{m}$,
so that 
\begin{eqnarray}
W\left(t_{m}^{+};\mathcal{P},\boldsymbol{Y}\right)\coloneqq W\left(t_{m}^{-};\mathcal{P},\boldsymbol{Y}\right)+q\left(t_{m}\right) & \in & \boldsymbol{X}\left(t_{m}\right),\qquad\forall t_{m}\in\mathcal{T}.\label{eq: Minimal form of X}
\end{eqnarray}
However, we emphasize that $\boldsymbol{X}\left(t_{m}\right)$ may
include additional variables in different settings. For example, in
non-Markovian settings or in the case of certain solution approaches
involving auxiliary variables, it is natural to ``lift the state
space'' by including additional quantities in $\boldsymbol{X}$ such
as relevant historical quantities related to market variables, or
other auxiliary variables - see for example \cite{Forsyth2019CVaR,MillerYang2017,TsangWong2020}. 

Let $\mathcal{D}_{\boldsymbol{\phi}}\subseteq\mathbb{R}^{\eta_{X}+1}$
be the set such that $\left(t_{m},\boldsymbol{X}\left(t_{m}\right)\right)\in\mathcal{D}_{\boldsymbol{\phi}}$
for all $t_{m}\in\mathcal{T}$. Let $C\left(\mathcal{D}_{\boldsymbol{\phi}},\mathcal{Z}\right)$
denote the set of all continuous functions from $\mathcal{D}_{\boldsymbol{\phi}}$
to $\mathcal{Z}\subset\mathbb{R}^{N_{a}}$ (see (\ref{eq: Set Z})).
We will use the notation $\boldsymbol{X}^{\ast}$ to denote the information
taken into account by the optimal control, since in the simplest case
implied by (\ref{eq: Minimal form of X}), we simply have $\boldsymbol{X}^{\ast}=W^{\ast}$,
where $W^{\ast}$ denotes the wealth under the optimal strategy. We
make the following assumption.
\begin{assumption}
\label{assu: Existence and Continuity of the control}(Properties
of the optimal control) Considering the general form of the problem
(\ref{eq: Basic general problem}), we assume that there exists an
optimal feedback control $\mathcal{P}^{\ast}\in\mathcal{A}$. Specifically,
we assume that at each rebalancing time $t_{m}\in\mathcal{T}$, the
time $t_{m}$ itself together with the information vector under optimal
behavior $\boldsymbol{X}^{\ast}\left(t_{m}\right)$, which includes
at least the wealth $W^{\ast}\left(t_{m}^{+}\right)$ available for
investment (see (\ref{eq: Minimal form of X})), are sufficient to
fully determine the optimal asset allocation $\boldsymbol{p}_{m}^{\ast}\left(t_{m},\boldsymbol{X}^{\ast}\left(t_{m}\right)\right)$. 

Furthermore, we assume that there exists a continuous function $\boldsymbol{p}^{\ast}\in C\left(\mathcal{D}_{\boldsymbol{\phi}},\mathcal{Z}\right)$
such that $\boldsymbol{p}_{m}^{\ast}\left(t_{m},\boldsymbol{X}^{\ast}\left(t_{m}\right)\right)=\boldsymbol{p}^{\ast}\left(t_{m},\boldsymbol{X}^{\ast}\left(t_{m}\right)\right)$
for all $t_{m}\in\mathcal{T}$, so that the optimal control $\mathcal{P}^{\ast}$
can be expressed as 
\begin{eqnarray}
\mathcal{P}^{\ast} & = & \left\{ \boldsymbol{p}^{\ast}\left(t_{m},\boldsymbol{X}^{\ast}\left(t_{m}\right)\right):\forall t_{m}\in\mathcal{T}\right\} ,\quad\textrm{where }\quad\boldsymbol{p}^{\ast}\in C\left(\mathcal{D}_{\boldsymbol{\phi}},\mathcal{Z}\right).\label{eq: Assumption CONTINUOUS control}
\end{eqnarray}
\end{assumption}

We make the following observations regarding Assumption \ref{assu: Existence and Continuity of the control}:
\begin{enumerate}
\item Continuity of $\boldsymbol{p}^{\ast}$ in space \textit{and time}:
While assuming the optimal control is a continuous map in the state
space $\boldsymbol{X}$ is fairly standard in the literature, especially
in the context of using neural network approximations (see for example
\cite{HanWeinan2016,TsangWong2020,HureEtAl2021}), the assumption
of continuity in time in (\ref{eq: Assumption CONTINUOUS control})
is therefore worth emphasizing. This assumption enforces the requirement
that in the limit of continuous rebalancing (i.e. when $\Delta t\rightarrow0$),
the control remains a continuous function of time, which is a practical
requirement for any reasonable investment policy. In particular, this
ensures that the asset allocation retains its smooth behavior as the
number of rebalancing events in $\left[0,T\right]$ is increased,
which we consider a fundamental requirement ensuring that the resulting
investment strategy is reasonable. In addition, in Section \ref{sec:Numerical results}
we demonstrate how the known theoretical solution to a problem assuming
continuous rebalancing ($\Delta t\rightarrow0$) can be approximated
very well using $\Delta t\gg0$ in the NN approach, even though the
resulting NN approximation is only truly optimal in the case of $\Delta t\gg0$.
\item The control is a \textit{single} function for \textit{all} rebalancing
times; note that the function $\boldsymbol{p}^{\ast}$ is not subscripted
by time. If the portfolio is rebalanced only at discrete time intervals,
the investment strategy can be found (as suggested in (\ref{eq: Assumption CONTINUOUS control}))
by evaluating this continuous function at discrete time intervals,
i.e. $\left(t_{m},\boldsymbol{X}\left(t_{m}\right)\right)\rightarrow\boldsymbol{p}^{\ast}\left(t_{m},\boldsymbol{X}\left(t_{m}\right)\right)=\left(p_{i}^{\ast}\left(t_{m},\boldsymbol{X}\left(t_{m}\right)\right):i=1,...,N_{a}\right)$,
for all $t_{m}\in\mathcal{T}$. We discuss below how we solve for
this (single) function directly, without resorting to dynamic programming,
which avoids not only the challenge with error propagation due to
value iteration over multiple timesteps, but also avoids solving for
the high-dimensional conditional expectation (also termed
the performance criteria by \cite{OksendalSulemBook}) if we are only interested in the
relatively low-dimensional optimal control (see for example \cite{PvSForsythLi2022_stochbm}). 
\end{enumerate}
These observations ultimately suggest the NN approach discussed below,
while the soundness of Assumption \ref{assu: Existence and Continuity of the control}
is experimentally confirmed in the ground truth results presented
in Section \ref{sec:Numerical results}.

Given Assumption \ref{assu: Existence and Continuity of the control}
and in particular (\ref{eq: Assumption CONTINUOUS control}), we therefore
limit our consideration to controls of the form
\begin{eqnarray}
\mathcal{P} & = & \left\{ \boldsymbol{p}\left(t_{m},\boldsymbol{X}\left(t_{m}\right)\right):\forall t_{m}\in\mathcal{T}\right\} ,\quad\textrm{for some }\quad\boldsymbol{p}\in C\left(\mathcal{D}_{\boldsymbol{\phi}},\mathcal{Z}\right).\label{eq: Assumption general control with continuity}
\end{eqnarray}
To simplify notation, we identify an arbitrary control $\mathcal{P}$
of the form (\ref{eq: Assumption general control with continuity})
with its associated function $\boldsymbol{p}=\left(p_{i}:i=1,...,N_{a}\right)\in C\left(\mathcal{D}_{\boldsymbol{\phi}},\mathcal{Z}\right)$,
so that the objective functional (\ref{eq: Objective functional J})
is written as
\begin{eqnarray}
J\left(\boldsymbol{p},\xi;t_{0},w_{0}\right) & = & E^{t_{0},w_{0}}\left[F\left(W\left(T;\boldsymbol{p},\boldsymbol{Y}\right),\xi\right)+G\left(W\left(T\right),E^{t_{0},w_{0}}\left[W\left(T;\boldsymbol{p},\boldsymbol{Y}\right)\right],w_{0},\xi\right)\begin{array}{c}
\!\!\!\!\!\!\!\!\\
\!\!\!\!\!\!\!\!
\end{array}\right].\label{eq: Objective functional with continuous control p}
\end{eqnarray}
In (\ref{eq: Objective functional with continuous control p}), $W\left(\cdot;\boldsymbol{p},\boldsymbol{Y}\right)$
denotes the controlled wealth process using a control of the form
(\ref{eq: Assumption general control with continuity}), so that the
wealth dynamics (\ref{eq: W dynamics - general formulation}) for
$t_{m}\in\mathcal{T}$ (recall $t_{N_{rb}}^{-}=T$) now becomes 
\begin{eqnarray}
W\left(t_{m+1}^{-};\boldsymbol{p},\boldsymbol{Y}\right) & = & \left[W\left(t_{m}^{-};\boldsymbol{p},\boldsymbol{Y}\right)+q\left(t_{m}\right)\right]\cdot\sum_{i=1}^{N_{a}}p_{i}\left(t_{m},\boldsymbol{X}\left(t_{m}\right)\right)\cdot Y_{i}\left(t_{m}\right).\label{eq: W dynamics - continuous control}
\end{eqnarray}

Therefore, using Assumption \ref{assu: Existence and Continuity of the control}
and (\ref{eq: Objective functional with continuous control p})-(\ref{eq: W dynamics - continuous control}),
problem (\ref{eq: Basic general problem}) is therefore expressed
as 

\begin{eqnarray}
V\left(t_{0},w_{0}\right) & = & \inf_{\xi\in\mathbb{R}}\inf_{\boldsymbol{p}\in C\left(\mathcal{D}_{\boldsymbol{\phi}},\mathcal{Z}\right)}J\left(\boldsymbol{p},\xi;t_{0},w_{0}\right).\label{eq: Original problem - CONTINUOUS control}
\end{eqnarray}

We now provide a brief overview of the proposed methodology to solve
problems of the form (\ref{eq: Original problem - CONTINUOUS control}).
This consists of two steps discussed in the following subsections,
namely (i) the NN approximation to the control, and (ii) computational
estimate of the optimal control. 

\subsection{Step 1: NN approximation to control\label{subsec:Step 1 NN approx}}

Let $n\in\mathbb{N}$. Consider a fully-connected, feedforward NN
$\boldsymbol{f}_{n}$ with parameter vector $\boldsymbol{\theta}_{n}\in\mathbb{R}^{\nu_{n}}$
and a fixed number $\mathcal{L}^{h}\geq1$ of hidden layers, where
each hidden layer contains $\hbar\left(n\right)\in\mathbb{N}$ nodes.
The NN has $\left(\eta_{X}+1\right)$ input nodes, mapping feature
(input) vectors of the form $\boldsymbol{\phi}\left(t\right)=\left(t,\boldsymbol{X}\left(t\right)\right)\in\mathcal{D}_{\boldsymbol{\phi}}$
to $N_{a}$ output nodes. For a more detailed introduction to neural
networks, see for example \cite{GoodfellowEtAl_BOOK}. 

Additional technical and practical details can be found in Appendices
\ref{sec:Appendix A:-Technical details and analytical results} and
\ref{sec: Appendix B - NN approach - practical considerations}. For
this discussion, we simply note that the index $n\in\mathbb{N}$ is
used for the purposes of the analytical results and convergence analysis,
where we fix a choice of $\mathcal{L}^{h}\geq1$ while $\hbar\left(n\right),n\in\mathbb{N}$
is assumed to be a monotonically increasing sequence such that $\lim_{n\rightarrow\infty}\hbar\left(n\right)=\infty$
(see Section \ref{sec:Convergence-analysis} and Appendix \ref{sec:Appendix A:-Technical details and analytical results}).
However, for practical implementation, a fixed value of $\hbar\left(n\right)\in\mathbb{N}$
is chosen (along with $\mathcal{L}^{h}\geq1$) to ensure the NN has
sufficient depth and complexity to solve the problem under consideration
(see Appendix \ref{sec: Appendix B - NN approach - practical considerations}).

Any NN considered is constructed such that $\boldsymbol{f}_{n}:\mathcal{D}_{\boldsymbol{\phi}}\rightarrow\mathcal{Z}\subset\mathbb{R}^{N_{a}}$.
In other words, the values of the $N_{a}$ outputs are automatically
in the set $\mathcal{Z}$ defined in (\ref{eq: Set Z}) for any $\boldsymbol{\phi}\in\mathcal{D}_{\boldsymbol{\phi}}$,
\begin{eqnarray}
\boldsymbol{f}_{n}\left(\boldsymbol{\phi}\left(t\right);\boldsymbol{\theta}_{n}\right) & = & \left(f_{n,i}\left(\boldsymbol{\phi}\left(t\right);\boldsymbol{\theta}_{n}\right):i=1,...,N_{a}\right)\in\mathcal{Z}.\label{eq: NN f_n as a vector}
\end{eqnarray}
As a result, the outputs of the NN $\boldsymbol{f}_{n}$ in (\ref{eq: NN f_n as a vector})
can be interpreted as portfolio weights satisfying the required investment
constraints. While a more detailed discussion of the structure can
be found in Assumption \ref{assu: Appendix NN structure assumptions}
in Appendix \ref{sec:Appendix A:-Technical details and analytical results},
we summarize some key aspects of the NN structure illustrated in Figure
\ref{fig: NN diagram}:
\begin{enumerate}
\item We emphasize that the rebalancing time is an \textit{input} into the
NN as per the feature vector $\boldsymbol{\phi}\left(t\right)=\left(t,\boldsymbol{X}\left(t\right)\right)\in\mathcal{D}_{\boldsymbol{\phi}}$,
so that the NN parameter vector $\boldsymbol{\theta}_{n}$ itself
does not depend on time. 
\item While we assume sigmoid activations for the hidden nodes for concreteness
and convenience (see Assumption \ref{assu: Appendix NN structure assumptions}),
any of the commonly-used activation functions can be implemented with
only minor modifications to the technical results presented in Section
\ref{sec:Convergence-analysis}.
\item Since we are illustrating the approach using the particular form of
$\mathcal{Z}$ in (\ref{eq: Set Z}) because of its wide applicability
(no short-selling and no leverage), a softmax output layer is used
to ensure the NN output remains in $\mathcal{Z}\subset\mathbb{R}^{N_{a}}$
for any $\boldsymbol{\phi}\left(t\right)$ (see (\ref{eq: NN f_n as a vector})).
However, different admissible control set formulations can be handled
without difficulty\footnote{For example, position limits and limited leverage can be introduced
using minor modifications to the output layer. Perhaps the only substantial
challenge is offered by unrealistic investment scenarios, such as
insisting that trading should continue in the event of bankruptcy,
in which case consideration should be given to the possibility of
wealth being identically zero or negative.}. 
\end{enumerate}
\noindent 
\begin{figure}[!tbh]
\noindent \begin{centering}
\includegraphics[scale=0.45]{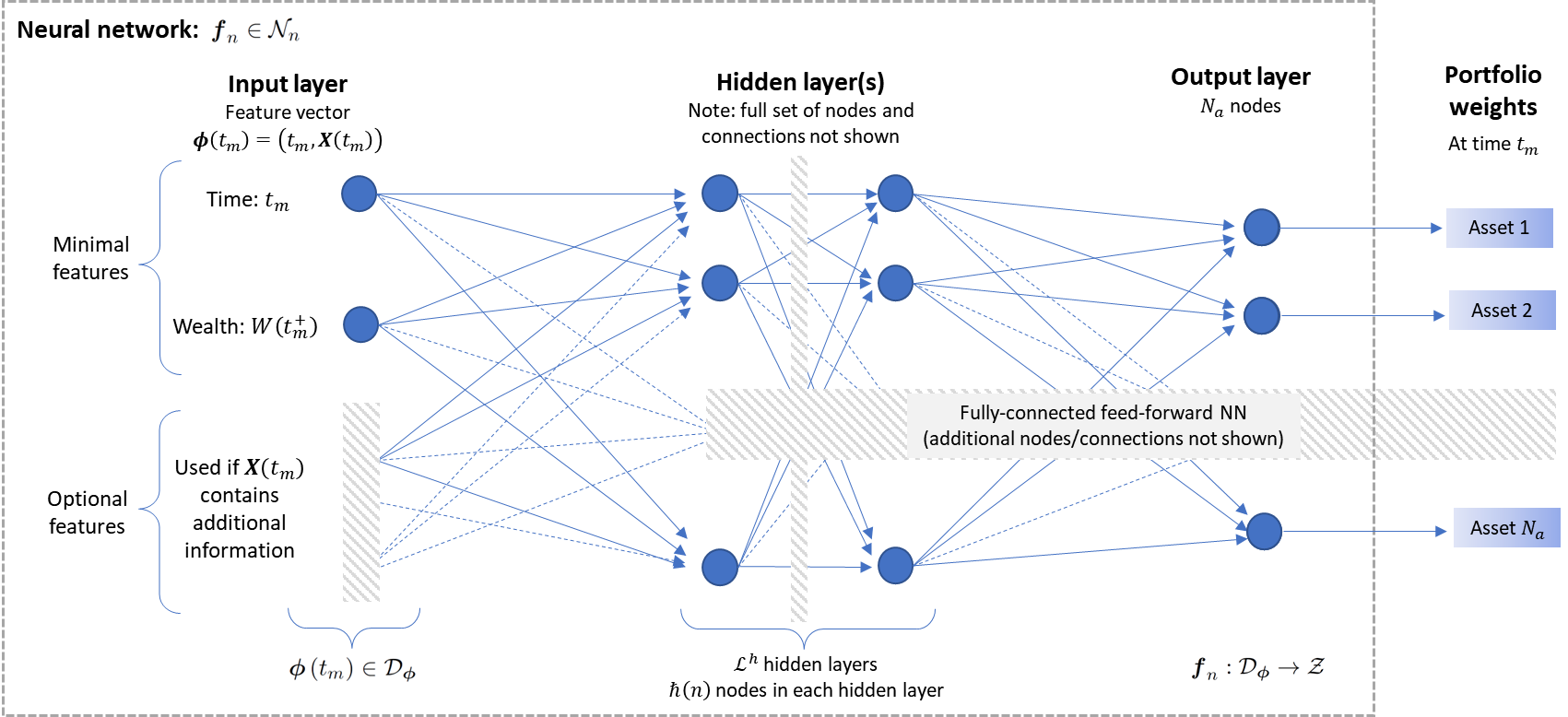} 
\par\end{centering}
\caption{Illustration of the structure of the NN as per (\ref{eq: NN f_n as a vector}).
Additional construction and implementation details can be found in
Appendix \ref{sec:Appendix A:-Technical details and analytical results}.
\label{fig: NN diagram}}
\end{figure}

For some fixed value of the index $n\in\mathbb{N}$, let $\mathcal{N}_{n}$
denote the set of NNs constructed in the same way as $\boldsymbol{f}_{n}$
for the fixed and given values of $\mathcal{L}^{h}$ and $\hbar\left(n\right)$.
While a formal definition of the set $\mathcal{N}_{n}$ is provided
in Appendix \ref{sec:Appendix A:-Technical details and analytical results},
here we simply note that each NN $\boldsymbol{f}_{n}\left(\cdot;\boldsymbol{\theta}_{n}\right)\in\mathcal{N}_{n}$
only differs in terms of the parameter values constituting its parameter
vector $\boldsymbol{\theta}_{n}$ (i.e. for a fixed $n$, each $\boldsymbol{f}_{n}\in\mathcal{N}_{n}$
has the same number of hidden layers $\mathcal{L}^{h}$, hidden nodes
$\hbar\left(n\right)$, activation functions etc.). 

Observing that $\mathcal{N}_{n}\subset C\left(\mathcal{D}_{\boldsymbol{\phi}},\mathcal{Z}\right)$,
our first step is to approximate (\ref{eq: Original problem - CONTINUOUS control})
by performing the optimization over $\boldsymbol{f}_{n}\left(\cdot;\boldsymbol{\theta}_{n}\right)\in\mathcal{N}_{n}$
instead. In other words, we approximate the control $\boldsymbol{p}$
by a neural network $\boldsymbol{f}_{n}\in\mathcal{N}_{n}$, 
\begin{eqnarray}
\boldsymbol{p}\left(\boldsymbol{\phi}\left(t\right)\right) & \simeq & \boldsymbol{f}_{n}\left(\boldsymbol{\phi}\left(t\right);\boldsymbol{\theta}_{n}\right),\qquad\textrm{where }\quad\boldsymbol{\phi}\left(t\right)=\left(t,\boldsymbol{X}\left(t\right)\right),\boldsymbol{p}\in C\left(\mathcal{D}_{\boldsymbol{\phi}},\mathcal{Z}\right),\boldsymbol{f}_{n}\in\mathcal{N}_{n}.\label{eq: NN as control}
\end{eqnarray}
We identify the NN $\boldsymbol{f}_{n}\left(\cdot;\boldsymbol{\theta}_{n}\right)$
with its parameter vector $\boldsymbol{\theta}_{n}$, so that the
(approximate) objective functional using approximation (\ref{eq: NN as control})
is written as
\begin{eqnarray}
J_{n}\left(\boldsymbol{\theta}_{n},\xi;t_{0},w_{0}\right) & = & E^{t_{0},w_{0}}\left[F\left(W\left(T;\boldsymbol{\theta}_{n},\boldsymbol{Y}\right),\xi\right)+G\left(W\left(T;\boldsymbol{\theta}_{n},\boldsymbol{Y}\right),E^{t_{0},w_{0}}\left[W\left(T;\boldsymbol{\theta}_{n},\boldsymbol{Y}\right)\right],w_{0},\xi\right)\begin{array}{c}
\!\!\!\!\!\!\!\!\\
\!\!\!\!\!\!\!\!
\end{array}\right].\label{eq: J_n}
\end{eqnarray}
Combining (\ref{eq: NN f_n as a vector}) and (\ref{eq: NN as control}),
the wealth dynamics (\ref{eq: W dynamics - continuous control}) is
expressed as
\begin{eqnarray}
W\left(t_{m+1}^{-};\boldsymbol{\theta}_{n},\boldsymbol{Y}\right) & = & \left[W\left(t_{m}^{-};\boldsymbol{\theta}_{n},\boldsymbol{Y}\right)+q\left(t_{m}\right)\right]\cdot\sum_{i=1}^{N_{a}}f_{n,i}\left(\boldsymbol{\phi}\left(t_{m}\right);\boldsymbol{\theta}_{n}\right)\cdot Y_{i}\left(t_{m}\right),\quad m=0,...,N_{rb}-1.\label{eq: W dynamics with NN as control}
\end{eqnarray}
Using (\ref{eq: NN as control}) and (\ref{eq: J_n}), for fixed and
given values of $\mathcal{L}^{h}$ and $\hbar\left(n\right)$, we
therefore approximate problem (\ref{eq: Original problem - CONTINUOUS control})
by
\begin{eqnarray}
V_{n}\left(t_{0},w_{0}\right) & = & \inf_{\xi\in\mathbb{R}}\inf_{\boldsymbol{f}_{n}\left(\cdot;\boldsymbol{\theta}_{n}\right)\in\mathcal{N}_{n}}J_{n}\left(\boldsymbol{\theta}_{n},\xi;t_{0},w_{0}\right)\label{eq: V_n in terms of N_n}\\
 & = & \inf_{\xi\in\mathbb{R}}\inf_{\boldsymbol{\theta}_{n}\in\mathbb{R}^{\nu_{n}}}J_{n}\left(\boldsymbol{\theta}_{n},\xi;t_{0},w_{0}\right)\nonumber \\
 & = & \inf_{\left(\boldsymbol{\theta}_{n},\xi\right)\in\mathbb{R}^{\nu_{n}+1}}J_{n}\left(\boldsymbol{\theta}_{n},\xi;t_{0},w_{0}\right).\label{eq: V_n in terms of theta_n}
\end{eqnarray}

We highlight that the optimization in (\ref{eq: V_n in terms of theta_n})
is unconstrained since, by construction, each NN $\boldsymbol{f}_{n}\left(\cdot;\boldsymbol{\theta}_{n}\right)\in\mathcal{N}_{n}$
always generates outputs in $\mathcal{Z}$. 

The notation $\left(\boldsymbol{\theta}_{n}^{\ast},\xi^{\ast}\right)$
and the associated NN $\boldsymbol{f}_{n}^{\ast}\left(\cdot;\boldsymbol{\theta}_{n}^{\ast}\right)\in\mathcal{N}_{n}$
are subsequently used to denote the values achieving the optimum in
(\ref{eq: V_n in terms of theta_n}) for given values of $\mathcal{L}^{h}$
and $\hbar\left(n\right)$. Note however that we do \textit{not} assume
that the optimal control $\boldsymbol{p}^{\ast}\in C\left(\mathcal{D}_{\boldsymbol{\phi}},\mathcal{Z}\right)$
satisfying Assumption \ref{assu: Existence and Continuity of the control}
is also a NN in $\mathcal{N}_{n}$, since by the universal approximation
results (see for example \cite{HornikEtAl1989}), we would expect
that the error in approximating (\ref{eq: Original problem - CONTINUOUS control})
by (\ref{eq: V_n in terms of theta_n}) can be made arbitrarily small
for sufficiently large $\hbar\left(n\right)$. These claims are rigorously
confirmed in Section \ref{sec:Convergence-analysis} below, where
we consider a sequence of NNs $\boldsymbol{f}_{n}\left(\cdot;\boldsymbol{\theta}_{n}\right)\in\mathcal{N}_{n}$
obtained by letting $\hbar\left(n\right)\rightarrow\infty$ as $n\rightarrow\infty$
(for any fixed value of $\mathcal{L}^{h}\geq1$). 

\subsection{Step 2 : Computational estimate of the optimal control\label{subsec:Step 2 Computational estimate}}

In order to solve the approximation (\ref{eq: V_n in terms of theta_n})
to problem (\ref{eq: Original problem - CONTINUOUS control}), we
require estimates of the expectations in (\ref{eq: J_n}). For computational
purposes, suppose we take as given a set $\mathcal{Y}_{n}\in\mathbb{R}^{n\times N_{a}\times N_{rb}}$,
consisting of $n\in\mathbb{N}$ independent realizations of the paths
of joint asset returns $\boldsymbol{Y}$, 
\begin{eqnarray}
\mathcal{Y}_{n} & = & \left\{ \boldsymbol{Y}^{\left(j\right)}:j=1,...,n\right\} .\label{eq: Y training}
\end{eqnarray}
We highlight that each entry $\boldsymbol{Y}^{\left(j\right)}\in\mathcal{Y}_{n}$
consists of a \textit{path} of joint asset returns (see (\ref{eq: Path Y of asset returns})),
and we assume that the paths are independent, we do \textit{not} assume
that the asset returns constituting each path are independent. In
particular, both cross-correlations and autocorrelation structures
within each path of returns are permitted.

Constructing the set $\mathcal{Y}_{n}$ in practical applications
is further discussed in Appendix \ref{sec: Appendix B - NN approach - practical considerations}.
In the numerical examples in Section \ref{sec:Numerical results},
we use examples where $\mathcal{Y}_{n}$ is generated using (i) Monte
Carlo simulation of parametric asset dynamics, (ii) stationary block
bootstrap resampling of empirical asset returns, (\cite{Cederburg_2022})  and (iii) generative
adversarial network (GAN)-generated synthetic asset returns (\cite{YoonTimeGAN2019}). While
we let $n\rightarrow\infty$ in (\ref{eq: Y training}) for convergence
analysis purposes, in practical applications (e.g. the results of
Section \ref{sec:Numerical results}) we simply choose $n$ sufficiently
large such that we are reasonably confident that reliable numerical
estimates of the expectations in (\ref{eq: J_n}) are obtained. 

Given a NN $\boldsymbol{f}_{n}\left(\cdot;\boldsymbol{\theta}_{n}\right)\in\mathcal{N}_{n}$
and set $\mathcal{Y}_{n}$, the wealth dynamics (\ref{eq: W dynamics with NN as control})
along path $\boldsymbol{Y}^{\left(j\right)}\in\mathcal{Y}_{n}$ is
given by
\begin{eqnarray}
W^{\left(j\right)}\left(t_{m+1}^{-};\boldsymbol{\theta}_{n},\mathcal{Y}_{n}\right) & = & \left[W^{\left(j\right)}\left(t_{m}^{-};\boldsymbol{\theta}_{n},\mathcal{Y}_{n}\right)+q\left(t_{m}\right)\right]\cdot\sum_{i=1}^{N_{a}}f_{n,i}\left(\boldsymbol{\phi}^{\left(j\right)}\left(t_{m}\right);\boldsymbol{\theta}_{n}\right)\cdot Y_{i}^{\left(j\right)}\left(t_{m}\right),\label{eq: W dynamics along path j}
\end{eqnarray}
for $m=0,...,N_{rb}-1$. We introduce the superscript $\left(j\right)$
to emphasize that the quantities are obtained along the $j$th entry
of (\ref{eq: Y training}). 

The computational estimate of $J_{n}\left(\boldsymbol{\theta}_{n},\xi;t_{0},w_{0}\right)$
in (\ref{eq: J_n}) is then given by 
\begin{eqnarray}
\hat{J}_{n}\left(\boldsymbol{\theta}_{n},\xi;t_{0},w_{0},\mathcal{Y}_{n}\right) & = & \frac{1}{n}\sum_{j=1}^{n}F\left(W^{\left(j\right)}\left(T;\boldsymbol{\theta}_{n},\mathcal{Y}_{n}\right),\xi\right)\nonumber \\
 &  & +\frac{1}{n}\sum_{j=1}^{n}G\left(W^{\left(j\right)}\left(T;\boldsymbol{\theta}_{n},\mathcal{Y}_{n}\right),\frac{1}{n}\sum_{k=1}^{n}W^{\left(k\right)}\left(T;\boldsymbol{\theta}_{n},\mathcal{Y}_{n}\right),w_{0},\xi\right),\label{eq: J_n_HAT approx}
\end{eqnarray}
so that we approximate problem (\ref{eq: V_n in terms of theta_n})
by 
\begin{eqnarray}
\hat{V}_{n}\left(t_{0},w_{0};\mathcal{Y}_{n}\right) & = & \inf_{\left(\boldsymbol{\theta}_{n},\xi\right)\in\mathbb{R}^{\nu_{n}+1}}\hat{J}_{n}\left(\boldsymbol{\theta}_{n},\xi;t_{0},w_{0},\mathcal{Y}_{n}\right).\label{eq: V_n_HAT approx}
\end{eqnarray}
The numerical solution of (\ref{eq: V_n_HAT approx}) can then proceed
using standard (stochastic) gradient descent techniques (see Appendix
\ref{sec: Appendix B - NN approach - practical considerations}).
For subsequent reference, let $\left(\hat{\boldsymbol{\theta}}_{n}^{\ast},\hat{\xi}_{n}^{\ast}\right)$
denote the optimal point in (\ref{eq: V_n_HAT approx}) relative to
the training data set $\mathcal{Y}_{n}$ in (\ref{eq: V_n_HAT approx}).

In the case of sufficiently large datasets (\ref{eq: Y training}),
in other words as $n\rightarrow\infty$, we would expect that the
error in approximating (\ref{eq: V_n in terms of theta_n}) by (\ref{eq: V_n_HAT approx})
can be made arbitrarily small. However, as noted above, as $n\rightarrow\infty$
and the number of hidden nodes $\hbar\left(n\right)\rightarrow\infty$
(for any fixed $\mathcal{L}^{h}\geq1$), (\ref{eq: V_n in terms of theta_n})
is also expected to approximate (\ref{eq: Original problem - CONTINUOUS control})
more accurately. As a result, we obtain the necessary intuition for
establishing the convergence of (\ref{eq: V_n_HAT approx}) to (\ref{eq: Original problem - CONTINUOUS control})
under suitable conditions, which is indeed confirmed in the results
of Section \ref{sec:Convergence-analysis}.

Note that since $\mathcal{Y}_{n}$ is used in (\ref{eq: V_n_HAT approx})
to obtain the optimal NN parameter vector $\hat{\boldsymbol{\theta}}_{n}^{\ast}$,
it is usually referred to as the NN ``training'' dataset (see for
example \cite{GoodfellowEtAl_BOOK}). Naturally, we can also construct
a ``testing'' dataset $\mathcal{Y}_{\hat{n}}^{test}$, that is of
a similar structure as (\ref{eq: Y training}), but typically based
on a different implied distribution of $\boldsymbol{Y}$ as a result
of different data generation assumptions. For example, $\mathcal{Y}_{\hat{n}}^{test}$
can be obtained using a different time period of historical data for
its construction, or different process parameters if there are parametric
asset dynamics specified. The resulting approximation $\boldsymbol{f}_{n}^{\ast}\left(\cdot;\hat{\boldsymbol{\theta}}_{n}^{\ast}\right)\in\mathcal{N}_{n}$
to the optimal control $\boldsymbol{p}^{\ast}\in C\left(\mathcal{D}_{\boldsymbol{\phi}},\mathcal{Z}\right)$
obtained using the training dataset in (\ref{eq: V_n_HAT approx})
can then be implemented on the testing dataset for out-of-sample testing
or scenario analysis. This is discussed in more detail in Appendix
\ref{sec: Appendix B - NN approach - practical considerations}. 

\noindent \begin{brem}\label{rem: Wealth path-dependent objectives}(Extension
to wealth path-dependent objectives) As noted in the Introduction,
the NN approach as well as the convergence analysis of Section \ref{sec:Convergence-analysis}
can be extended to objective functions that depend on the entire wealth
path $\left\{ W\left(t\right):t\in\mathcal{T}\right\} $ instead of
just the terminal wealth $W\left(T\right)$. This is achieved by simply
modifying (\ref{eq: J_n_HAT approx}) appropriately and ensuring the
wealth is assessed at the desired intervals using (\ref{eq: W dynamics along path j}).
\qed\end{brem}

\subsection{Advantages of the NN approach\label{subsec:Advantages-of-this approach}}

The following observations highlight some advantages of the proposed
NN approach:

\begin{enumerate}[label=(\roman*)]

\item The approach does not rely on dynamic programming (DP) methods
for the solution of problem (\ref{eq: V_n_HAT approx}), and therefore
does not require value iteration or backward time stepping. In particular,
we observe that due to the explicit time-dependence of the NN feature
vector, the optimization problem (\ref{eq: V_n_HAT approx}) itself
only indirectly depends on the number of rebalancing events, while
time recursion is limited to the (computationally inexpensive) wealth
dynamics (\ref{eq: W dynamics along path j}). As result, problems
relating to the error amplification associated with DP methods (\cite{WangFoster2020,TsangWong2020,LiEtAl2020})
are avoided, and only a single optimization problem that is independent
of the number of portfolio rebalancing events is solved, in contrast
to DP-based methods (see for example \cite{BachouchEtAl2021,VanHeeswijkPoutre2019}). 

Not relying on DP techniques also makes the approach significantly
more flexible, in that it can directly handle objective functions
that are not separable in the sense of DP, without requiring theoretical
results such as embedding in the case of MV optimization (see for
example \cite{ZhouLi2000,LiNg2000}). As an example of this, we present
the solution of the mean - semi-variance problem (\ref{eq: Sortino})
in Section \ref{sec:Numerical results}. 

\item The proposed methodology is parsimonious, in the sense that
the NN parameter vector remains independent of number of rebalancing
events. Specifically, we observe that the NN parameter vector $\boldsymbol{\theta}_{n}\in\mathbb{R}^{\nu_{n}}$
of the NN does \textit{not} depend on the rebalancing time $t_{m}\in\mathcal{T}$
or on the sample path $j$. This contrasts our approach with the approaches
of for example \cite{HanWeinan2016,TsangWong2020},\footnote{
{\color{black}
\cite{TsangWong2020} use a stacked NN approach, with a different NN at each
rebalancing time. 
}
}
where the number
of parameters scale with the number of rebalancing events. As a result,
the NN approach presented here can lead to potentially significant
computational advantages in the cases of (i) long investment time
horizons or (ii) short trading time horizons with a frequent number
of portfolio rebalancing events. 

A natural question might be whether the NNs in the proposed approach
are required to be very deep, thus potentially exposing the training
of the NN in (\ref{eq: V_n_HAT approx}) to problem of vanishing or
exploding gradients (see for example \cite{GoodfellowEtAl_BOOK}).
However, the ground truth results presented in Section \ref{sec:Numerical results}
demonstrate that we obtain very accurate results with relatively shallow
NNs (at most two hidden layers). We suspect this might be due to the
optimal control being relatively low-dimensional compared to the high-dimensional
objective functionals in portfolio optimization problems with discrete
rebalancing (see \cite{PvSForsythLi2022_stochbm} for a rigorous analysis),
while in this NN approach approach the optimal control is obtained
directly without requiring the solution of the (high-dimensional)
objective functional at rebalancing times.

\end{enumerate}

Note that these advantages also contrast the NN approach with Reinforcement
Learning-based algorithms to solve portfolio optimization problems,
as the following remark discusses.

\noindent \begin{brem}\label{rem: Contrast to RL}(Contrast of NN
approach to Reinforcement Learning). Reinforcement learning (RL) algorithms
(for example, Q-learning) relies fundamentally on the DP principle
for the numerical solution of the portfolio optimization problem (see
for example \cite{GaoEtAL2020,LucarelliEtAl2020,ParkEtAl2020}). This
requires, at each value iteration step, the approximation of a (high-dimensional)
conditional expectation. As a result, RL is associated with standard
DP-related concerns related to error amplification and the curse of
dimensionality discussed above, and also cannot solve general problems
of the form (\ref{eq: Initial full objective}) without relying on
for example an embedding approach to obtain an associated problem
that can be solved using DP methods.\qed\end{brem}

\section{Convergence analysis\label{sec:Convergence-analysis}}

In this section, we present the theoretical justification of the proposed
NN approach as outlined in Section \ref{sec: NN approach - overview}.
We confirm that the numerical solution of (\ref{eq: V_n_HAT approx})
can be used to approximate the theoretical solution of (\ref{eq: Original problem - CONTINUOUS control})
arbitrarily well (in probability) under suitable conditions. This
section only summarizes the key convergence results which are among
the main contributions of this paper, while additional technical details
and proofs are provided in Appendix \ref{sec:Appendix A:-Technical details and analytical results}.

We start with Theorem \ref{thm: Validity of NN approximation}, which confirms
the validity of Step 1 (Subsection \ref{subsec:Step 1 NN approx}),
namely using a NN $\boldsymbol{f}_{n}\left(\cdot;\boldsymbol{\theta}_{n}\right)\in\mathcal{N}_{n}$
to approximate the control. Note that Theorem \ref{thm: Validity of NN approximation}
relies on two assumptions, presented in Appendix \ref{subsec: Appendix assumptions for convergence}:
We emphasize that Assumption \ref{assu: Convergence assumptions - convenience}
is purely made for purposes of convenience, since its requirements
can easily be relaxed with only minor modifications to the proofs
(as discussed in Remark \ref{rem: Appendix Relaxing Convergence assumptions - convenience}),
but at the cost of significant notational complexity and no additional
insights. In contrast, Assumption \ref{assu: Appendix Convergence assumptions - critical}
is critical to establish the result of Theorem \ref{thm: Validity of NN approximation},
and requires that the optimal investment strategy (or control) satisfies
Assumption \ref{assu: Existence and Continuity of the control}, places
some basic requirements on $F$ and $G$, and assumes that the sequence
of NNs $\left\{ \boldsymbol{f}_{n}\left(\cdot;\boldsymbol{\theta}_{n}\right),n\in\mathbb{N}\right\} $
is constructed such that the number of nodes in each hidden layer
$\hbar\left(n\right)\rightarrow\infty$ as $n\rightarrow\infty$ (no
assumptions are yet required regarding the exact form of $n\rightarrow\hbar\left(n\right)$). 
\begin{thm}
\label{thm: Validity of NN approximation}(Validity of NN approximation)
We assume that Assumption \ref{assu: Appendix Convergence assumptions - critical}
holds, and for ease of exposition, we also assume that Assumption
\ref{assu: Convergence assumptions - convenience} holds. Then the
NN approximation to the control in (\ref{eq: NN as control}) is valid,
in the sense that $V\left(t_{0},w_{0}\right)$ in (\ref{eq: Original problem - CONTINUOUS control})
can be approximated arbitrarily well by $V_{n}\left(t_{0},w_{0}\right)$
in (\ref{eq: V_n in terms of theta_n}) for sufficiently large $n$,
since
\begin{eqnarray}
\lim_{n\rightarrow\infty}\left|V_{n}\left(t_{0},w_{0}\right)-V\left(t_{0},w_{0}\right)\right| & = & \lim_{n\rightarrow\infty}\left|\inf_{\left(\boldsymbol{\theta}_{n},\xi\right)\in\mathbb{R}^{\nu_{n}+1}}J_{n}\left(\boldsymbol{\theta}_{n},\xi;t_{0},w_{0}\right)-\inf_{\xi\in\mathbb{R}}\inf_{\boldsymbol{p}\in C\left(\mathcal{D}_{\boldsymbol{\phi}},\mathcal{Z}\right)}J\left(\boldsymbol{p},\xi;t_{0},w_{0}\right)\right|\nonumber \\
 & = & 0.\label{eq: Result thm validity of NN approx}
\end{eqnarray}
\end{thm}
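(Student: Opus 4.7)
The plan is to prove the two inequalities $\limsup_{n\to\infty} V_n(t_0,w_0) \le V(t_0,w_0)$ and $\liminf_{n\to\infty} V_n(t_0,w_0) \ge V(t_0,w_0)$. The second inequality is immediate: since every NN $\boldsymbol{f}_n(\cdot;\boldsymbol{\theta}_n)\in\mathcal{N}_n$ is, by construction of the softmax output layer, a continuous map in $C(\mathcal{D}_{\boldsymbol{\phi}},\mathcal{Z})$, we have $\mathcal{N}_n \subset C(\mathcal{D}_{\boldsymbol{\phi}},\mathcal{Z})$, so the infimum in \eqref{eq: V_n in terms of theta_n} is taken over a subset of the one in \eqref{eq: Original problem - CONTINUOUS control} and hence $V_n(t_0,w_0) \ge V(t_0,w_0)$ for every $n$.

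For the first inequality, I would start by invoking Assumption \ref{assu: Existence and Continuity of the control} to fix an optimal pair $(\boldsymbol{p}^{\ast},\xi^{\ast})$ achieving $V(t_0,w_0)$ with $\boldsymbol{p}^{\ast}\in C(\mathcal{D}_{\boldsymbol{\phi}},\mathcal{Z})$. The key tool is then a universal approximation theorem (as in \cite{HornikEtAl1989}) in the form appropriate to a sigmoid network with softmax output layer: since $\hbar(n)\to\infty$, for any compact $K\subset\mathcal{D}_{\boldsymbol{\phi}}$ there exists a sequence $\boldsymbol{f}_n(\cdot;\tilde{\boldsymbol{\theta}}_n)\in\mathcal{N}_n$ with $\sup_{\boldsymbol{\phi}\in K}\lVert \boldsymbol{f}_n(\boldsymbol{\phi};\tilde{\boldsymbol{\theta}}_n)-\boldsymbol{p}^{\ast}(\boldsymbol{\phi})\rVert \to 0$. (The simplex-valued range is preserved because the softmax acts on pre-activations that can themselves approximate any continuous real-valued target, and standard compositional arguments give convergence of the simplex-valued output.)

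Next I would propagate this uniform-on-compacts convergence through the wealth recursion \eqref{eq: W dynamics with NN as control}. Proceeding by induction on $m=0,\ldots,N_{rb}-1$ with the integrability assumption $\mathbb{E}|R_i(t_m)|<\infty$ and boundedness of the NN output in the simplex $\mathcal{Z}$, one shows that $W(t_{m+1}^{-};\tilde{\boldsymbol{\theta}}_n,\boldsymbol{Y})\to W(t_{m+1}^{-};\boldsymbol{p}^{\ast},\boldsymbol{Y})$ almost surely along any fixed realization of $\boldsymbol{Y}$, and hence $W(T;\tilde{\boldsymbol{\theta}}_n,\boldsymbol{Y})\to W(T;\boldsymbol{p}^{\ast},\boldsymbol{Y})$ a.s. Continuity of $F$ and $G$ (Assumption \ref{assu: Appendix Convergence assumptions - critical}) then gives pointwise convergence of the integrands in \eqref{eq: J_n} evaluated at $(\tilde{\boldsymbol{\theta}}_n,\xi^{\ast})$ to those in \eqref{eq: Objective functional with continuous control p} evaluated at $(\boldsymbol{p}^{\ast},\xi^{\ast})$. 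Applying the dominated convergence theorem — using the integrable envelopes on $F$ and $G$ provided by Assumption \ref{assu: Appendix Convergence assumptions - critical}, together with the fact that the simplex constraint makes the one-step wealth multipliers bounded by $\max_i Y_i(t_m)$ whose products are integrable — yields $J_n(\tilde{\boldsymbol{\theta}}_n,\xi^{\ast};t_0,w_0)\to J(\boldsymbol{p}^{\ast},\xi^{\ast};t_0,w_0) = V(t_0,w_0)$. Since $V_n(t_0,w_0)\le J_n(\tilde{\boldsymbol{\theta}}_n,\xi^{\ast};t_0,w_0)$, this gives $\limsup_{n\to\infty}V_n(t_0,w_0) \le V(t_0,w_0)$, and combining the two inequalities proves \eqref{eq: Result thm validity of NN approx}.

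The main obstacle is the passage from uniform approximation of the control on compact subsets of $\mathcal{D}_{\boldsymbol{\phi}}$ to convergence of the (unconditional) expectation \eqref{eq: J_n}, because wealth and hence the feature vector $\boldsymbol{\phi}(t_m)=(t_m,\boldsymbol{X}(t_m))$ are generally unbounded. I expect this to be handled by a standard truncation argument: for $M>0$ split the expectation on the event $\{\sup_{m}|W(t_m^+;\boldsymbol{p}^{\ast},\boldsymbol{Y})|\le M\}$, where the feature vectors lie in a fixed compact set and uniform NN approximation applies, and its complement, whose probability can be made small by Markov's inequality together with the integrability envelopes on $F$, $G$; the dominated convergence theorem controls the tail contribution uniformly in $n$ because the same domination (by $\max_i Y_i$ products and the hypothesized envelopes) holds for every $\boldsymbol{f}_n$ in the sequence. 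Once this is set up, letting first $n\to\infty$ and then $M\to\infty$ closes the argument.
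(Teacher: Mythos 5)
Your proposal is correct and follows essentially the same route as the paper: the lower bound from $\mathcal{N}_{n}\subset C\left(\mathcal{D}_{\boldsymbol{\phi}},\mathcal{Z}\right)$, and the upper bound by uniformly approximating $\boldsymbol{p}^{\ast}$ with a sequence of networks (universal approximation composed with the softmax), propagating that approximation through the wealth recursion by induction, and passing to the limit in $J_{n}$ at $\left(\tilde{\boldsymbol{\theta}}_{n},\xi^{\ast}\right)$ via continuity of $F,G$ and dominated convergence. The only remark worth making is that your closing truncation argument is unnecessary under the stated hypotheses: Assumption \ref{assu: Convergence assumptions - convenience} bounds the wealth almost surely, so $\mathcal{D}_{\boldsymbol{\phi}}=\left[t_{0},T\right]\times\left[0,w_{max}\right]$ is already compact and the domination needed for the dominated convergence theorem comes from continuity of $F$ and $G$ on a compact range rather than from integrable envelopes, which Assumption \ref{assu: Appendix Convergence assumptions - critical} does not in fact supply.
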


\begin{proof}
See Appendix \ref{subsec:Appendix Proof-of-Theorem NN approx}.
\end{proof}
Having justified Step 1 of the approach, Theorem \ref{thm: Validity of NN computational estimate}
now confirms the validity of Step 2 of the NN approach (see Subsection
\ref{subsec:Step 2 Computational estimate}), namely using the computational
estimate $\boldsymbol{f}_{n}^{\ast}\left(\cdot;\hat{\boldsymbol{\theta}}_{n}^{\ast}\right)\in\mathcal{N}_{n}$
from (\ref{eq: V_n_HAT approx}) as an approximation of the true optimal
control $\boldsymbol{p}^{\ast}\in C\left(\mathcal{D}_{\boldsymbol{\phi}},\mathcal{Z}\right)$.
Note that in addition to the assumptions of Theorem \ref{thm: Validity of NN approximation},
Theorem \ref{thm: Validity of NN computational estimate} also requires
Assumption \ref{assu: Appendix Convergence assumptions - computational},
which by necessity includes computational considerations such as the
structure of the training dataset $\mathcal{Y}_{n}$, the rate of
divergence of the number of hidden nodes $\hbar\left(n\right)\rightarrow\infty$
as $n\rightarrow\infty$, and assumptions regarding the optimization
algorithm used in solving problem (\ref{eq: V_n_HAT approx}). 
\begin{thm}
\label{thm: Validity of NN computational estimate}(Validity of computational
estimate) We assume that Assumption \ref{assu: Appendix Convergence assumptions - critical},
Assumption \ref{assu: Convergence assumptions - convenience} and
Assumption \ref{assu: Appendix Convergence assumptions - computational}
hold. Then the computational estimate to the optimal control (\ref{eq: Assumption CONTINUOUS control})
obtained using (\ref{eq: NN as control}) and (\ref{eq: V_n_HAT approx})
is valid, in the sense that the value function $V\left(t_{0},w_{0}\right)$
in (\ref{eq: Original problem - CONTINUOUS control}) can be approximated
arbitrarily well in probability by $\hat{V}_{n}\left(t_{0},w_{0};\mathcal{Y}_{n}\right)$
in (\ref{eq: V_n_HAT approx}) for sufficiently large $n$, since 
\end{thm}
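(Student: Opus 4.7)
The natural plan is to split the error via the triangle inequality,
\[
\bigl|\hat{V}_{n}(t_{0},w_{0};\mathcal{Y}_{n})-V(t_{0},w_{0})\bigr|\;\leq\;\bigl|\hat{V}_{n}(t_{0},w_{0};\mathcal{Y}_{n})-V_{n}(t_{0},w_{0})\bigr|\;+\;\bigl|V_{n}(t_{0},w_{0})-V(t_{0},w_{0})\bigr|,
\]
and to handle the two pieces separately. The second (deterministic) term tends to zero by Theorem \ref{thm: Validity of NN approximation}. The real work is in showing that the first (random) term vanishes in probability, and for this I would prove a uniform stochastic convergence of $\hat{J}_{n}$ to $J_{n}$ over the parameter space, since $|\inf \hat{J}_n - \inf J_n| \le \sup_{(\boldsymbol{\theta}_n,\xi)}|\hat{J}_n-J_n|$ once the infima are taken over the same (effectively compact) set where the optima are attained.

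To prove the uniform convergence, I would first decompose the integrand to isolate the plug-in empirical mean appearing inside $G$. Writing $\mu_n(\boldsymbol{\theta}_n) = \tfrac{1}{n}\sum_k W^{(k)}(T;\boldsymbol{\theta}_n,\mathcal{Y}_n)$ and $\mu(\boldsymbol{\theta}_n) = E^{t_0,w_0}[W(T;\boldsymbol{\theta}_n,\boldsymbol{Y})]$, one splits
\[
\hat{J}_{n}-J_{n}\;=\;\underbrace{\tfrac{1}{n}\!\sum_{j}\!F(W^{(j)},\xi)-E[F(W,\xi)]}_{(\mathrm{I})}\;+\;\underbrace{\tfrac{1}{n}\!\sum_{j}\!\bigl[G(W^{(j)},\mu_{n},w_0,\xi)-G(W^{(j)},\mu,w_0,\xi)\bigr]}_{(\mathrm{II})}\;+\;\underbrace{\tfrac{1}{n}\!\sum_{j}\!G(W^{(j)},\mu,w_0,\xi)-E[G(W,\mu,w_0,\xi)]}_{(\mathrm{III})}.
\]
Terms (I) and (III) are handled by a uniform strong law of large numbers on the class of functions $\{(\boldsymbol{\theta}_n,\xi)\mapsto F(W(T;\boldsymbol{\theta}_n,\cdot),\xi)\}$ and its analogue for $G$, using that paths $\boldsymbol{Y}^{(j)}$ in $\mathcal{Y}_{n}$ are i.i.d. (see \eqref{eq: Y training}). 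Term (II) is controlled by the continuity (locally Lipschitz) behaviour of $G$ in its second slot together with a uniform version of $\mu_n(\boldsymbol{\theta}_n)\to\mu(\boldsymbol{\theta}_n)$.

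For the uniform LLN, the key inputs would be: (i) compactness of the effective search region, which I would obtain either by restricting $(\boldsymbol{\theta}_{n},\xi)$ to a ball whose radius may be chosen to contain the optimum (via Assumption \ref{assu: Appendix Convergence assumptions - computational} and coercivity arguments coming from the well-posedness term $\epsilon W$ in problems such as \eqref{eq: OSQ objective}), (ii) equicontinuity on compacts of the map $\boldsymbol{\theta}_{n}\mapsto W(T;\boldsymbol{\theta}_{n},\boldsymbol{Y})$, which follows from the recursion \eqref{eq: W dynamics with NN as control} and the fact that NN outputs lie in the bounded simplex $\mathcal{Z}$, and (iii) a covering-number/entropy control of the NN class $\mathcal{N}_{n}$ whose dimension $\nu_n$ grows with $n$. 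This is where Assumption \ref{assu: Appendix Convergence assumptions - computational} enters crucially: it must impose that $\hbar(n)\to\infty$ slowly enough relative to $n$ (some condition of the form $\nu_n\log n/n\to 0$) so that covering numbers times $1/n$ still vanish, and it must also ensure that the stochastic optimizer actually attains the infimum $\hat{V}_n$ up to a vanishing tolerance.

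The hard step is (iii): proving uniform convergence over a parameter class whose effective dimension diverges with the sample size. This is the classical difficulty of sieve M-estimation. I would address it by combining a Glivenko–Cantelli/uniform LLN for locally-Lipschitz function classes (using Lipschitz continuity of $F,G$ on the bounded wealth support that arises from NNs outputting in $\mathcal{Z}$, no leverage, and integrability of returns) with the growth-rate restriction on $\hbar(n)$ from Assumption \ref{assu: Appendix Convergence assumptions - computational}. Once $\sup|\hat{J}_n-J_n|\to 0$ in probability is established, combining with Theorem \ref{thm: Validity of NN approximation} via the triangle inequality yields the claimed convergence in probability of $\hat{V}_{n}$ to $V$, and an analogous pass-through argument applied at the argmin yields convergence of $\boldsymbol{f}_{n}^{\ast}(\cdot;\hat{\boldsymbol{\theta}}_{n}^{\ast})$ to $\boldsymbol{p}^{\ast}$ in the appropriate sense.
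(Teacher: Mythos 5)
Your proposal follows essentially the same route as the paper: the same triangle-inequality split into a stochastic error and the deterministic error covered by Theorem \ref{thm: Validity of NN approximation}, the same bound of the difference of infima by $\sup_{(\boldsymbol{\theta}_n,\xi)}|\hat{J}_n-J_n|$, the same three-term decomposition isolating the plug-in empirical mean inside $G$, and the same reliance on a uniform LLN over the growing parameter class under the growth restriction on $\hbar(n)$ (which the paper obtains by citing \cite{TsangWong2020} rather than re-deriving covering-number bounds). The only notable difference is your term (II): since $G$ is only assumed continuous, the paper first approximates it uniformly on the compact wealth domain by a sequence of Lipschitz functions before invoking the uniform convergence of the empirical mean, rather than asserting locally Lipschitz behaviour directly.
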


\begin{eqnarray}
\left|\hat{V}_{n}\left(t_{0},w_{0};\mathcal{Y}_{n}\right)-V\left(t_{0},w_{0}\right)\right| & = & \left|\inf_{\left(\boldsymbol{\theta}_{n},\xi\right)\in\mathbb{R}^{\eta_{n}+1}}\hat{J}_{n}\left(\boldsymbol{\theta}_{n},\xi;t_{0},w_{0},\mathcal{Y}_{n}\right)-\inf_{\xi\in\mathbb{R}}\inf_{\boldsymbol{p}\in C\left(\mathcal{D}_{\boldsymbol{\phi}},\mathcal{Z}\right)}J\left(\boldsymbol{p},\xi;t_{0},w_{0}\right)\right|\nonumber \\
 & \overset{P}{\longrightarrow} & 0,\qquad\textrm{as }n\rightarrow\infty.\label{eq: Result thm validity of NN computational estimate}
\end{eqnarray}

\begin{proof}
See Appendix \ref{subsec:Appendix Proof-of-Theorem NN approx}.
\end{proof}
Taken together, Theorem \ref{thm: Validity of NN approximation} and
Theorem \ref{thm: Validity of NN computational estimate} establish
the theoretical validity of the NN approach to solve problems of the
form (\ref{eq: Initial full objective}).

\section{Numerical results\label{sec:Numerical results}}

In this section, we present numerical results obtained by implementing
the NN approach described in Section \ref{sec: NN approach - overview}.
For illustrative purposes, the examples focus on investment objectives
as outlined in Subsection \ref{subsec:Investment-objectives}, and
we use three different data generation techniques for obtaining the
training data set $\mathcal{Y}_{n}$ of the NN: (i) parametric models
for underlying asset returns, (ii) stationary block bootstrap resampling
of empirical returns, and (ii) generative adversarial network (GAN)-generated
synthetic asset returns.

\subsection{Closed-form solution: $DSQ\left(\gamma\right)$ with continuous rebalancing\label{subsec:Ground-truth: DSQ with cont rebal}}

Under certain conditions, some of the optimization problems in Subsection
\ref{subsec:Investment-objectives} can be solved analytically. In
this subsection, we demonstrate how a closed-form solution of problem
$DSQ\left(\gamma\right)$ in (\ref{eq: DSQ objective}), assuming
\textit{continuous} rebalancing (i.e. if we let $\Delta t\rightarrow0$
in (\ref{eq: Set of rebalancing times})), can be approximated very
accurately using a very simple NN (1 hidden layer, only 3 hidden nodes)
using discrete rebalancing with $\Delta t\gg0$ in (\ref{eq: Set of rebalancing times}).
This simultaneously illustrates how parsimonious the NN approach is,
as well as how useful the imposition of time-continuity is in ensuring
the smooth behavior of the (approximate) optimal control. 

In this subsection as well as in Subsection \ref{subsec: Ground truth problem MCV},
we assume parametric dynamics for the underlying assets. For concreteness,
we consider the scenario of two assets, $N_{a}=2$, with unit values
$S_{i},i=1,2$, evolving according to the following dynamics, 
\begin{eqnarray}
\frac{dS_{i}\left(t\right)}{S_{i}\left(t^{-}\right)} & = & \left(\mu_{i}-\lambda_{i}\kappa_{i}^{\left(1\right)}\right)\cdot dt+\sigma_{i}\cdot dZ_{i}\left(t\right)+d\left(\sum_{k=1}^{\pi_{i}\left(t\right)}\left(\vartheta_{i}^{\left(k\right)}-1\right)\right),\qquad i=1,2.\label{eq: Parametric dynamics of underlying assets}
\end{eqnarray}

Note that (\ref{eq: Parametric dynamics of underlying assets}) takes
the form of the standard jump diffusion models in finance - see e.g.
\cite{MertonJumps1976,KouOriginal} for more information. For each
asset $i$ in (\ref{eq: Parametric dynamics of underlying assets}),
$\mu_{i}$ and $\sigma_{i}$ denote the (actual, not risk-neutral)
drift and volatility, respectively, $Z_{i}$ denotes a standard Brownian
motion, $\pi_{i}\left(t\right)$ denotes a Poisson process with intensity
$\lambda_{i}\geq0$, and $\vartheta_{i}^{\left(k\right)}$ are i.i.d.
random variables with the same distribution as $\vartheta_{i}$, which
represents the jump multiplier of the $i$th risky asset with $\kappa_{i}^{\left(1\right)}=\mathbb{E}\left[\vartheta_{i}-1\right]$
and $\kappa_{i}^{\left(2\right)}=\mathbb{E}\left[\left(\vartheta_{i}-1\right)^{2}\right]$.
While the Brownian motions can be correlated with $dZ_{1}\left(t\right)dZ_{2}\left(t\right)=\rho_{1,2}\cdot dt$,
we make the standard assumption that the jump components are independent
(see for example \cite{ForsythVetzal2022CutLosses}).

For this subsection only, we treat the first asset ($i=1$ in (\ref{eq: Parametric dynamics of underlying assets}))
as a ``risk-free'' asset, and set $\mu_{1}=r>0$ where $r$ is the
risk-free rate, so that we have $\lambda_{1}=0$, $\sigma_{1j}=0$
$\forall j$, and $Z_{1}\equiv0$, while the second asset ($i=2$
in (\ref{eq: Parametric dynamics of underlying assets})) is assumed
to be a broad equity market index (the ``risky asset''). In this
scenario, if problem $DSQ\left(\gamma\right)$ in (\ref{eq: DSQ objective})
is solved subject to dynamics (\ref{eq: Parametric dynamics of underlying assets})
together with the assumptions of costless continuous trading, infinite
leverage, and uninterrupted trading in the event of insolvency, then
the $DSQ\left(\gamma\right)$-optimal control can be obtained analytically
as
\begin{align}
\boldsymbol{p}^{\ast}\left(t,W^{\ast}\left(t\right)\right) & =\left[1-p_{2}^{\ast}\left(t,W^{\ast}\left(t\right)\right),p_{2}^{\ast}\left(t,W^{\ast}\left(t\right)\right)\right]\in\mathbb{R}^{2},\label{eq: Full DSQ optimal control}
\end{align}
where the fraction of wealth in the broad stock market index (asset
$i=2$) is given by (\cite{ZwengLi2011})
\begin{align}
p_{2}^{\ast}\left(t,W^{\ast}\left(t\right)\right) & =\frac{\mu_{2}-r}{\sigma_{2}^{2}+\lambda_{2}\kappa_{2}^{\left(2\right)}}\cdot\left[\frac{\gamma e^{-r\left(T-t\right)}-W^{\ast}\left(t\right)}{W^{\ast}\left(t\right)}\right],\qquad w_{0}<\gamma e^{-r\left(T-t\right)}.\label{eq: Analytical solution DSQ}
\end{align}

By design, the NN approach is not constructed to solve problems with
unrealistic assumptions such as continuous trading, infinite leverage
and short-selling, or trading in the event of bankruptcy, all of which
are required to derive (\ref{eq: Analytical solution DSQ}). However,
if the implicit quadratic wealth target for the DSQ problem (i.e.
the value of $\gamma$, see \cite{Vigna_efficiency2014}) is not too
aggressive, the analytical solution (\ref{eq: Analytical solution DSQ})
does not require significant leverage or lead to a large probability
of insolvency. In such a scenario, we can use the NN approach to approximate
(\ref{eq: Analytical solution DSQ}).

We select $w_{0}=100$, $T=1$ year and $\gamma=138.33$,
and simulate $n=2.56\times10^{6}$ paths of the underlying assets
using (\ref{eq: Parametric dynamics of underlying assets}) and parameters
as in Table \ref{tab: Params for ground truth DSQ with cont rebal}
(Appendix \ref{sec: Appendix Parameters-for-numerical results}).
On this set of paths, the true analytical solution (\ref{eq: Analytical solution DSQ})
is implemented using 7,200 time steps. In contrast, for the NN approach,
we use only 4 rebalancing events in $\left[0,T=1\right]$, and therefore
aggregate the simulated returns in quarterly time intervals to construct
the training data set $\mathcal{Y}_{n}$. We consider only a very
shallow NN, consisting of a single hidden layer and only 3 hidden
nodes. 

Figure \ref{fig: Num_04_Heatmaps DSQ only} compares the resulting
optimal investment strategies by illustrating the optimal proportion
of wealth invested in the the broad equity market index (asset $i=2$)
as a function of time and wealth. We emphasize that the NN strategy
in Figure \ref{fig: Num_04_Heatmaps DSQ only}(b) is not expected
to be exactly identical to the analytical solution in Figure \ref{fig: Num_04_Heatmaps DSQ only}(a),
since it is based on fundamentally different assumptions such as discrete
rebalancing and investment constraints (\ref{eq: Set A}). 

\noindent 
\begin{figure}[!tbh]
\noindent \begin{centering}
\subfloat[Closed-form solution, $\Delta t\rightarrow0$ ($w_{0}\in\mathbb{R}$)]{\includegraphics[scale=0.6]{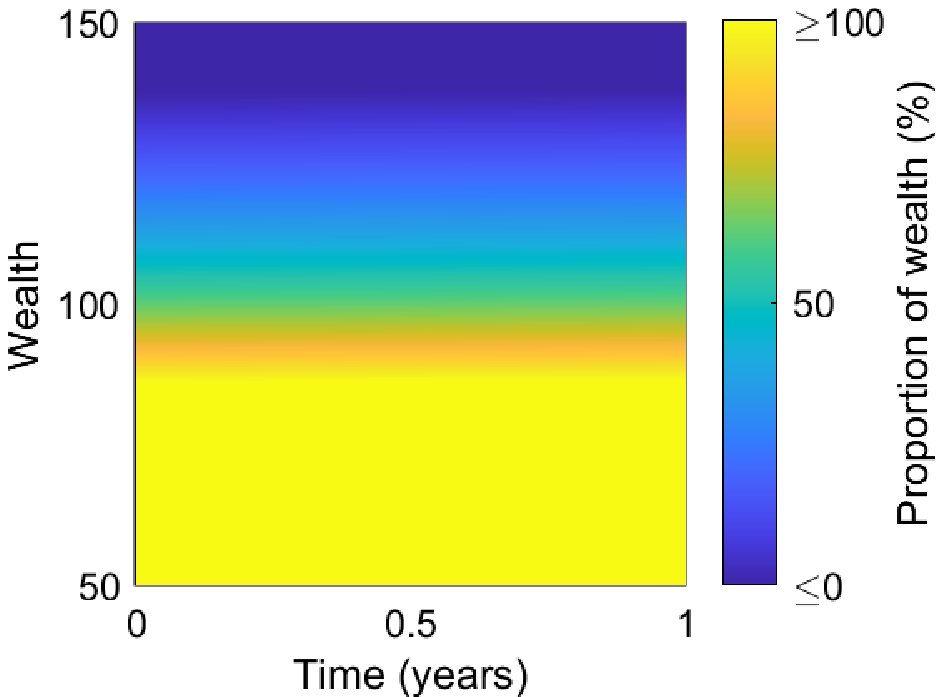}

}$\qquad$\subfloat[Shallow NN, $\Delta t=0.25$ ($w_{0}=100$)]{\includegraphics[scale=0.6]{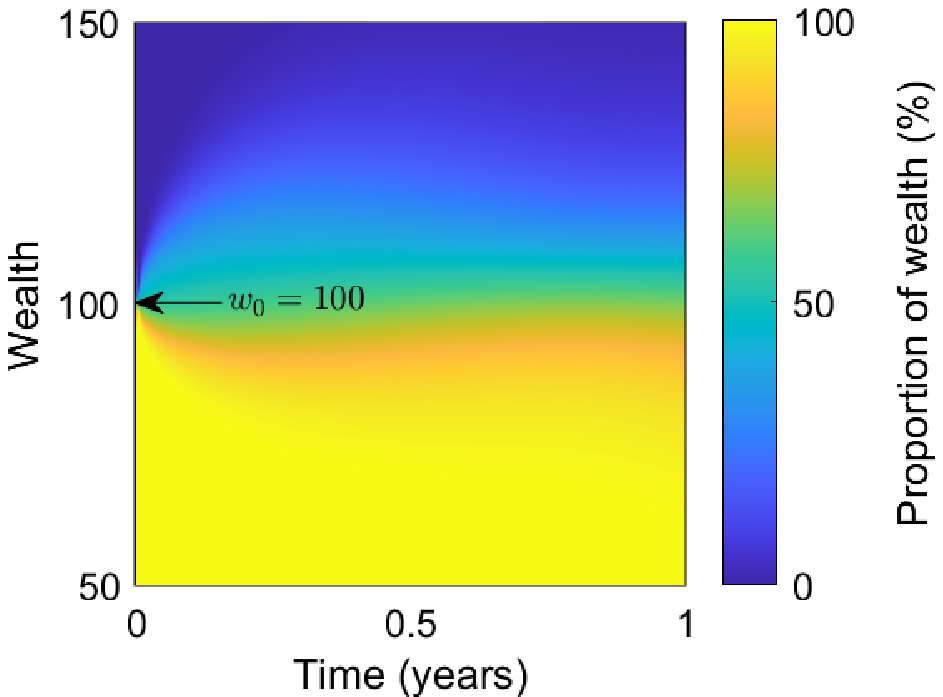}

}
\par\end{centering}
\caption{Closed-form solution - $DSQ\left(\gamma\right)$ with continuous rebalancing:
Optimal proportion of wealth invested in the broad equity market index
as a function of time and wealth. The NN approximation is obtained
for a specific initial wealth of $w_{0}=$100, and only four rebalancing
events in $\left[0,T\right]$. \label{fig: Num_04_Heatmaps DSQ only}}
\end{figure}

However, requiring that the NN feature vector includes time in the
proposed NN approach, together with a NN parameter vector that does
not depend on time, we guarantee the smooth behavior in time of the
NN approximation observed in Figure \ref{fig: Num_04_Heatmaps DSQ only}(b).
As a result, Table \ref{tab:Ground-truth--DSQ analytical W percentiles}
shows that the shallow NN strategy trained with $\Delta t\gg0$ results
in a remarkably accurate and parsimonious approximation to the true
analytical solution where $\Delta t\rightarrow0$, since we obtain
nearly identical optimal terminal wealth distributions.

\noindent 
\begin{table}
\caption{Closed-form solution - $DSQ\left(\gamma\right)$ with continuous rebalancing:
Percentiles of the simulated ( $n=2.56\times10^{6}$) terminal wealth
distributions obtained by implementing the optimal strategies in Figure
\ref{fig: Num_04_Heatmaps DSQ only}. In both cases, a mean terminal
wealth of 105 is obtained. Note that the NN approximation was obtained
under the assumption of quarterly rebalancing only, no leverage or
short-selling, and therefore no trading in insolvency.\label{tab:Ground-truth--DSQ analytical W percentiles}}

\noindent \centering{}%
\begin{tabular}{|c|c|>{\centering}p{1cm}|>{\centering}p{1cm}|>{\centering}p{1cm}|>{\centering}p{1cm}|>{\centering}p{1cm}|}
\cline{3-7} \cline{4-7} \cline{5-7} \cline{6-7} \cline{7-7} 
\multicolumn{1}{c}{} &  & \multicolumn{5}{c|}{{\footnotesize{}$W\left(T\right)$ percentiles}}\tabularnewline
\hline 
{\footnotesize{}Solution approach} & {\footnotesize{}Rebalancing} & {\footnotesize{}5th} & {\footnotesize{}20th} & {\footnotesize{}50th} & {\footnotesize{}80th} & {\footnotesize{}95th}\tabularnewline
\hline 
\hline 
{\footnotesize{}Closed-form solution} & {\footnotesize{}Continuous, $\Delta t$$\rightarrow0$} & {\footnotesize{}86.81} & {\footnotesize{}98.02} & {\footnotesize{}106.35} & {\footnotesize{}112.82} & {\footnotesize{}118.15}\tabularnewline
\hline 
{\footnotesize{}Shallow NN approximation} & {\footnotesize{}Discrete, $\Delta t=0.25$, total of $N_{rb}=4$ only} & {\footnotesize{}86.62} & {\footnotesize{}97.30} & {\footnotesize{}105.67} & {\footnotesize{}112.54} & {\footnotesize{}118.85}\tabularnewline
\hline 
\end{tabular}
\end{table}

\subsection{Ground truth: Problem $MCV\left(\rho\right)$\label{subsec: Ground truth problem MCV}}

In the case of the Mean-CVaR problem $MCV\left(\rho\right)$ in (\ref{eq: MCV objective}),
\cite{ForsythVetzal2022CutLosses} obtain an MCV-optimal investment
strategy subject to the same investment constraints as in Section
\ref{sec:Problem-formulation} (namely discrete rebalancing, no short-selling
or leverage allowed, and no trading in insolvency) using the partial
(integro-)differential equation (PDE) approach of \cite{Forsyth2019CVaR}. 

For ground truth analysis purposes, we therefore consider the same
investment scenario as in \cite{ForsythVetzal2022CutLosses}, where
two underlying assets are considered, namely 30-day US T-bills and
a broad equity market index (the CRSP VWD index) - see Appendix \ref{sec: Appendix Parameters-for-numerical results}
for definitions. However, in contrast to the preceding section where
one asset was taken as the risk-free asset, both assets are now assumed
to evolve according to dynamics of the form (\ref{eq: Parametric dynamics of underlying assets}),
using the double-exponential \cite{KouOriginal} formulation for the
jump distributions. The NN training data set is therefore constructed
by simulating the same underlying dynamics. While further details
regarding the context and motivation for the investment scenario can
be found in \cite{ForsythVetzal2022CutLosses}, here we simply note
that the scenario involves $T=5$ years, quarterly rebalancing, a
set of admissible strategies satisfying (\ref{eq: Set A}), and parameters
for (\ref{eq: Parametric dynamics of underlying assets}) as in Table
\ref{tab: Params for ground truth MCV}. 

As discussed in Appendix \ref{sec: Appendix B - NN approach - practical considerations},
the inherently higher complexity of the Mean-CVaR optimal control
requires the NN to be deeper than in the case of the problem considered
in Subsection \ref{subsec:Ground-truth: DSQ with cont rebal}. As
a result, we consider approximating NNs with two hidden layers, each
with 8 hidden nodes, while relatively large mini-batches of 2,000
paths were used in the stochastic gradient descent algorithm (see
Appendix \ref{sec: Appendix B - NN approach - practical considerations})
to ensure sufficiently accurate sampling of the tail of the returns
distribution in selecting the descent direction at each step. Note
that despite using a deeper NN, this NN structure is still very parsimonious
and relatively shallow compared to the rebalancing time-dependent
structures considered in for example \cite{HanWeinan2016}, where
a new set of parameters is introduced at each rebalancing event.

Table \ref{tab:MCV ground truth} compares the PDE results reported
in \cite{ForsythVetzal2022CutLosses} with the corresponding NN results.
{\color{black}
Note that the PDE optimal control  was determined by solving
a Hamilton-Jacobi-Bellman PDE numerically.  
The statistics
for the PDE generated control were computed using $n=2.56\times10^{6}$
Monte Carlo simulations 
}
of the joint underlying asset dynamics in
order to calculate the results of Table \ref{tab:MCV ground truth},
while the NN was trained on $n=2.56\times10^{6}$ paths of the same
underlying asset dynamics but which were independently simulated.
While some variability of the results are therefore to be expected
due to the underlying samples, the results in Table \ref{tab:MCV ground truth}
demonstrate the robustness of the proposed NN approach.

\noindent 
\begin{table}[!tbh]
\caption{Ground truth - problem $MCV\left(\rho\right)$: The PDE results are
obtained from \cite{ForsythVetzal2022CutLosses} for selected points
on the Mean-CVaR \textquotedblleft efficient frontier\textquotedblright .
The \textquotedblleft Value function\textquotedblright{} column reports
the value of the objective function (\ref{eq: Mean-CVaR - objective 1})
under the corresponding optimal control, while \textquotedblleft\%
difference\textquotedblright{} reports the percentage difference in the
reported value functions for the NN solution compared to the PDE solution.
\label{tab:MCV ground truth}}

\noindent \centering{}{\footnotesize{}}%
\begin{tabular}{|>{\centering}p{1.5cm}|>{\centering}p{1.5cm}|>{\centering}p{1.5cm}|>{\centering}p{1.5cm}|>{\centering}p{1.5cm}|>{\centering}p{1.5cm}|>{\centering}p{1.5cm}|>{\centering}p{1.5cm}|}
\hline 
{\footnotesize{}$\rho$} & \multicolumn{2}{c|}{{\footnotesize{}5\% CVaR}} 
       & \multicolumn{2}{c|}{{\footnotesize{}$E^{t_{0},w_{0}}\left[W\left(T\right)\right]$}} 
       & \multicolumn{2}{c|}{{\footnotesize{}Value function}} 
       & {\footnotesize{}\% difference}
   \tabularnewline
          \cline{2-7} \cline{3-7} \cline{4-7} \cline{5-7} \cline{6-7} \cline{7-7} 
          & {\footnotesize{}PDE} 
          & {\footnotesize{}NN} 
          & {\footnotesize{}PDE} 
          & {\footnotesize{}NN} 
          & {\footnotesize{}PDE} 
            & {\footnotesize{}NN} & 
     \tabularnewline
\hline 
{\footnotesize{}0.10} & {\footnotesize{}940.60} & {\footnotesize{}940.55} & {\footnotesize{}1069.19} & {\footnotesize{}1062.97} & {\footnotesize{}1047.52} & {\footnotesize{}1046.85} & {\footnotesize{}-0.06\%}\tabularnewline
\hline 
{\footnotesize{}0.25} & {\footnotesize{}936.23} & {\footnotesize{}937.39} & {\footnotesize{}1090.89} & {\footnotesize{}1081.99} & {\footnotesize{}1208.95} & {\footnotesize{}1207.88} & {\footnotesize{}-0.09\%}\tabularnewline
\hline 
{\footnotesize{}1.00} & {\footnotesize{}697.56} & {\footnotesize{}690.11} & {\footnotesize{}1437.73} & {\footnotesize{}1444.16} & {\footnotesize{}2135.29} & {\footnotesize{}2134.27} & {\footnotesize{}-0.05\%}\tabularnewline
\hline 
{\footnotesize{}1.50} & {\footnotesize{}614.92} & {\footnotesize{}611.65} & {\footnotesize{}1508.10} & {\footnotesize{}1510.07} & {\footnotesize{}2877.07} & {\footnotesize{}2876.76} & {\footnotesize{}-0.01\%}\tabularnewline
\hline 
\end{tabular}{\footnotesize\par}
\end{table}

\subsection{Ground truth: Problems $MV\left(\rho\right)$ and $DSQ\left(\gamma\right)$\label{subsec: Ground truth problems MV and DSQ}}

In this subsection, we demonstrate that if the investment objective
(\ref{eq: Initial full objective}) is separable in the sense of dynamic
programming, the correct time-consistent optimal investment strategy
is recovered, otherwise we obtain the correct pre-commitment (time-inconsistent)
investment strategy. 

To demonstrate this, the theoretical embedding result of \cite{LiNg2000,ZhouLi2000},
which establishes the equivalence of problems $MV\left(\rho\right)$
and $DSQ\left(\gamma\right)$ under fairly general conditions, can
be exploited for ground truth analysis purposes as follows. Suppose
we solved problems $MV\left(\rho\right)$ and $DSQ\left(\gamma\right)$
on the same underlying training data set. We remind the reader that
in the proposed NN approach, problem $MV\left(\rho\right)$ can indeed
be solved directly without difficulty, which is not possible in dynamic
programming-based approaches. Then, considering the numerical results,
there should be values of parameters $\rho\equiv\tilde{\rho}$ and
$\gamma\equiv\tilde{\gamma}$ such that the optimal strategy of $MV\left(\rho\equiv\tilde{\rho}\right)$
corresponds exactly to the optimal strategy of $DSQ\left(\gamma\equiv\tilde{\gamma}\right)$,
with a specific relationship holding between $\tilde{\rho}$ and $\tilde{\gamma}$.
The NN approach can therefore enable us to numerically demonstrate
the embedding result of \cite{LiNg2000,ZhouLi2000} in a setting where
the underlying asset dynamics are not explicitly specified and where
multiple investment constraints are present. We start by recalling
the embedding result. 
\begin{prop}
\label{prop: Embedding result}(Embedding result of \cite{LiNg2000,ZhouLi2000})
Fix a value $\tilde{\rho}>0$. If $\mathcal{P}^{\ast}\in\mathcal{A}$
is the optimal control of problem $MV\left(\rho\equiv\tilde{\rho}\right)$
in (\ref{eq: MV objective}), then $\mathcal{P}^{\ast}$ is also the
optimal control for problem $DSQ\left(\gamma=\tilde{\gamma}\right)$
in (\ref{eq: DSQ objective}), provided that 
\begin{eqnarray}
\tilde{\gamma} & = & \frac{1}{2\tilde{\rho}}+E^{t_{0},w_{0}}\left[W^{\ast}\left(T;\mathcal{P}^{\ast},\boldsymbol{Y}\right)\right].\label{eq: Embedding result}
\end{eqnarray}
\end{prop}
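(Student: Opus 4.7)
The plan is to verify the embedding by a direct inequality argument, rather than invoking Lagrangian duality. Let $\mathcal{P}^{\ast}\in\mathcal{A}$ be the $MV(\tilde{\rho})$-optimal control, write $m^{\ast}=E^{t_{0},w_{0}}[W^{\ast}(T;\mathcal{P}^{\ast},\boldsymbol{Y})]$ and $v^{\ast}=Var^{t_{0},w_{0}}[W^{\ast}(T;\mathcal{P}^{\ast},\boldsymbol{Y})]$, and set $\tilde{\gamma}=\tfrac{1}{2\tilde{\rho}}+m^{\ast}$ as in (\ref{eq: Embedding result}). For an arbitrary competitor $\mathcal{P}\in\mathcal{A}$, write $m=E^{t_{0},w_{0}}[W(T;\mathcal{P},\boldsymbol{Y})]$ and $v=Var^{t_{0},w_{0}}[W(T;\mathcal{P},\boldsymbol{Y})]$. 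The goal is to show that the $DSQ(\tilde{\gamma})$ objective, namely $E[(W(T)-\tilde{\gamma})^{2}]$, attains its infimum at $\mathcal{P}^{\ast}$.

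First I would expand the squared term using $E[W(T)^{2}]=Var[W(T)]+(E[W(T)])^{2}$. This turns the inequality $E[(W^{\ast}(T)-\tilde{\gamma})^{2}]\le E[(W(T)-\tilde{\gamma})^{2}]$ into the equivalent statement
\begin{equation*}
(v-v^{\ast})+(m-m^{\ast})(m+m^{\ast}-2\tilde{\gamma})\;\ge\;0.
\end{equation*}
Substituting $\tilde{\gamma}=\tfrac{1}{2\tilde{\rho}}+m^{\ast}$ gives $m+m^{\ast}-2\tilde{\gamma}=(m-m^{\ast})-\tfrac{1}{\tilde{\rho}}$, so the required inequality simplifies to
\begin{equation*}
(v-v^{\ast})+(m-m^{\ast})^{2}-\tfrac{1}{\tilde{\rho}}(m-m^{\ast})\;\ge\;0.
\end{equation*}

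Next I would invoke the $MV(\tilde{\rho})$-optimality of $\mathcal{P}^{\ast}$, which reads $m^{\ast}-\tilde{\rho}v^{\ast}\ge m-\tilde{\rho}v$, or equivalently $v-v^{\ast}\ge\tfrac{1}{\tilde{\rho}}(m-m^{\ast})$. Plugging this lower bound into the previous display leaves $(m-m^{\ast})^{2}\ge 0$, which is trivially true. Since $\mathcal{P}\in\mathcal{A}$ was arbitrary, this establishes that $\mathcal{P}^{\ast}$ minimizes $E[(W(T)-\tilde{\gamma})^{2}]$ over $\mathcal{A}$, as required.

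There is essentially no technical obstacle: the proof reduces to a two-line algebraic identity followed by one use of the $MV(\tilde{\rho})$ optimality inequality, with the specific choice of $\tilde{\gamma}$ exactly cancelling the linear-in-$(m-m^{\ast})$ term so that only the manifestly nonnegative $(m-m^{\ast})^{2}$ remains. The only minor point worth flagging is the sign/orientation: problem $MV(\tilde{\rho})$ is a supremum whereas $DSQ(\tilde{\gamma})$ is an infimum, so one should track signs carefully when translating optimality of $\mathcal{P}^{\ast}$ for the former into a lower bound on $v-v^{\ast}$ used in the latter. Note also that the argument does \emph{not} claim a converse (i.e.\ that every $DSQ(\tilde{\gamma})$-optimal control solves $MV(\tilde{\rho})$); it only embeds the MV optimizer into the DSQ family, which is all that is needed for the ground-truth analysis in Section \ref{subsec: Ground truth problems MV and DSQ}.
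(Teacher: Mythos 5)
Your argument is correct, and I verified the algebra: writing the $DSQ(\tilde{\gamma})$ objective as $v+(m-\tilde{\gamma})^{2}$, the difference of objectives at $\mathcal{P}$ and $\mathcal{P}^{\ast}$ is $(v-v^{\ast})+(m-m^{\ast})\bigl(m+m^{\ast}-2\tilde{\gamma}\bigr)$, the choice $\tilde{\gamma}=\tfrac{1}{2\tilde{\rho}}+m^{\ast}$ reduces this to $(v-v^{\ast})+(m-m^{\ast})^{2}-\tfrac{1}{\tilde{\rho}}(m-m^{\ast})$, and the $MV(\tilde{\rho})$-optimality inequality $v-v^{\ast}\geq\tfrac{1}{\tilde{\rho}}(m-m^{\ast})$ leaves the nonnegative term $(m-m^{\ast})^{2}$. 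The paper itself does not prove the proposition; it only cites \cite{LiNg2000,ZhouLi2000} and points to the alternative proof in \cite{DangForsyth2016} which is noted to hold for any admissible control set $\mathcal{A}$. Your derivation is precisely that style of argument: it uses nothing about $\mathcal{A}$ beyond the optimality inequality for $\mathcal{P}^{\ast}$, so it delivers the generality emphasized in the paper's remark, and it is appropriately one-directional (embedding the $MV$ optimizer into the $DSQ$ family without claiming the converse). No gaps.
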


\begin{proof}
See \cite{LiNg2000,ZhouLi2000}. We also highlight the alternative
proof provided in \cite{DangForsyth2016}, which shows that this result
is valid for any admissible control set $\mathcal{A}$.
\end{proof}
Since (\ref{eq: Embedding result}) is valid for any admissible control
set $\mathcal{A}$, we consider a factor investing scenario where
portfolios are constructed using popular long-only investable equity
factor indices (Momentum, Value, Low Volatility, Size), a broad equity
market index (the CRSP VWD index), 30-day T-bills and 10-year Treasury
bonds (see Appendix \ref{sec: Appendix Parameters-for-numerical results}
for definitions). For illustrative purposes in the case of an investor
primarily concerned with long-run factor portfolio performance, we
use a horizon of $T=10$ years, $w_{0}=120$, annual contributions
of $q\left(t_{m}\right)=12$, and annual rebalancing.

Given historical returns data for the underlying assets, we construct
training and testing (out-of-sample) data sets for the NN, $\mathcal{Y}_{n}$
and $\mathcal{Y}_{\hat{n}}^{test}$, respectively, using stationary
block bootstrap resampling of empirical historical asset returns (see
Appendix \ref{sec: Appendix Parameters-for-numerical results}), which
is popular with practitioners 
(\cite{Cederburg_2022,CogneauZakalmouline2013,dichtl2016,Scott_2017,Scott_2022,Simonian_2022})
and is designed to handle weakly stationary time series with serial
dependence.   See \cite{NiLiForsyth2020} for a discussion concerning
the probability of obtaining a repeated path in block bootstrap 
resampling (which is negligible for any realistic number
of samples).  Due to availability of historical data we use inflation-adjusted
monthly empirical returns from 1963:07 to 2020:12. The training data
set ($n=10^{6}$) is obtained using an expected block size of 6 months
of joint returns from 1963:07 to 2009:12, while the testing data set
($n=10^{6}$) uses an expected block size of 3 months and returns
from 2010:01 to 2020:12. We consider NNs with two hidden layers, each
with only eight hidden nodes.

Choosing two values of $\tilde{\rho}>0$ to illustrate different levels
of risk aversion (see Table \ref{tab: MV vs DSQ ground truth}), we
solve problem $MV\left(\rho=\tilde{\rho}\right)$ in (\ref{eq: MV objective})
directly using the proposed approach to obtain the optimal investment
strategy $\boldsymbol{f}\left(\cdot;\hat{\boldsymbol{\theta}}_{mv}^{\ast}\right)$.
Note that since we consider a fixed NN structure in this setting rather
than a sequence of NNs, we drop the subscript ``$n$'' in the notation
$\boldsymbol{f}\left(\cdot;\hat{\boldsymbol{\theta}}_{mv}^{\ast}\right)$.
Using this result together with (\ref{eq: Embedding result}), we
can approximate the associated value of $\tilde{\gamma}$ by 
\begin{eqnarray}
\tilde{\gamma} & \simeq & \frac{1}{2\tilde{\rho}}+\frac{1}{n}\sum_{j=1}^{n}W^{\ast\left(j\right)}\left(T;\hat{\boldsymbol{\theta}}_{mv}^{\ast},,\mathcal{Y}_{n}\right),\label{eq: Gamma approx for DSQ}
\end{eqnarray}
and solve problem $DSQ\left(\gamma=\tilde{\gamma}\right)$ independently
using the proposed approach on the same training data set $\mathcal{Y}_{n}$. 

According to Proposition \ref{prop: Embedding result}, the resulting
investment strategy $\boldsymbol{f}\left(\cdot;\hat{\boldsymbol{\theta}}_{dsq}^{\ast}\right)$
should be (approximately) identical to the strategy $\boldsymbol{f}\left(\cdot;\hat{\boldsymbol{\theta}}_{mv}^{\ast}\right)$
if the proposed approach works as required. Note that the parameter
vectors are expected to be different (i.e. $\hat{\boldsymbol{\theta}}_{dsq}^{\ast}\neq\hat{\boldsymbol{\theta}}_{mv}^{\ast}$)
due to a variety of reasons (multiple local minima, optimization using
SGD, etc.), but the resulting wealth distributions and asset allocation
should agree, i.e. $\boldsymbol{f}\left(\cdot;\hat{\boldsymbol{\theta}}_{dsq}^{\ast}\right)\simeq\boldsymbol{f}\left(\cdot;\hat{\boldsymbol{\theta}}_{mv}^{\ast}\right)$. 

Figure \ref{fig: MV vs DSQ ground truth} demonstrates the investment
strategies $\boldsymbol{f}\left(\cdot;\hat{\boldsymbol{\theta}}_{mv}^{\ast}\right)$
and $\boldsymbol{f}\left(\cdot;\hat{\boldsymbol{\theta}}_{dsq}^{\ast}\right)$
obtained by training the NNs on the same training data set using values
of $\tilde{\rho}=0.017$ and $\tilde{\gamma}=429.647$, respectively.
Note that the values $\tilde{\rho}$ and $\tilde{\gamma}$ are rounded
to three decimal places, and Figure \ref{fig: MV vs DSQ ground truth}
corresponds to Results set 1 in Table \ref{tab: MV vs DSQ ground truth}.
In this example, only four of the underlying candidate assets have
non-zero investments, which is to be expected due to the high correlation
between long-only equity factor indices. 

\noindent 
\begin{figure}[!tbh]
\noindent \begin{centering}
\subfloat[$MV\left(\rho=\tilde{\rho}\right)$ - Momentum]{\includegraphics[scale=0.5]{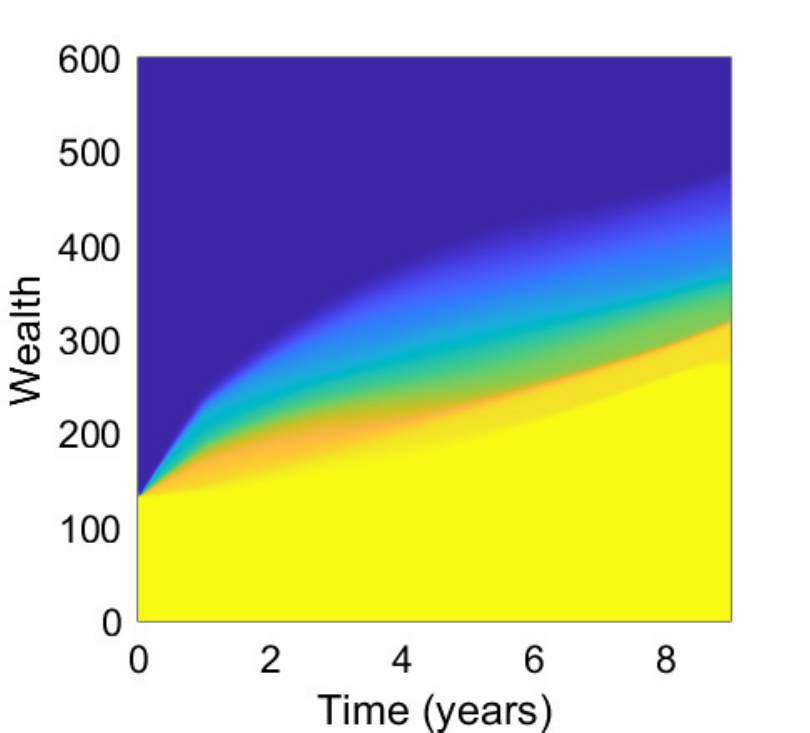}

}\subfloat[$MV\left(\rho=\tilde{\rho}\right)$ - Value]{\includegraphics[scale=0.5]{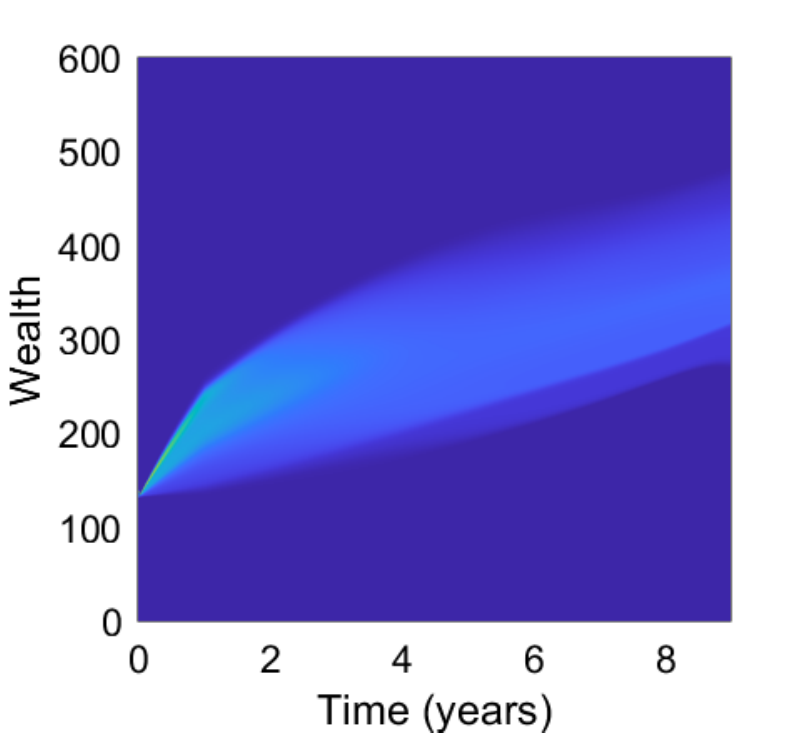}

}\subfloat[$MV\left(\rho=\tilde{\rho}\right)$ - B10]{\includegraphics[scale=0.5]{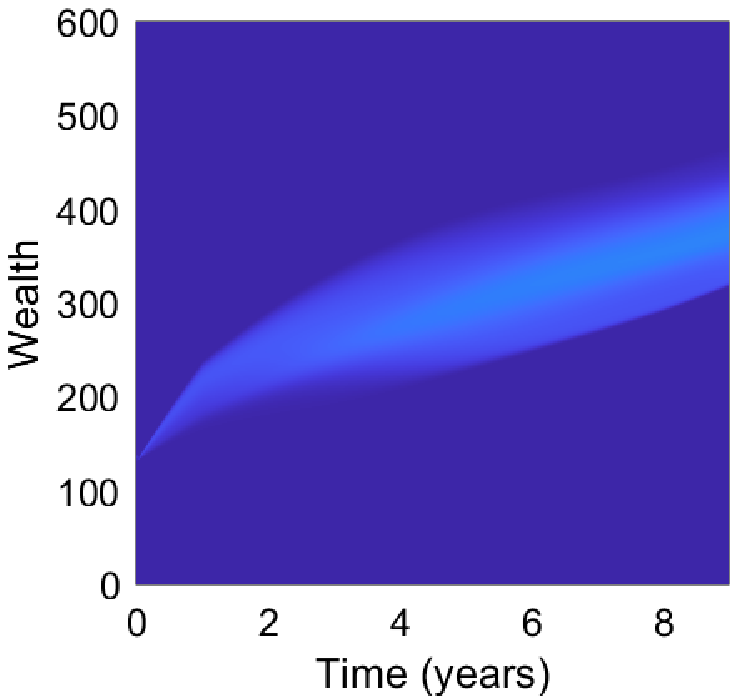}

}\subfloat[$MV\left(\rho=\tilde{\rho}\right)$ - T30]{\includegraphics[scale=0.5]{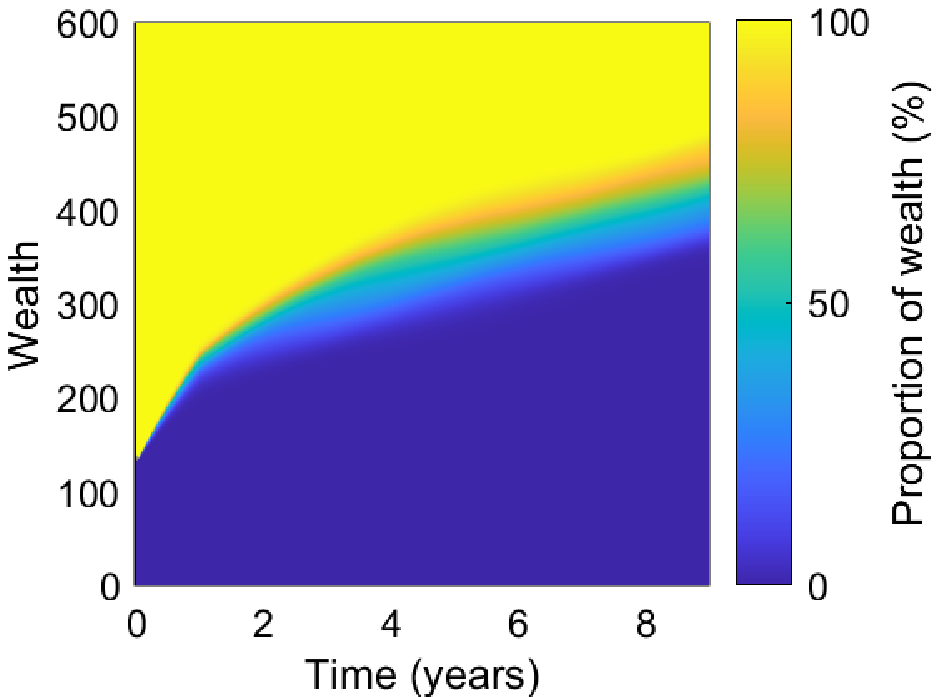}

}
\par\end{centering}
\noindent \begin{centering}
\subfloat[$DSQ\left(\gamma=\tilde{\gamma}\right)$ - Momentum]{\includegraphics[scale=0.5]{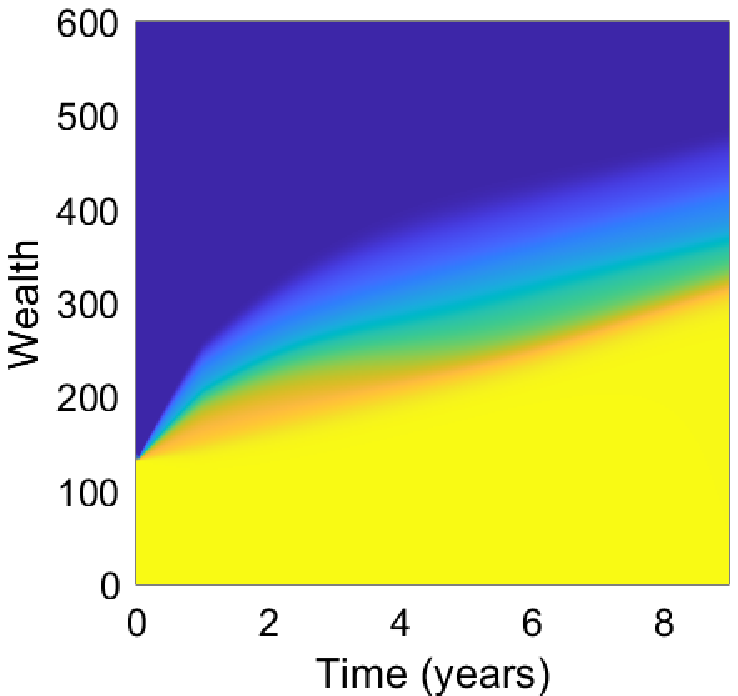}

}\subfloat[$DSQ\left(\gamma=\tilde{\gamma}\right)$ - Value]{\includegraphics[scale=0.5]{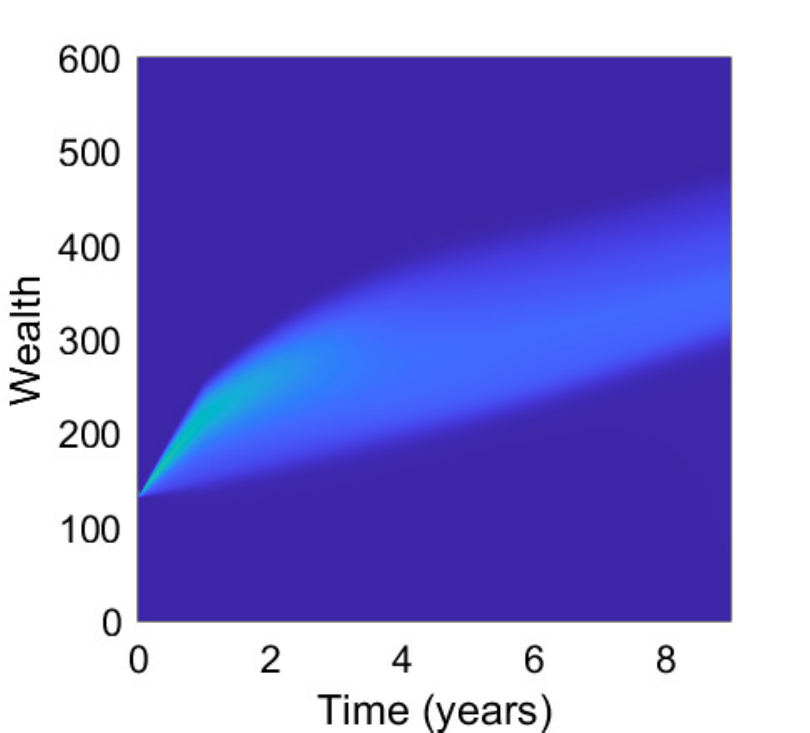}

}\subfloat[$DSQ\left(\gamma=\tilde{\gamma}\right)$ - B10]{\includegraphics[scale=0.5]{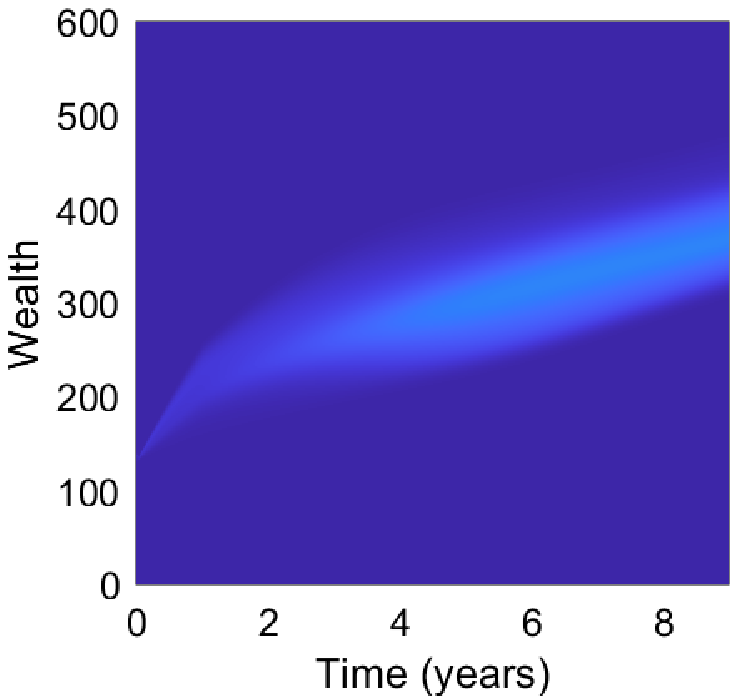}

}\subfloat[$DSQ\left(\gamma=\tilde{\gamma}\right)$ - T30]{\includegraphics[scale=0.5]{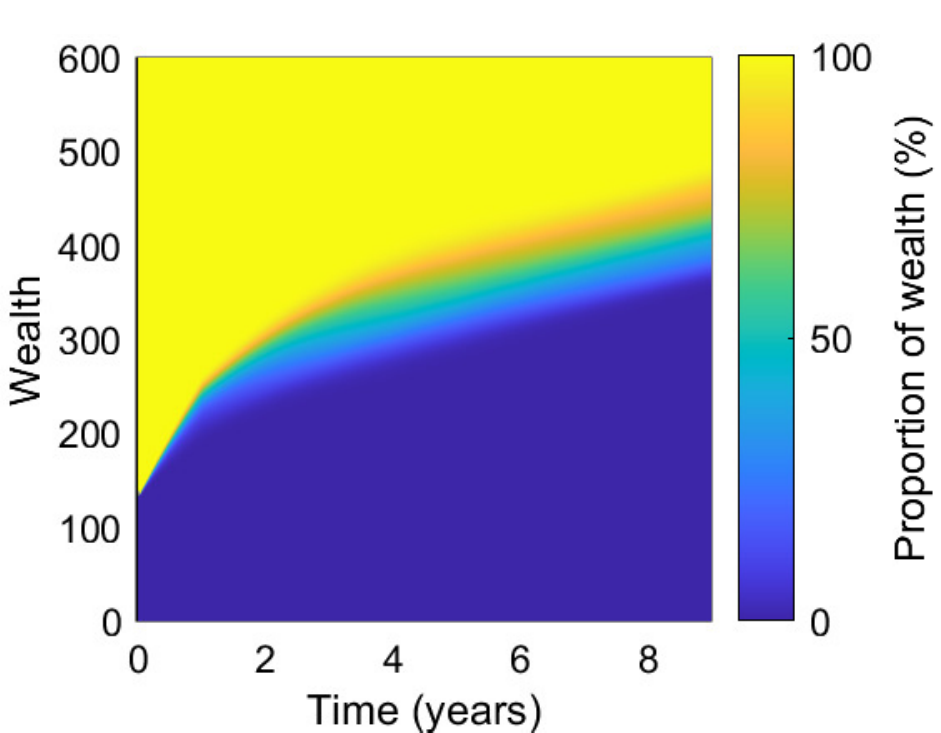}

}
\par\end{centering}
\caption{Ground truth - problems $MV\left(\rho=\tilde{\rho}\right)$ and $DSQ\left(\gamma=\tilde{\gamma}\right)$:
investment strategies $\boldsymbol{f}\left(\cdot;\hat{\boldsymbol{\theta}}_{mv}^{\ast}\right)$
and $\boldsymbol{f}\left(\cdot;\hat{\boldsymbol{\theta}}_{dsq}^{\ast}\right)$
obtained by training the NNs using values of $\tilde{\rho}=0.017$
and $\tilde{\gamma}=429.647$ (rounded to three decimal places), respectively.
Each figure shows the proportion of wealth invested in the asset as
a function of the minimal NN features, namely time and available wealth.
Zero investment under the optimal strategies in the broad market index
and the Size factor. \label{fig: MV vs DSQ ground truth}}
\end{figure}

Table \ref{tab: MV vs DSQ ground truth} confirms that the associated
optimal terminal wealth distributions of $MV\left(\rho=\tilde{\rho}\right)$
and $DSQ\left(\gamma=\tilde{\gamma}\right)$ indeed correspond, both
in-sample (training data set) and out-of-sample (testing data set).

\noindent 
\begin{table}[!tbh]
\caption{Ground truth - problems $MV\left(\rho=\tilde{\rho}\right)$ and $DSQ\left(\gamma=\tilde{\gamma}\right)$:
Terminal wealth results obtained using $n=10^{6}$ joint paths for
the underlying assets. Note that the values of $\tilde{\rho}$ and
$\tilde{\gamma}$ are rounded to three decimal places, . \label{tab: MV vs DSQ ground truth}}

\noindent \centering{}{\footnotesize{}}%
\begin{tabular}{|>{\centering}p{2cm}||>{\centering}p{1.3cm}|>{\centering}p{1.3cm}|>{\centering}p{1.3cm}|>{\centering}p{1.3cm}||>{\centering}p{1.3cm}|>{\centering}p{1.3cm}|>{\centering}p{1.3cm}|>{\centering}p{1.3cm}|}
\cline{2-9} \cline{3-9} \cline{4-9} \cline{5-9} \cline{6-9} \cline{7-9} \cline{8-9} \cline{9-9} 
\multicolumn{1}{>{\centering}p{2cm}||}{} & \multicolumn{4}{c||}{{\footnotesize{}Results set 1: $\tilde{\rho}=0.017$, $\tilde{\gamma}=429.647$}} & \multicolumn{4}{c|}{{\footnotesize{}Results set 2: $\tilde{\rho}=0.0097$, $\tilde{\gamma}=493.196$}}\tabularnewline
\hline 
{\footnotesize{}$W\left(T\right)$} & \multicolumn{2}{c|}{{\footnotesize{}Training data}} & \multicolumn{2}{c||}{{\footnotesize{}Testing data}} & \multicolumn{2}{c|}{{\footnotesize{}Training data}} & \multicolumn{2}{c|}{{\footnotesize{}Testing data}}\tabularnewline
\cline{2-9} \cline{3-9} \cline{4-9} \cline{5-9} \cline{6-9} \cline{7-9} \cline{8-9} \cline{9-9} 
{\footnotesize{}distribution} & {\footnotesize{}MV} & {\footnotesize{}DSQ} & {\footnotesize{}MV} & {\footnotesize{}DSQ} & {\footnotesize{}MV} & {\footnotesize{}DSQ} & {\footnotesize{}MV} & {\footnotesize{}DSQ}\tabularnewline
\hline 
\hline 
{\footnotesize{}Mean} & {\footnotesize{}400.2} & {\footnotesize{}400.3} & {\footnotesize{}391.2} & {\footnotesize{}391.6} & {\footnotesize{}441.5} & {\footnotesize{}441.8} & {\footnotesize{}441.8} & {\footnotesize{}441.5}\tabularnewline
\hline 
{\footnotesize{}Stdev} & {\footnotesize{}55.4} & {\footnotesize{}55.4} & {\footnotesize{}26.2} & {\footnotesize{}25.7} & {\footnotesize{}79.6} & {\footnotesize{}79.7} & {\footnotesize{}39.4} & {\footnotesize{}39.5}\tabularnewline
\hline 
{\footnotesize{}5th percentile} & {\footnotesize{}276.5} & {\footnotesize{}276.4} & {\footnotesize{}346.6} & {\footnotesize{}347.5} & {\footnotesize{}255.2} & {\footnotesize{}254.6} & {\footnotesize{}367.8} & {\footnotesize{}367.1}\tabularnewline
\hline 
{\footnotesize{}25th percentile} & {\footnotesize{}391.8} & {\footnotesize{}392.3} & {\footnotesize{}382.4} & {\footnotesize{}382.8} & {\footnotesize{}422.4} & {\footnotesize{}423.6} & {\footnotesize{}430.9} & {\footnotesize{}430.7}\tabularnewline
\hline 
{\footnotesize{}50th percentile} & {\footnotesize{}416.1} & {\footnotesize{}416.3} & {\footnotesize{}396.5} & {\footnotesize{}396.8} & {\footnotesize{}469.8} & {\footnotesize{}470.1} & {\footnotesize{}451.3} & {\footnotesize{}451.2}\tabularnewline
\hline 
{\footnotesize{}75th percentile} & {\footnotesize{}429.9} & {\footnotesize{}429.8} & {\footnotesize{}406.4} & {\footnotesize{}406.7} & {\footnotesize{}487.7} & {\footnotesize{}489.6} & {\footnotesize{}465.0} & {\footnotesize{}464.8}\tabularnewline
\hline 
{\footnotesize{}95th percentile} & {\footnotesize{}452.1} & {\footnotesize{}452.1} & {\footnotesize{}418.9} & {\footnotesize{}419.0} & {\footnotesize{}516.1} & {\footnotesize{}516.5} & {\footnotesize{}480.9} & {\footnotesize{}480.2}\tabularnewline
\hline 
\end{tabular}{\footnotesize\par}
\end{table}

The proposed NN approach therefore clearly works as expected, in that
we demonstrated that the result of Proposition \ref{prop: Embedding result}
in a completely model-independent way in a portfolio optimization
setting where no known analytical solutions exist. In particular,
we emphasize that no assumptions were made regarding parametric underlying
asset dynamics, the results are entirely data-driven. As a result,
we can interpret the preceding results as showing that the approach
correctly recovers the time-inconsistent (or pre-commitment) strategy
without difficulty if the objective is not separable in the sense
of dynamic programming, such as in the case of the $MV\left(\rho\right)$
problem, whereas if the objective is separable in the sense of dynamic
programming, such as in the case of the $DSQ\left(\gamma\right)$
problem, the approach correctly recovers the associated time-consistent
strategy. 

\subsection{Mean - Semi-variance strategies\label{subsec: Num results - mean semi-variance strategies}}

Having demonstrated the reliability of the results obtained using
the proposed NN approach with the preceding ground truth analyses,
we now consider the solution of the Mean - Semi-variance problem (\ref{eq: Sortino}).
To provide the necessary context to interpret the $MSemiV\left(\rho\right)$-optimal
results, we compare the results of the optimal solutions of the $MCV\left(\rho=\rho_{mcv}\right)$,
$MSemiV\left(\rho=\rho_{msv}\right)$, and $OSQ\left(\gamma=\gamma_{osq}\right)$
problems, where the values of $\rho_{mcv}$, $\rho_{msv}$ and $\gamma_{osq}$
are selected to obtain the same expected value of terminal wealth
on the NN training data set. This is done since the MCV- and OSQ-optimal
strategies have been analyzed in great detail (\cite{DangForsyth2016,Forsyth2019CVaR}),
and are therefore well understood. Note that since all three strategies
are related to the maximization of the mean terminal wealth and while
simultaneously minimizing some risk measure (which is implicitly done
in the case of the OSQ problem, see \cite{DangForsyth2016}), it is
natural to compare the strategies on the basis of equal expectation
of terminal wealth.

To highlight the main qualitative features of the $MSemiV\left(\rho\right)$-optimal
results, we consider a simple investment scenario of two assets, namely
30-day T-bills and a broad equity market index (the VWD index) - see
Appendix \ref{sec: Appendix Parameters-for-numerical results} for
definitions. We choose $T=$5 years, $w_{0}=1000$, and zero contributions
to demonstrate a lump sum investment scenario with quarterly rebalancing. 

To illustrate the flexibility of the NN approach to underlying data
generating assumptions, the NN training data sets are constructed
using generative adversarial network (GAN)-generated synthetic asset
returns obtained by implementing the TimeGAN algorithm proposed by
\cite{YoonTimeGAN2019}. In more detail, using empirical monthly asset
returns from 1926:01 to 2019:12 for the underlying assets (data sources
are specified in Appendix \ref{sec: Appendix Parameters-for-numerical results}),
the TimeGAN is trained with default parameters as in \cite{YoonTimeGAN2019}
using block sizes of 6 months to capture both correlation and serial
correlation aspects of the (joint) time series.\footnote{\color{black}
It appears that the actual code in \cite{YoonTimeGAN2019} implements the following steps: (i) takes as input
actual price data, (ii) forms rolling blocks of price data and (iii) forms
a single synthetic price path (which is the same length as the original path)
by randomly sampling (without replacement) from the set of rolling blocks.
Step (iii) corresponds to the non-overlapping block bootstrap using a 
fixed block size.
This should be contrasted with stationary block bootstrap resampling of \cite{politis1994}.
Step (i) does not make sense as input to a bootstrap technique, since
the data set is about 10 years long, with an initial price of
\$50 and a final price of \$1200.  We therefore changed Step (i), so that all data was converted to returns prior to being used as input.
}
Once trained, the
TimeGAN is then used to generate a set of $n=10^{6}$ paths of synthetic
asset returns, which is used as the training data set to train the
NNs corresponding to the MCV, MSemiV and OSQ-optimal investment strategies. 

Figure \ref{fig: Num_03_Heatmaps} illustrates the resulting optimal
investment strategies, and we observe that the MSemiV-optimal strategy
is fundamentally different from the MCV and OSQ-optimal strategies,
while featuring elements of both. Specifically, Figure \ref{fig: Num_03_PDFs_CDFs},
which illustrates the resulting optimal terminal wealth distributions
(with the same expectation), demonstrates that the MSemiV strategy,
like the MCV strategy, can offer better downside protection than the
OSQ strategy, while the MSemiV strategy retains some of the qualitative
elements of the OSQ distribution such as the left skew. 

Having illustrated that the MSemiV problem can be solved in a dynamic
trading setting using the proposed NN approach to obtain investment
strategies that offer potentially valuable characteristics, we leave
a more in-depth investigation of the properties and applications of
MSemiV-optimal strategies for future work. 

\noindent 
\begin{figure}[!tbh]
\noindent \begin{centering}
\subfloat[$MCV\left(\rho=\rho_{mcv}\right)$]{\includegraphics[scale=0.6]{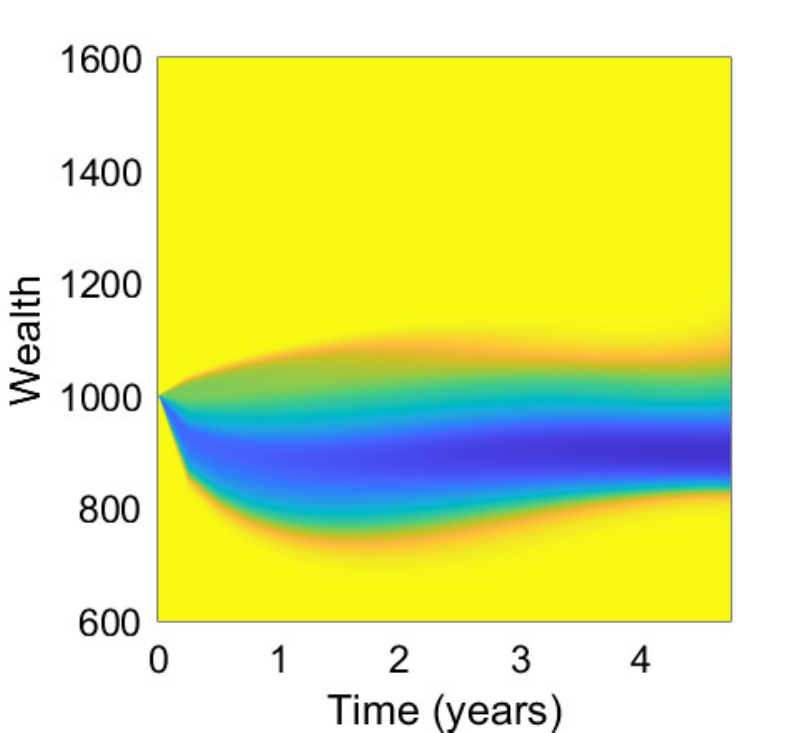}

}\subfloat[$MSemiV\left(\rho=\rho_{msv}\right)$]{\includegraphics[scale=0.6]{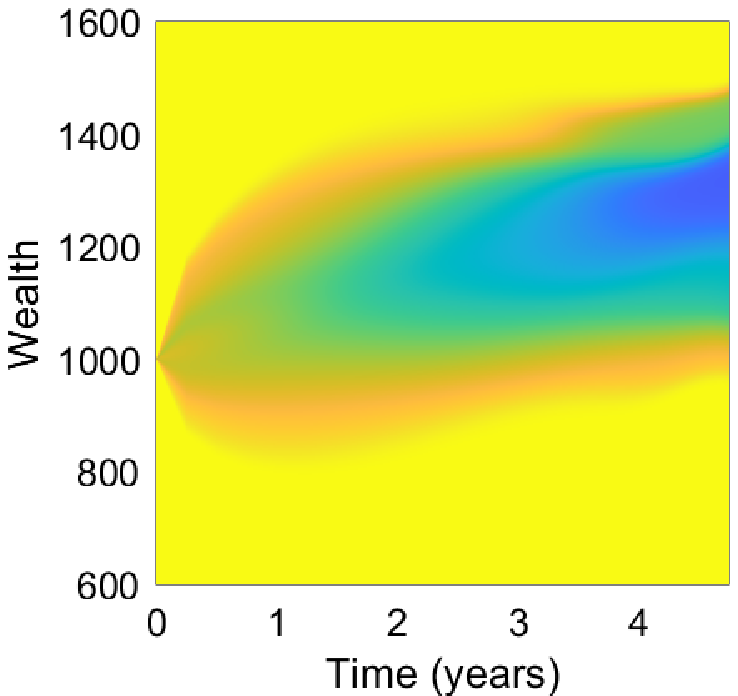}

}\subfloat[$OSQ\left(\gamma=\gamma_{osq}\right)$]{\includegraphics[scale=0.6]{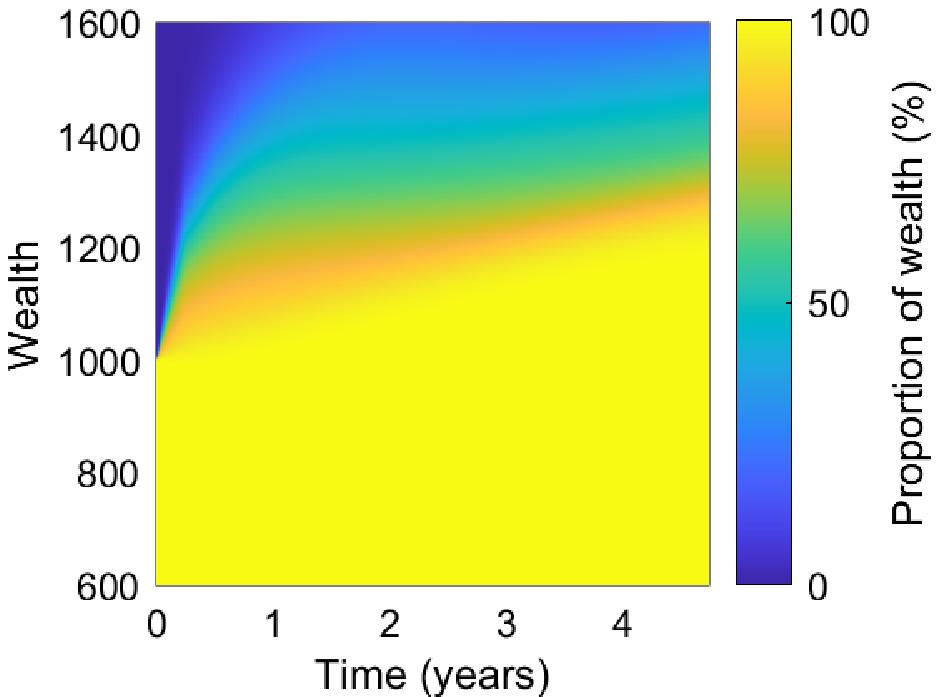}

}
\par\end{centering}
\caption{Optimal investment strategies for the $MCV\left(\rho=\rho_{mcv}\right)$,
$MSemiV\left(\rho=\rho_{msv}\right)$, and $OSQ\left(\gamma=\gamma_{osq}\right)$
strategies, obtaining identical expectation of terminal wealth on
the training data set. Each figure shows the proportion of wealth
invested in the broad equity market index as a function of the minimal
NN features, namely time and available wealth. \label{fig: Num_03_Heatmaps}}
\end{figure}

\noindent 
\begin{figure}[!tbh]
\noindent \begin{centering}
\subfloat[PDFs of $W^{\ast}\left(T\right)$]{\includegraphics[scale=0.72]{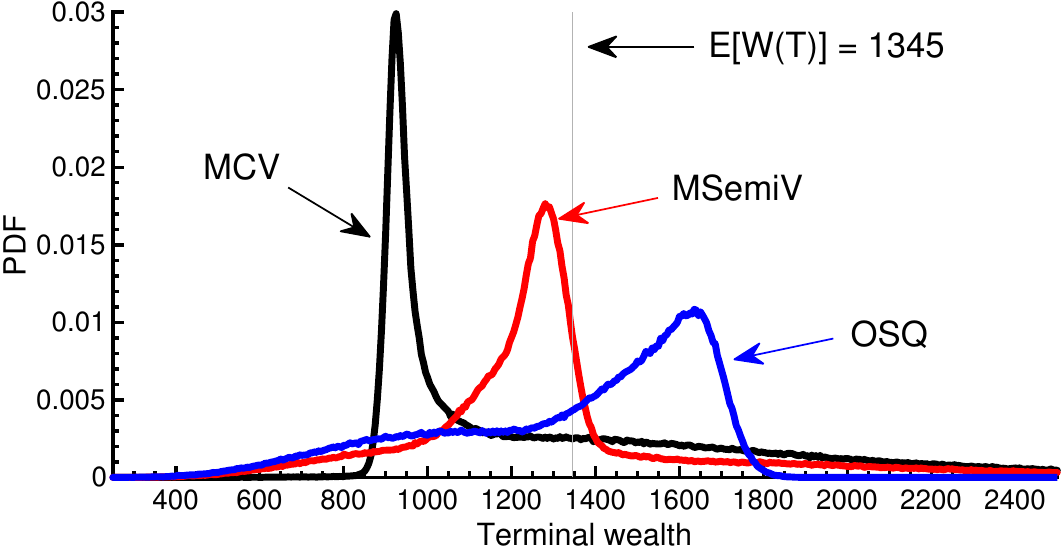}

}$\qquad$\subfloat[CDFs of $W^{\ast}\left(T\right)$]{\includegraphics[scale=0.72]{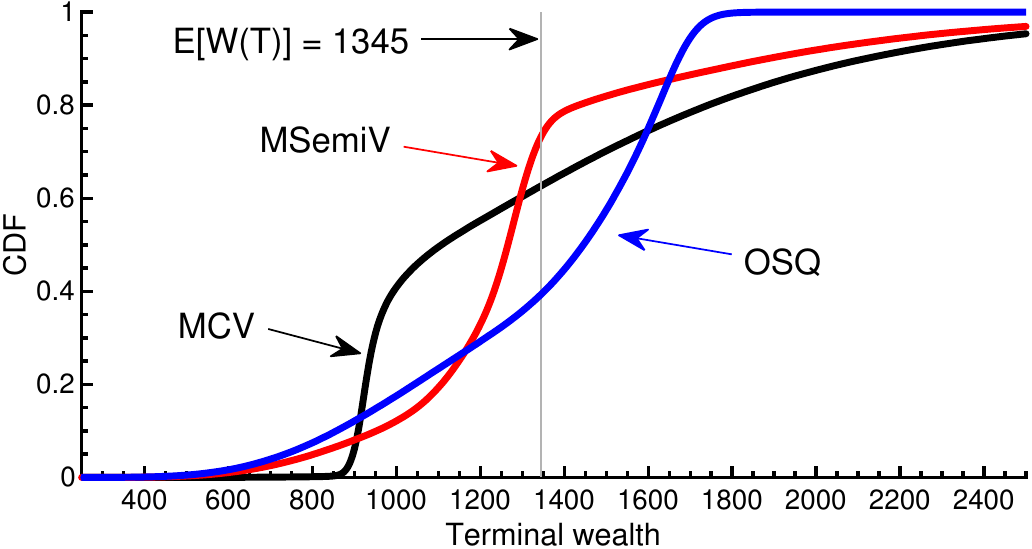}

}
\par\end{centering}
\caption{PDFs and CDFs of optimal terminal wealth obtained under the $MCV\left(\rho=\rho_{mcv}\right)$,
$MSemiV\left(\rho=\rho_{msv}\right)$, and $OSQ\left(\gamma=\gamma_{osq}\right)$
strategies, where the values of $\rho_{mcv}$, $\rho_{msv}$ and $\gamma_{osq}$
are selected to obtain the same expected value of optimal terminal
wealth on the NN training data set. \label{fig: Num_03_PDFs_CDFs}}
\end{figure}

\section{Conclusion\label{sec:Conclusion}}

In this paper, we presented a flexible NN approach, which does not
rely on dynamic programming techniques, to solve a large class of
dynamic portfolio optimization problems. In the proposed approach,
a single optimization problem is solved, issues of instability and
error propagation involved in estimating high-dimensional conditional
expectations are avoided, and the resulting NN is parsimonious in
the sense that the number of parameters does not scale with the number
of rebalancing events. 

We also presented theoretical convergence analysis results which show
that the numerical solution obtained using the proposed approach can
recover the optimal investment strategy, provided it exists, regardless
of whether the resulting optimal investment strategy is time-consistent
or (formally) time-inconsistent. 

Numerical results confirmed the advantages of the NN approach, and
showed that accurate results can be obtained in ground truth analyses
in a variety of settings. The numerical results also highlighted that
the approach remains agnostic as to the underlying data generating
assumptions, so that for example empirical asset returns or synthetic
asset returns can be used without difficulty. 

We conclude by noting that the NN approach is not necessarily limited
to portfolio optimization problems such a those encountered during
the accumulation phase of pension funds, and could be extended to
address the significantly more challenging problems encountered during
the decumulation phase of defined contribution pension funds (see
for example \cite{Forsyth2020DC}). We leave this extension for future
work.

\section{Declarations}
The authors have no competing interests to declare that are relevant to the content of this article. P.A. Forsyth's work was supported by the Natural Sciences and Engineering Research Council of Canada (NSERC) grant RGPIN-2017-03760.

\noindent \setlength{\bibsep}{1pt plus 0.3ex} 
\small

\bibliographystyle{mynatbib}
\bibliography{References_v26}

\noindent \begin{appendices}

\normalsize

\section{NN approach: technical details and analytical results\label{sec:Appendix A:-Technical details and analytical results}}

In this appendix, additional analytical results, relating to the convergence
analysis presented in Section \ref{sec:Convergence-analysis}, are
presented.

\subsection{NN structural assumptions\label{subsec: Appendix NN-structural-assumptions}}

In this section, we discuss the NN structural assumptions. First,
we introduce the necessary notation - for a more detailed treatment
of NNs, see for example \cite{GoodfellowEtAl_BOOK}. Consider a fully-connected,
feed-forward NN $\boldsymbol{f}_{n}$ with $\mathcal{L}^{h}\geq1$
hidden layers. The NN layers are indexed by $\ell\in\left\{ 0,...,\mathcal{L}\right\} $,
where $\ell=0$ and $\ell=\mathcal{L}^{h}+1\equiv\mathcal{L}$ denote
the input and output layers, respectively. Let $\eta_{n,\ell}\in\mathbb{N}$
denote the number of nodes in layer $\ell$ of $\boldsymbol{f}_{n}$.
With the exception of the input layer, each layer $\ell\in\left\{ 1,...,\mathcal{L}\right\} $
is associated with a weights matrix $\boldsymbol{x}_{n}^{\left[\ell\right]}\in\mathbb{R}^{\eta_{n,\ell}\times\eta_{n,\ell-1}}$
into the layer, an optional bias vector $\boldsymbol{b}_{n}^{\left[\ell\right]}\in\mathbb{R}^{\eta_{n,\ell}}$,
as well as an activation function $\boldsymbol{\mathfrak{a}}_{n}^{\left[\ell\right]}:\mathbb{R}^{\eta_{\ell}}\rightarrow\mathbb{R}^{\eta_{\ell}}$
which is applied to the weighted inputs into the layer.

The parameter vector of the NN $\boldsymbol{f}_{n}$, which consists
of all weights and biases, is denoted by $\boldsymbol{\theta}_{n}\in\mathbb{R}^{\nu_{n}}$,
where $\nu_{n}\in\mathbb{N}$ denotes the total number of weights
and biases. In other words, the weights matrices $\left\{ \boldsymbol{x}_{n}^{\left[\ell\right]}:\ell=1,...,\mathcal{L}\right\} $
and optional bias vectors $\left\{ \boldsymbol{b}_{n}^{\left[\ell\right]}:\ell=1,...,\mathcal{L}\right\} $
are transformed into a single vector $\boldsymbol{\theta}_{n}=\left(\theta_{1},...,\theta_{\nu_{n}}\right)$,
where each $\theta_{n,i}\in\boldsymbol{\theta}_{n}$ can be uniquely
mapped to a single weight or bias in some layer.

Note that no activation function is applied at the input layer ($\ell=0$),
so that the $\eta_{0}\equiv\eta_{n,0}$ output values of the input
layer corresponds to feature (input) vector of the NN, which will
be denoted by $\boldsymbol{\phi}\in\mathbb{R}^{\eta_{0}}$. Recalling
that $\eta_{\mathcal{L}}\equiv\eta_{n,\mathcal{L}}$ is the number
of nodes in the output layer ($\ell=\mathcal{L}$) and setting the
bias vectors $\boldsymbol{b}_{n}^{\left[\ell\right]}\equiv\boldsymbol{0}$
for convenience, the NN can therefore be written as a single function
$\boldsymbol{f}_{n}\left(\boldsymbol{\phi};\boldsymbol{\theta}_{n}\right):\mathbb{R}^{\eta_{0}}\rightarrow\mathbb{R}^{\eta_{\mathcal{L}}}$,
where 
\begin{eqnarray}
\boldsymbol{f}_{n}\left(\boldsymbol{\phi};\boldsymbol{\theta}_{n}\right) & \coloneqq & \left(f_{n,1}\left(\boldsymbol{\phi};\boldsymbol{\theta}_{n}\right),...,f_{n,\eta_{\mathcal{L}}}\left(\boldsymbol{\phi};\boldsymbol{\theta}_{n}\right)\right),\qquad\boldsymbol{\phi}\in\mathbb{R}^{\eta_{0}},\boldsymbol{\theta}_{n}\in\mathbb{R}^{\nu_{n}}\label{eq: NN as a single function}
\end{eqnarray}
We highlight that the output of the $i$th node in the output layer
is given by $f_{n,i}\left(\boldsymbol{\phi};\boldsymbol{\theta}_{n}\right)=\mathfrak{a}_{n,i}^{\left[\mathcal{L}\right]}$.

Given this standard fully-connected, feedforward NN formulation, we
introduce the following NN structural assumption. 

\begin{assumption} \label{assu: Appendix NN structure assumptions}(NN
structure) Let $\boldsymbol{f}_{n}\left(\cdot;\boldsymbol{\theta}_{n}\right),n\in\mathbb{N}$,
be a sequence of fully-connected feedforward neural networks, and
let $\hbar\left(n\right),n\in\mathbb{N}$ be a monotonically increasing
sequence (i.e. $\hbar\left(n\right)<\hbar\left(n+1\right)$, $\forall n\in\mathbb{N}$)
such that $\lim_{n\rightarrow\infty}\hbar\left(n\right)=\infty$.
For each $n\in\mathbb{N}$, the NN $\boldsymbol{f}_{n}$ is constructed
to satisfy the following structural assumptions.

\begin{enumerate}[label=(\roman*)]

\item The number of hidden layers $\mathcal{L}^{h}\geq1$ ($\mathcal{L}^{h}\in\mathbb{N}$)
remains fixed for all $n\in\mathbb{N}$. For notational simplicity,
we assume that each of the $\mathcal{L}^{h}$ hidden layers of the
NN $\boldsymbol{f}_{n}$ has the same number $\hbar\left(n\right)$
of hidden nodes,
\begin{eqnarray}
\eta_{n,\ell} & \equiv & \hbar\left(n\right),\qquad\forall\ell=1,...,\mathcal{L}-1,\quad\textrm{ for some }\hbar\left(n\right)\in\mathbb{N}.\label{eq: nr of hidden nodes in each hidden layer}
\end{eqnarray}

\item For convenience, we assume that the sigmoid activation function
$\sigma^{h}$ is applied at each hidden node,
\begin{eqnarray}
\sigma^{h}\left(y\right)=\frac{1}{1+e^{-y}} & \equiv & \mathfrak{a}_{n,i}^{\left[\ell\right]}\left(y\right),\quad\textrm{where }y=\left(\sum_{k=1}^{\eta_{n,\ell-1}}x_{n,ik}^{\left[\ell\right]}\mathfrak{a}_{n,k}^{\left[\ell-1\right]}\right)+b_{n,i}^{\left[\ell\right]},\label{eq: Hidden activation}
\end{eqnarray}
for all $\ell=1,...,\mathcal{L}^{h}$ and $i=1,...,\hbar\left(n\right)$.
Note that in principle, any of the popular activation functions can
be used instead of (\ref{eq: Hidden activation}), with minor modifications
to the theoretical analysis presented in this paper.

\item The NN $\boldsymbol{f}_{n}$ has $\eta_{0}=\eta_{X}+1\equiv\eta_{n,0}$
input nodes (i.e. the number of input nodes are independent of $n\in\mathbb{N}$),
with feature (input) vectors $\boldsymbol{\phi}\in\mathbb{R}^{\eta_{0}}$
of the form
\begin{eqnarray}
\boldsymbol{\phi}\coloneqq\boldsymbol{\phi}\left(t\right) & \coloneqq & \left(t,\boldsymbol{X}\left(t\right)\right)\in\mathcal{D}_{\boldsymbol{\phi}}\subseteq\mathbb{R}^{\eta_{X}+1},\qquad\textrm{with }\boldsymbol{X}\left(t\right)=\left(W\left(t^{+}\right),\hat{\boldsymbol{X}}\left(t\right)\right),\label{eq: Feature vectors NN}
\end{eqnarray}
where $W\left(t^{+}\right)$ denotes the wealth available for investment
at time $t$ after any contributions to the portfolio at time $t$,
while $\hat{\boldsymbol{X}}\left(t\right)$ denotes a vector of additional
information taken into account by the investment strategy. We emphasize
that (\ref{eq: Feature vectors NN}) clarifies that at time $t\in\left[t_{0},T\right]$,
at least time $t$ itself and $W\left(t^{+}\right)$ are always assumed
to be inputs into the NN.

\item The NN $\boldsymbol{f}_{n}$ has $N_{a}=\eta_{n,\mathcal{L}}$
output nodes (i.e. the number of output nodes are independent of $n\in\mathbb{N}$),
with the output of node $i$, denoted by $f_{n,i}\left(\boldsymbol{\phi}\left(t\right);\boldsymbol{\theta}_{n}\right)$,
being associated with the proportion of available wealth $W\left(t^{+}\right)$
invested in asset $i\in\left\{ 1,...,N_{a}\right\} $ after rebalancing
the portfolio at time $t$.

\item The output layer ($\ell=\mathcal{L}=\mathcal{L}^{h}+1$) of
each NN $\boldsymbol{f}_{n}$ uses the softmax activation function
(see for example \cite{GaoPavel2018}). Therefore we have $\boldsymbol{\mathfrak{a}}_{n}^{\left[\mathcal{L}\right]}=\boldsymbol{\psi}:\mathbb{R}^{N_{a}}\rightarrow\mathbb{R}^{N_{a}}$,
where the $i$th component of $\boldsymbol{\psi}=\left(\psi_{i}:i=1,..,N_{a}\right)$
is given by
\begin{equation}
\psi_{i}=\mathfrak{a}_{n,i}^{\left[\mathcal{L}\right]}=\frac{\exp\left\{ z_{n,i}^{\left[\mathcal{L}\right]}\right\} }{\sum_{m=1}^{N_{a}}\exp\left\{ z_{n,m}^{\left[\mathcal{L}\right]}\right\} },\quad\textrm{where }z_{n,i}^{\left[\mathcal{L}\right]}=\sum_{k=1}^{N_{a}}x_{n,ik}^{\left[\mathcal{L}\right]}\mathfrak{a}_{n,k}^{\left[\mathcal{L}-1\right]}+b_{n,i}^{\left[\mathcal{L}\right]},\quad i=1,...,N_{a}.\label{eq: Softmax output}
\end{equation}

\end{enumerate}

\end{assumption}

For a given $n\in\mathbb{N}$, we define the set $\mathcal{N}_{n}$
as the set of all neural networks satisfying Assumption (\ref{assu: Appendix NN structure assumptions}),

\begin{equation}
\mathcal{N}_{n}=\left\{ \left.\boldsymbol{f}_{n}:\mathcal{D}_{\boldsymbol{\phi}}\rightarrow\mathcal{Z}\right|\boldsymbol{f}_{n}\left(\cdot;\boldsymbol{\theta}_{n}\right)\textrm{ satisfies Assumption }\ref{assu: Appendix NN structure assumptions}\textrm{ with }\hbar\left(n\right)\textrm{ nodes in each hidden layer}\right\} .\label{eq: Definition set N_n}
\end{equation}
In other words, each $\boldsymbol{f}_{n}\left(\cdot;\boldsymbol{\theta}_{n}\right)\in\mathcal{N}_{n}$
has the same number of hidden nodes $\hbar\left(n\right)$ in each
hidden layer, but a potentially different parameter vector $\boldsymbol{\theta}_{n}$
(i.e. different values associated with the weights and biases).

We make the following observations regarding Assumption \ref{assu: Appendix NN structure assumptions}:
\begin{itemize}
\item Any NN constructed to satisfy Assumption \ref{assu: Appendix NN structure assumptions}
will, for any input vector $\boldsymbol{\phi}\left(t\right)$, automatically
generate an output in the set $\mathcal{Z}$, hence the definition
(\ref{eq: Definition set N_n}) noting that $\boldsymbol{f}_{n}:\mathcal{D}_{\boldsymbol{\phi}}\rightarrow\mathcal{Z}$.
In other words, the given constraints are automatically satisfied.
However, different sets of constraints simply requires modifications
to the output activation, or post-processing of NN outputs, without
affecting the technical results. 
\item Note that further assumptions regarding the rate of at which the sequence
$\hbar\left(n\right)$ increases relative to that of the sequence
$\left\{ n\right\} _{n\in\mathbb{N}}$ will be introduced in the convergence
analysis of Section \ref{sec:Convergence-analysis} (see Assumption
\ref{assu: Appendix Convergence assumptions - computational}.
\item In practical applications, it is not necessary to consider a sequence
of NNs; instead, we will use a single NN $\boldsymbol{f}_{\tilde{n}}$
with $\hbar\left(\tilde{n}\right)$ hidden nodes in each of the hidden
layers to get a reasonable trade-off between accuracy and computational
efficiency. However, we emphasize that any such $\boldsymbol{f}_{\tilde{n}}$
is still constructed to satisfy Assumption \ref{assu: Appendix NN structure assumptions}.
\end{itemize}

\subsection{Assumptions for convergence analysis\label{subsec: Appendix assumptions for convergence}}

Assumption \ref{assu: Appendix Convergence assumptions - critical}
introduces the main assumptions used in rigorously justifying the
approximation (\ref{eq: NN as control}) and therefore to prove Theorem
\ref{thm: Validity of NN approximation}.

\begin{assumption} \label{assu: Appendix Convergence assumptions - critical}
(Convergence analysis: NN approximation to control) To establish the
validity of the NN approximation to the control, we make the following
assumptions:

\begin{enumerate}[label=(\roman*)]

\item The optimal investment strategy (or control) satisfies Assumption
\ref{assu: Existence and Continuity of the control}.

\item The functions $F$ and $G$ in the objective functional $J\left(\boldsymbol{p},\xi;t_{0},w_{0}\right)$
(see (\ref{eq: Objective functional with continuous control p}))
are continuous, and $\xi\rightarrow F\left(\cdot,\xi\right)$ and
$\xi\rightarrow G\left(\cdot,\cdot,\cdot,\xi\right)$ are convex for
any admissible strategy $\boldsymbol{p}\in C\left(\mathcal{D}_{\boldsymbol{\phi}},\mathcal{Z}\right)$.
Note that in for example the Mean - Conditional Value-at-Risk problem
(\ref{eq: MCV objective}) where there is an inner and outer optimization
problem, this assumption is standard in computational settings (\cite{Forsyth2019CVaR}).

\item The NN approximation $\eqref{eq: NN as control}$ of the investment
strategy $\boldsymbol{p}\in C\left(\mathcal{D}_{\boldsymbol{\phi}},\mathcal{Z}\right)$
is implemented by a NN $\boldsymbol{f}_{n}\left(\cdot;\boldsymbol{\theta}_{n}\right)\in\mathcal{N}_{n}$,
where $\mathcal{N}_{n}$ is given by (\ref{eq: Definition set N_n}).
In other words, each approximating NN in the sequence of NNs $\boldsymbol{f}_{n},n\in\mathbb{N}$
is constructed according to Assumption \ref{assu: Appendix NN structure assumptions}. 

\end{enumerate}

\end{assumption}

Note that Assumption \ref{assu: Appendix Convergence assumptions - critical}(iii)
specifically requires that Assumption \ref{subsec: Appendix NN-structural-assumptions}
is satisfied, so each $\boldsymbol{f}_{n},n\in\mathbb{N}$, has $\hbar\left(n\right)$
nodes in each hidden layer, where we recall that the sequence $\hbar\left(n\right),n\in\mathbb{N},$
is monotonically increasing and satisfies $\hbar\left(n\right)\rightarrow\infty$
as $n\rightarrow\infty$. However, we make no further assumptions
yet regarding the form of $n\rightarrow\hbar\left(n\right)$. 

For ease of exposition, we introduce Assumption \ref{assu: Convergence assumptions - convenience}
below. We emphasize that Assumption \ref{assu: Convergence assumptions - convenience}
is purely for the sake of convenience, with Remark \ref{rem: Appendix Relaxing Convergence assumptions - convenience}
below discussing briefly how each component of Assumption \ref{assu: Convergence assumptions - convenience}
can be relaxed with only minor (but tedious and notationally demanding)
modifications to the subsequent proofs.

\begin{assumption} \label{assu: Convergence assumptions - convenience}
(Convergence analysis: Assumptions for ease of exposition) For convenience,
we introduce the following assumptions which can be relaxed without
difficulty, as discussed in Remark \ref{rem: Appendix Relaxing Convergence assumptions - convenience}
below. 

\begin{enumerate}[label=(\roman*)]

\item We assume that the optimal control $\boldsymbol{p}^{\ast}$
as per Assumption \ref{assu: Existence and Continuity of the control}
is a function of time and wealth only, i.e. $\boldsymbol{X}^{\ast}\left(t_{m}\right)=W^{\ast}\left(t_{m}^{+}\right)$
for each $t_{m}\in\mathcal{T}$ in (\ref{eq: Assumption CONTINUOUS control}).
As a result, we work with the minimal form of the NN feature vector
satisfying Assumption \ref{assu: Appendix NN structure assumptions}.
Specifically, in the subsequent results we will always assume that
$\boldsymbol{X}\left(t\right)=W\left(t^{+}\right)$, so that we will
consider feature vectors (\ref{eq: Feature vectors NN}) of the form
\begin{eqnarray}
\boldsymbol{\phi}\left(t\right) & = & \left(t,W\left(t^{+};\boldsymbol{\theta}_{n},\boldsymbol{Y}\right)\right)\in\mathcal{D}_{\boldsymbol{\phi}}\subseteq\mathbb{R}^{2}.\label{eq: Minimal feature vector}
\end{eqnarray}

\item The wealth process with dynamics given by (\ref{eq: W dynamics with NN as control})
remains bounded. In other words, we assume that there exists a value
$w_{max}>0$ such that
\begin{equation}
0\leq W\left(t;\boldsymbol{\theta}_{n},\boldsymbol{Y}\right)\leq w_{max}\quad\textrm{a.s.}\qquad\textrm{for all }t\in\left[t_{0},T\right],\boldsymbol{\theta}_{n}\in\mathbb{R}^{\nu_{n}},\label{eq: assumption boundedness of wealth}
\end{equation}
so that $\mathcal{D}_{\boldsymbol{\phi}}$ in (\ref{eq: Minimal feature vector})
satisfies
\begin{eqnarray}
\mathcal{D}_{\boldsymbol{\phi}} & = & \left[t_{0},T\right]\times\left[0,w_{max}\right].\label{eq: D_phi bounded domain}
\end{eqnarray}

\end{enumerate}

\end{assumption}

The following remark discusses how Assumption \ref{assu: Convergence assumptions - convenience}
can be relaxed. 

\noindent \begin{brem}\label{rem: Appendix Relaxing Convergence assumptions - convenience}(Relaxing
Assumption \ref{assu: Convergence assumptions - convenience}) As
noted above, Assumption \ref{assu: Convergence assumptions - convenience}
has been introduced for ease of exposition. We therefore briefly describe
how each element of element of Assumption \ref{assu: Convergence assumptions - convenience}
can be relaxed without difficulty. 

\noindent \begin{enumerate}[label=(\roman*)]

\item In the case where the state $\boldsymbol{X}^{\ast}\left(t_{m}\right)$
depends on variables in addition to the portfolio wealth, for example
historical returns or additional variables (see for example \cite{Forsyth2019CVaR,TsangWong2020}),
it is straightforward to incorporate these extra values without materially
impacting the key aspects of the convergence analysis. However, it
is essential that portfolio wealth is included in $\boldsymbol{X}^{\ast}\left(t_{m}\right)$
as per Assumption \ref{assu: Existence and Continuity of the control}.

\item The assumption of bounded wealth (\ref{eq: assumption boundedness of wealth})
is clearly practical, in that while it is undoubtedly true that the
entire wealth of the world is very large, it remains finite. However,
from a theoretical perspective, the only reason we introduce (\ref{eq: assumption boundedness of wealth})
is to ensure that, given the minimal form of the feature vector (\ref{eq: Minimal feature vector}),
the controls take inputs in a compact domain (\ref{eq: D_phi bounded domain}).
While boundedness assumptions can be relaxed without theoretical difficulty
using straightforward localization arguments (see for example \cite{HureEtAl2021,TsangWong2020}),
this simply introduces yet further notational complexity without providing
additional insights into the fundamental arguments underlying the
subsequent proofs.

\end{enumerate}

\noindent \qed\end{brem}

In the convergence analysis of Step 2 of the proposed approach, namely
the computational estimate of the optimal control obtained using (\ref{eq: V_n_HAT approx}),
we need to introduce some additional assumptions (Assumption \ref{assu: Appendix Convergence assumptions - computational}
below) since this step involves the training dataset $\mathcal{Y}_{n}$
of the NN and numerical solution of problem (\ref{eq: V_n_HAT approx}).

\begin{assumption} \label{assu: Appendix Convergence assumptions - computational}
(Convergence analysis: Computational estimate of optimal control)
We introduce the following assumptions:

\begin{enumerate}[label=(\roman*)]

\item The training data set $\mathcal{Y}_{n}=\left\{ \boldsymbol{Y}^{\left(j\right)}:j\in\left\{ 1,...,n\right\} \right\} $
used for training the NN (see (\ref{eq: Y training}) and associated
discussion) is constructed with independent joint asset return paths
$\boldsymbol{Y}^{\left(j\right)}\in\mathcal{Y}_{n}$. As noted before,
{\color{black} this does not assume that the joint asset returns along a given path are independent or serially independent.}

\item Number of nodes in each hidden layer $\hbar\left(n\right)$,$n\in\mathbb{N}$:
As $n\rightarrow\infty$ ($n\in\mathbb{N}$), in the case of one hidden
layer $\left(\mathcal{L}^{h}=1\right)$, we assume that $\hbar\left(n\right)=o\left(n^{1/4}\right)$.
For deeper NNs $\left(\mathcal{L}^{h}>1\right)$, we assume that $\hbar\left(n\right)=o\left(n^{1/6}\right)$. 

\item For each $n\in\mathbb{N}$, the optimization algorithm used
in solving problem (\ref{eq: V_n_HAT approx}) attains the minimum
$\left(\hat{\boldsymbol{\theta}}_{n}^{\ast},\hat{\xi}_{n}^{\ast}\right)\in\mathbb{R}^{\nu_{n}+1}$
corresponding to a given training data set $\mathcal{Y}_{n}$.

\end{enumerate}

\end{assumption}

Since stochastic gradient descent (SGD) is used in training the NN,
Assumption \ref{assu: Appendix Convergence assumptions - computational}(iii)
is very strong; however, it is a standard assumption in convergence
analyses in the literature (see for example \cite{HureEtAl2021,TsangWong2020})
in order to focus on the key aspects of a proposed approach. For detailed
treatments of theoretical aspects regarding optimization errors (i.e.
the differences between the attained values and the true minima) arising
when training NNs, the reader is referred to for example \cite{BeckEtAl2022,JentzenEtAl2021}.
Note that Assumption \ref{assu: Appendix Convergence assumptions - computational}(ii),
which can also be found in \cite{TsangWong2020}, is used to establish
a version of the law of large numbers that is applicable to our setting.

{\color{black}
\begin{brem}[Increase in number of training samples as $\hbar$ increases]
Informally, Assumption \ref{assu: Appendix Convergence assumptions - computational}(ii)
requires that the number of training samples $n$ grows  faster than
$O(\hbar^4)$ for $\mathcal{L}^{h}=1$ and $O(\hbar^6)$ for $\mathcal{L}^{h}>1$, where $\hbar$ is the number of nodes in each hidden layer.
Since we require a large $\hbar$ (number of nodes in each layer) for good function approximation, this would suggest
that convergence in terms of both function approximation and sampling error
requires a very large number of sample paths.
This would appear to result in a barrier to obtaining accurate results, for practical numbers of samples.  However, our numerical examples seem to 
produce solutions with reasonable errors, hence the requirements of
Assumption \ref{assu: Appendix Convergence assumptions - computational}(ii)
are probably not sharp.
Regardless, we can certainly expect that the number of samples should be significantly
increased as we increase $\hbar$.
\end{brem}
}

\subsection{Proof of Theorem \ref{thm: Validity of NN approximation}\label{subsec:Appendix Proof-of-Theorem NN approx}}

Before presenting the proof of Theorem \ref{thm: Validity of NN approximation},
we first prove some auxiliary results that are preliminary requirements
for the proof. 

We start with Lemma \ref{lem: Appendix Universal approximation},
which combines and applies selected universal approximation results
to our setting. The use of the notation $\boldsymbol{f}_{n}^{\ast}\left(\cdot,\boldsymbol{\theta}_{n}^{\ast}\right)\in\mathcal{N}_{n}$
in Lemma \ref{lem: Appendix Universal approximation}, which has been
defined in Subsection \ref{subsec:Step 1 NN approx} as the NN using
the optimal parameter vector consistent with problem (\ref{eq: V_n in terms of N_n})-(\ref{eq: V_n in terms of theta_n}),
will be become clear in the subsequent results.
\begin{lem}
\label{lem: Appendix Universal approximation}(Convergence to optimal
control) Suppose that Assumption \ref{assu: Appendix Convergence assumptions - critical}
and Assumption \ref{assu: Convergence assumptions - convenience}
hold. As per (\ref{eq: Assumption CONTINUOUS control}), let $\boldsymbol{p}^{\ast}=\left(p_{i}^{\ast}:i=1,...,N_{a}\right)\in C\left(\mathcal{D}_{\boldsymbol{\phi}},\mathcal{Z}\right)$
denote the optimal control associated with problem (\ref{eq: Original problem - CONTINUOUS control}).
Then there exists a sequence of neural networks, $\boldsymbol{f}_{n}^{\ast}\in\mathcal{N}_{n}$,
$n\in\mathbb{N}$, where each $\boldsymbol{f}_{n}^{\ast}=\left(f_{n,i}^{\ast}:i=1,...,N_{a}\right)$
has parameter vector $\boldsymbol{\theta}_{n}^{\ast}\in\mathbb{R}^{\nu_{n}}$,such
that 
\begin{eqnarray}
\lim_{n\rightarrow\infty}\sup_{\boldsymbol{\phi}\in\mathcal{D}_{\boldsymbol{\phi}}}\left|f_{n,i}^{\ast}\left(\boldsymbol{\phi};\boldsymbol{\theta}_{n}^{\ast}\right)-p_{i}^{\ast}\left(\boldsymbol{\phi}\right)\right| & = & 0,\qquad\forall i=1,...,N_{a},\label{eq: UAT optimal control}
\end{eqnarray}
\end{lem}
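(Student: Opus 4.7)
The plan is to invoke the classical universal approximation theorem (UAT) for sigmoid networks with affine output (see e.g.\ \cite{HornikEtAl1989}) and then transfer the approximation through the softmax output layer of $\mathcal{N}_n$, preceded by an $\epsilon$-regularization step that accommodates the possibility that the optimal control $\boldsymbol{p}^{\ast}$ attains the boundary of $\mathcal{Z}$. The key tension is that softmax maps into the relative interior $\mathrm{int}(\mathcal{Z})$ only, so boundary values of $\boldsymbol{p}^{\ast}$ (i.e.\ zero allocations to some assets) cannot be attained exactly by any $\boldsymbol{f}_n \in \mathcal{N}_n$; however, uniform approximation remains possible after smoothing.

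First I would define, for each $\epsilon \in (0,1)$, the smoothed target
\[
\boldsymbol{p}^{\ast}_{\epsilon}(\boldsymbol{\phi}) \;=\; (1-\epsilon)\,\boldsymbol{p}^{\ast}(\boldsymbol{\phi}) \;+\; (\epsilon/N_a)\,\boldsymbol{1},
\]
which lies in $C(\mathcal{D}_{\boldsymbol{\phi}}, \mathrm{int}(\mathcal{Z}))$ with each component bounded below by $\epsilon/N_a > 0$, and satisfies $\sup_{\boldsymbol{\phi}\in\mathcal{D}_{\boldsymbol{\phi}}}\|\boldsymbol{p}^{\ast}_{\epsilon}(\boldsymbol{\phi}) - \boldsymbol{p}^{\ast}(\boldsymbol{\phi})\|_\infty \to 0$ as $\epsilon \downarrow 0$. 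The logit map $\boldsymbol{\ell}_{\epsilon} := \log \boldsymbol{p}^{\ast}_{\epsilon}$ (taken componentwise) is then continuous and uniformly bounded on the compact domain $\mathcal{D}_{\boldsymbol{\phi}}$ (Assumption \ref{assu: Convergence assumptions - convenience}(ii)), and satisfies $\boldsymbol{\psi}(\boldsymbol{\ell}_{\epsilon}) = \boldsymbol{p}^{\ast}_{\epsilon}$, where $\boldsymbol{\psi}$ denotes the softmax defined in (\ref{eq: Softmax output}).

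Next I would apply the UAT to the sub-network of any $\boldsymbol{f}_n \in \mathcal{N}_n$ mapping the input $\boldsymbol{\phi}$ to the output-layer pre-activations $\boldsymbol{z}_n^{[\mathcal{L}]}$: this sub-network is fully-connected feed-forward with sigmoid hidden activations and affine output, so componentwise density in $C(\mathcal{D}_{\boldsymbol{\phi}}, \mathbb{R}^{N_a})$ holds (and extends to $\mathcal{L}^h > 1$ since a network of depth $\mathcal{L}^h$ can realise any shallower sigmoid network by configuring the additional hidden layers as near-identities on a compact input range). Hence, for each fixed $\epsilon > 0$, there exist parameter vectors $\boldsymbol{\theta}_n^{\epsilon}$ with $\sup_{\boldsymbol{\phi}}\|\boldsymbol{z}_n^{[\mathcal{L}]}(\boldsymbol{\phi};\boldsymbol{\theta}_n^{\epsilon}) - \boldsymbol{\ell}_{\epsilon}(\boldsymbol{\phi})\|_\infty \to 0$ as $\hbar(n) \to \infty$. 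Because $\boldsymbol{\ell}_{\epsilon}$ is bounded, the approximating pre-activations lie in a fixed compact $K_{\epsilon} \subset \mathbb{R}^{N_a}$ for all sufficiently large $n$, on which softmax is Lipschitz; applying $\boldsymbol{\psi}$ therefore yields $\sup_{\boldsymbol{\phi}}\|\boldsymbol{f}_n(\boldsymbol{\phi};\boldsymbol{\theta}_n^{\epsilon}) - \boldsymbol{p}^{\ast}_{\epsilon}(\boldsymbol{\phi})\|_\infty \to 0$. A standard diagonal extraction then selects $\epsilon_n \downarrow 0$ slowly enough that the networks $\boldsymbol{f}_n^{\ast} := \boldsymbol{f}_n(\cdot;\boldsymbol{\theta}_n^{\epsilon_n}) \in \mathcal{N}_n$ deliver the required uniform convergence (\ref{eq: UAT optimal control}) to $\boldsymbol{p}^{\ast}$.

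The principal obstacle is the softmax output: the classical UAT produces density for sigmoid networks with \emph{linear} output, whereas $\mathcal{N}_n$ enforces softmax outputs valued in $\mathrm{int}(\mathcal{Z})$. The smoothing step resolves this, at the cost that $\|\boldsymbol{\ell}_{\epsilon}\|_\infty \to \infty$ as $\epsilon \downarrow 0$ whenever $\boldsymbol{p}^{\ast}$ has a zero component; consequently the hidden-layer width $\hbar(n)$ required to reach a prescribed logit-level error depends on $\epsilon$, which is precisely why the final passage to the limit must be carried out by a diagonal argument coupling $n \to \infty$ with $\epsilon_n \downarrow 0$.
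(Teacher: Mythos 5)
Your proposal is correct, but it reaches the conclusion by a genuinely different route in its second half. Both arguments begin identically: strip the softmax off, observe that the resulting sigmoid network with affine output is covered by the classical density results of \cite{HornikEtAl1989}, and then transfer the approximation back through the output layer. Where the paper diverges is in how that transfer is done: it treats the softmax abstractly as a continuous surjection with a continuous right-inverse and invokes the non-Euclidean universal approximation theorem of \cite{KratsiosBilokopytov2020} to conclude that $\mathcal{N}_{\infty}$ is uniformly dense in $C\left(\mathcal{D}_{\boldsymbol{\phi}},\mathcal{Z}\right)$. You instead give an explicit, self-contained construction: perturb $\boldsymbol{p}^{\ast}$ into the relative interior of $\mathcal{Z}$, take logits, approximate the (bounded, continuous) logits with the pre-activation sub-network, push through the softmax using its Lipschitz property on compacta, and diagonalize over $\epsilon_n\downarrow 0$. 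Your route is more elementary and, notably, confronts directly the point that softmax only attains $\mathrm{int}(\mathcal{Z})$ — a point the paper's proof passes over rather quickly when it describes the softmax as surjective with right-inverse $\left(\log z_i\right)_i$, which is only defined for $z_i\in(0,1)$; since the lemma asks only for uniform approximation rather than exact attainment of boundary values, both treatments are sound, but yours makes the resolution visible. The one small detail you leave implicit is that, having found an approximating network of some width $\hbar(n_{\epsilon})$, you must realize the same function at every larger width $\hbar(n)$, $n\geq n_{\epsilon}$, in order to produce a sequence with $\boldsymbol{f}_{n}^{\ast}\in\mathcal{N}_{n}$ for every $n$; this is done, as in the paper, by padding the additional hidden nodes with zero weights and biases, and is worth stating explicitly so that the diagonal extraction is well defined.
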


\begin{proof}
For ease of reference, recall that we have defined $\mathcal{N}_{n}$
as the set of NNs with $\hbar\left(n\right)$ hidden nodes in each
of the (fixed number of) $\mathcal{L}^{h}\geq1$ hidden layers, constructed
according to Assumption \ref{subsec: Appendix NN-structural-assumptions},
\begin{equation}
\mathcal{N}_{n}=\left\{ \left.\boldsymbol{f}_{n}:\mathcal{D}_{\boldsymbol{\phi}}\rightarrow\mathcal{Z}\right|\boldsymbol{f}_{n}\left(\cdot;\boldsymbol{\theta}_{n}\right)\textrm{ satisfies Assumption }\ref{assu: Appendix NN structure assumptions}\textrm{ with }\hbar\left(n\right)\textrm{ nodes in each hidden layer}\right\} .\label{eq: Appendix Definition set N_n}
\end{equation}

Consider another sequence of NNs, $\overset{\circ}{\boldsymbol{f}}_{n},n\in\mathbb{N}$,
where each $\overset{\circ}{\boldsymbol{f}}_{n}:\mathcal{D}_{\boldsymbol{\phi}}\rightarrow\mathbb{R}^{N_{a}}$
is structurally identical to the corresponding $\boldsymbol{f}_{n}\in\mathcal{N}_{n}$
in terms of Assumption \ref{assu: Appendix NN structure assumptions},
\textit{except} that $\overset{\circ}{\boldsymbol{f}}_{n}$ uses the
identity as the (linear) output activation function. Specifically,
we assume that $\overset{\circ}{\boldsymbol{f}}_{n}$ does not apply
the activation (\ref{assu: Appendix NN structure assumptions}) at
its output layer, but instead replaces (\ref{assu: Appendix NN structure assumptions})
with $\overset{\circ}{\boldsymbol{\mathfrak{a}}}_{n}^{\left[\mathcal{L}\right]}=\left(\overset{\circ}{\mathfrak{a}}_{n,i}^{\left[\mathcal{L}\right]}:i=1,...,N_{a}\right):\mathbb{R}^{N_{a}}\rightarrow\mathbb{R}^{N_{a}}$
where
\begin{eqnarray}
\overset{\circ}{\mathfrak{a}}_{n,i}^{\left[\mathcal{L}\right]}\left(\boldsymbol{z}_{n}^{\left[\mathcal{L}\right]}\right) & = & z_{n,i}^{\left[\mathcal{L}\right]}=\sum_{k=1}^{N_{a}}x_{n,ik}^{\left[\mathcal{L}\right]}\mathfrak{a}_{n,k}^{\left[\mathcal{L}-1\right]}+b_{n,i}^{\left[\mathcal{L}\right]},\qquad\forall i=1,...,N_{a}.\label{eq: Linear identity output activation}
\end{eqnarray}
For any given $n\in\mathbb{N}$, the relationship between $\boldsymbol{f}_{n}$
and $\overset{\circ}{\boldsymbol{f}}_{n}$ are illustrated in Figure
\ref{fig: NN diagram for proof}. Note that the entire parameter vector
$\boldsymbol{\theta}_{n}$ of $\boldsymbol{f}_{n}$ is inherited by
$\overset{\circ}{\boldsymbol{f}}_{n}$, since all the weights, biases,
and hidden layers and nodes of $\overset{\circ}{\boldsymbol{f}}_{n}$
and $\boldsymbol{f}_{n}$ are identical. As a result, we define the
set $\overset{\circ}{\mathcal{N}}_{n}$
\begin{eqnarray}
\overset{\circ}{\mathcal{N}}_{n} & = & \left\{ \left.\overset{\circ}{\boldsymbol{f}}_{n}:\mathcal{D}_{\boldsymbol{\phi}}\rightarrow\mathbb{R}^{N_{a}}\right|\overset{\circ}{\boldsymbol{f}}_{n}\left(\cdot;\boldsymbol{\theta}_{n}\right)\textrm{ satisfies Assumption }\ref{assu: Appendix NN structure assumptions},\textrm{ except }\right.\nonumber \\
 &  & \left.\qquad\qquad\qquad\qquad\textrm{ output activation \eqref{eq: Softmax output} is replaced by }\eqref{eq: Linear identity output activation}.\right\} ,\label{eq: Appendix definition set N_n_circle}
\end{eqnarray}
where we note that the outputs of $\overset{\circ}{\boldsymbol{f}}_{n}$
take values which are no longer in $\mathcal{Z}\subset\mathbb{R}^{N_{a}}$,
but instead merely in $\mathbb{R}^{N_{a}}$. The main benefit of working
with $\overset{\circ}{\boldsymbol{f}}_{n}\in\overset{\circ}{\mathcal{N}}_{n}$
instead of $\boldsymbol{f}_{n}\in\mathcal{N}_{n}$, is that the linear
output layer (\ref{eq: Linear identity output activation}) means
that each $\overset{\circ}{\boldsymbol{f}}_{n}\in\overset{\circ}{\mathcal{N}}_{n}$
is in the standard form used by most universal approximation theorems
for NNs (see for example \cite{LeshnoEtAl1993,Funahashi1989,HornikEtAl1989,Hornik1991}).

Recalling for convenience the definition of the softmax function $\boldsymbol{\psi}=\left(\psi_{i}:i=1,..,N_{a}\right):\mathbb{R}^{N_{a}}\rightarrow\mathbb{R}^{N_{a}}$
in (\ref{eq: Softmax output}),
\begin{eqnarray}
\psi_{i}\left(\boldsymbol{y}\right) & = & \frac{\exp\left\{ y_{i}\right\} }{\sum_{i=1}^{N_{a}}\exp\left\{ y_{j}\right\} },\qquad\forall\boldsymbol{y}=\left(y_{i}:i=1,..,N_{a}\right)\in\mathbb{R}^{N_{a}},\label{eq: Appendix softmax}
\end{eqnarray}
we therefore observe that for any $n\in\mathbb{N}$, the NN $\boldsymbol{f}_{n}\left(\cdot;\boldsymbol{\theta}_{n}\right)\in\mathcal{N}_{n}$
can be expressed as a transformation of the corresponding NN $\overset{\circ}{\boldsymbol{f}}_{n}\left(\cdot;\boldsymbol{\theta}_{n}\right)\in\overset{\circ}{\mathcal{N}}_{n}$,
provided both NNs use the same parameter vector $\boldsymbol{\theta}_{n}\in\mathbb{R}^{\nu_{n}}$:
\begin{align}
\boldsymbol{f}_{n}\left(\cdot;\boldsymbol{\theta}_{n}\right)= & \boldsymbol{\psi}\circ\overset{\circ}{\boldsymbol{f}}_{n}\left(\cdot;\boldsymbol{\theta}_{n}\right),\qquad\textrm{where }\overset{\circ}{\boldsymbol{f}}_{n}\left(\cdot;\boldsymbol{\theta}_{n}\right)\in\overset{\circ}{\mathcal{N}}_{n}.\label{eq: f_n in terms of f_n_circle}
\end{align}

As per Assumption \ref{assu: Appendix NN structure assumptions},
recall that $\hbar\left(n\right),n\in\mathbb{N}$ satisfies $\hbar\left(n\right)<\hbar\left(n+1\right),\forall n\in\mathbb{N}$
such that $\lim_{n\rightarrow\infty}\hbar\left(n\right)=\infty$.
Inspired by the notation of \cite{Hornik1991}, we define the sets
$\mathcal{N}_{\infty}$ and $\mathcal{\overset{\circ}{\mathcal{N}}}_{\infty}$
as the sets of NNs constructed according to (\ref{eq: Appendix Definition set N_n})
and (\ref{eq: Appendix definition set N_n_circle}), respectively,
but with an arbitrarily large number of hidden nodes, 
\begin{equation}
\mathcal{N}_{\infty}=\bigcup_{n\in\mathbb{N}}\mathcal{N}_{n},\qquad\textrm{and }\qquad\overset{\circ}{\mathcal{N}}_{\infty}=\bigcup_{n\in\mathbb{N}}\mathcal{\overset{\circ}{\mathcal{N}}}_{n}.\label{eq: N_inf and N_inf_circle}
\end{equation}

Since $\mathcal{D}_{\boldsymbol{\phi}}\subset\mathbb{R}^{\eta_{X}+1}$
is compact by (\ref{eq: D_phi bounded domain}) as per Assumption
\ref{assu: Convergence assumptions - convenience} (note that this
requirement can be relaxed without difficulty as discussed in Remark
\ref{rem: Appendix Relaxing Convergence assumptions - convenience}),
we know by the results of \cite{Hornik1991,HornikEtAl1989} that $\overset{\circ}{\mathcal{N}}_{\infty}$
is uniformly dense in $C\left(\mathcal{D}_{\boldsymbol{\phi}},\mathbb{R}^{N_{a}}\right)$.
In other words, for any function $\overset{\circ}{\boldsymbol{g}}=\left(\overset{\circ}{g}_{i}:i=1,...,N_{a}\right)\in C\left(\mathcal{D}_{\boldsymbol{\phi}},\mathbb{R}^{N_{a}}\right)$
and any $\epsilon>0$, there exists a value of $n=n_{\epsilon}$ sufficiently
large such that the corresponding NN $\overset{\circ}{\boldsymbol{f}}_{n_{\epsilon}}=\left(\overset{\circ}{f}_{n_{\epsilon},i}:i=1,...,N_{a}\right)\in\overset{\circ}{\mathcal{N}}_{n_{\epsilon}}$
such that
\begin{eqnarray}
\sup_{\boldsymbol{\phi}\in\mathcal{D}_{\boldsymbol{\phi}}}\left|\overset{\circ}{f}_{n_{\epsilon},i}\left(\boldsymbol{\phi};\boldsymbol{\theta}_{n\left(\epsilon\right)}\right)-\overset{\circ}{g}_{i}\left(\boldsymbol{\phi}\right)\right| & < & \epsilon,\qquad\forall i=1,...,N_{a}.\label{eq: UAT result Hornik}
\end{eqnarray}
Note that (\ref{eq: UAT result Hornik}) holds for any given number
$\mathcal{L}^{h}\geq1$ of hidden layers (see for example Corollary
2.7 in \cite{HornikEtAl1989}).

Using the results of \cite{GaoPavel2018} , the softmax (\ref{eq: Appendix softmax})
is (Lipschitz) continuous and surjective, since $\psi_{i}\left(\boldsymbol{y}\right)=\psi_{i}\left(\boldsymbol{y}+c\right)$
for any $\boldsymbol{y}\in\mathbb{R}^{N_{a}}$ and $c\in\mathbb{R}$,
where $\boldsymbol{y}+c\coloneqq\left(y_{i}+c:i=1,..,N_{a}\right)$.
In addition, it has a continuous right-inverse; as an example, we
can simply consider the function $\overleftarrow{\boldsymbol{\psi}}\left(\boldsymbol{z}\right)=\left(\log\left(z_{i}\right):i=1,...,N_{a}\right)$
where each $z_{i}\in\left(0,1\right)$ and that $\sum_{i}z_{i}=1$.
Furthermore, by Assumption \ref{assu: Appendix NN structure assumptions},
no activation function is applied at the input layer (i.e. the ``input
activation'' is trivially injective and continuous). Using these
properties of the input and output layers of any $\boldsymbol{f}_{n}\in\mathcal{N}_{n}$
together with the results (\ref{eq: f_n in terms of f_n_circle})
and (\ref{eq: UAT result Hornik}), we can conclude by the results
of \cite{KratsiosBilokopytov2020} that the set $\mathcal{N}_{\infty}$
is uniformly dense in $C\left(\mathcal{D}_{\boldsymbol{\phi}},\mathcal{Z}\right)$.

Applying this result specifically to the optimal control $\boldsymbol{p}^{\ast}\in C\left(\mathcal{D}_{\boldsymbol{\phi}},\mathcal{Z}\right)$
as per Assumption \ref{assu: Existence and Continuity of the control},
we can conclude that, for any $\epsilon>0$, there exists a value $n=n_{\epsilon}$
sufficiently large such that the corresponding NN $\boldsymbol{f}_{n_{\epsilon}}^{\ast}\left(\cdot;\boldsymbol{\theta}_{n_{\epsilon}}^{\ast}\right)\in\mathcal{N}_{n_{\epsilon}}$
satisfies
\begin{eqnarray}
\sup_{\boldsymbol{\phi}\in\mathcal{D}_{\boldsymbol{\phi}}}\left|f_{n_{\epsilon},i}^{\ast}\left(\boldsymbol{\phi};\boldsymbol{\theta}_{n_{\epsilon}}^{\ast}\right)-p_{i}^{\ast}\left(\boldsymbol{\phi}\right)\right| & < & \epsilon,\qquad\forall i=1,...,N_{a}.\label{eq: NN in N_inf approx optimal control}
\end{eqnarray}
Note that the exact output of the NN $\boldsymbol{f}_{n_{\epsilon}}^{\ast}\in\mathcal{N}_{n_{\epsilon}}$,
which we recall has $\hbar\left(n_{\epsilon}\right)$ hidden nodes
in each hidden layer, can be attained by a NN with $\hbar\left(n_{\epsilon}+k\right)$,
$k\in\mathbb{N}$ hidden nodes, since we can always set the weights
and biases corresponding to the additional $\hbar\left(n_{\epsilon}+k\right)-\hbar\left(n_{\epsilon}\right)$
nodes identically to zero. In other words, (\ref{eq: NN in N_inf approx optimal control})
implies the existence of a sequence of NNs $\boldsymbol{f}_{n}^{\ast}\left(\cdot;\boldsymbol{\theta}_{n}^{\ast}\right),n\in\mathbb{N}$,
where each $\boldsymbol{f}_{n}^{\ast}\left(\cdot;\boldsymbol{\theta}_{n}^{\ast}\right)\in\mathcal{N}_{n}$,
such that for any $\epsilon>0$ and sufficiently large $n_{\epsilon}\in\mathbb{N}$,
we have
\begin{eqnarray}
\sup_{\boldsymbol{\phi}\in\mathcal{D}_{\boldsymbol{\phi}}}\left|f_{n,i}^{\ast}\left(\boldsymbol{\phi};\boldsymbol{\theta}_{n}^{\ast}\right)-p_{i}^{\ast}\left(\boldsymbol{\phi}\right)\right| & < & \epsilon,\qquad\forall n\geq n_{\epsilon},\;i=1,...,N_{a},\label{eq: UAT optimal control proof}
\end{eqnarray}
completing the proof of (\ref{eq: UAT optimal control}).
\end{proof}
\noindent 
\begin{figure}[!tbh]
\noindent \begin{centering}
\includegraphics[scale=0.45]{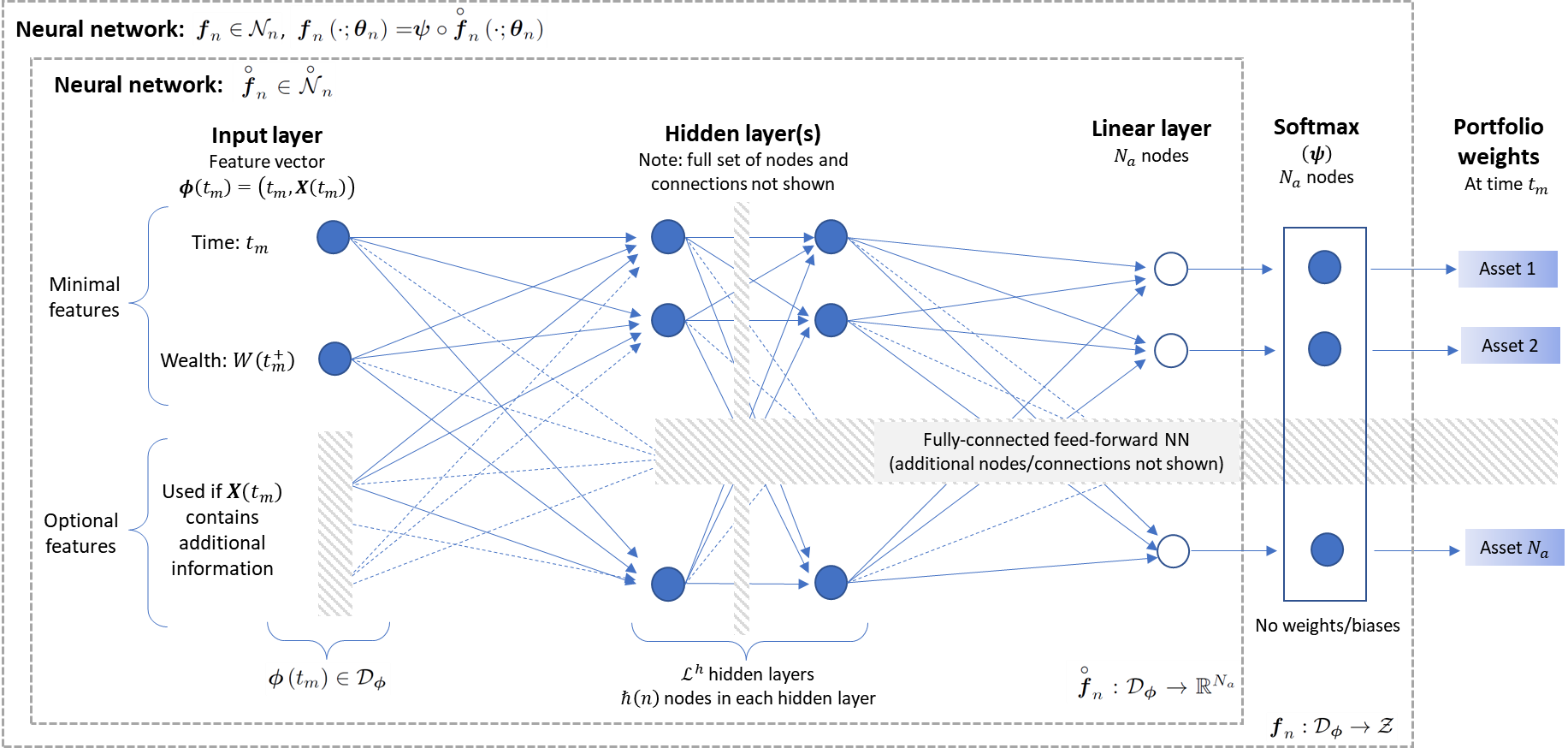}
\par\end{centering}
\caption{Illustration of the interpretation of the NN $\boldsymbol{f}_{n}\left(\cdot;\boldsymbol{\theta}_{n}\right)$
as a composition of the softmax $\boldsymbol{\psi}$ and the NN $\protect\overset{\circ}{\boldsymbol{f}}_{n}\left(\cdot;\boldsymbol{\theta}_{n}\right)$
as per equation (\ref{eq: f_n in terms of f_n_circle}). \label{fig: NN diagram for proof}}
\end{figure}

If Assumption \ref{eq: Assumption CONTINUOUS control} and Assumption
\ref{assu: Convergence assumptions - convenience} are applicable,
the wealth dynamics (\ref{eq: W dynamics - continuous control}) using
the optimal control is given by 
\begin{eqnarray}
W^{\ast}\left(t_{m+1}^{-};\boldsymbol{p}^{\ast},\boldsymbol{Y}\right) & = & W^{\ast}\left(t_{m}^{+};\boldsymbol{p}^{\ast},\boldsymbol{Y}\right)\cdot\sum_{i=1}^{N_{a}}p_{i}^{\ast}\left(t_{m},W^{\ast}\left(t_{m}^{+};\boldsymbol{p}^{\ast},\boldsymbol{Y}\right)\right)\cdot Y_{i}\left(t_{m}\right),\qquad t_{m}\in\mathcal{T},\label{eq: Appendix W dynamics using p*}
\end{eqnarray}
where we recall that $W^{\ast}\left(t_{m}^{+};\boldsymbol{p}^{\ast},\boldsymbol{Y}\right)=W^{\ast}\left(t_{m}^{-};\boldsymbol{p}^{\ast},\boldsymbol{Y}\right)+q\left(t_{m}\right)$,
$\boldsymbol{X}^{\ast}\left(t_{m}\right)=W^{\ast}\left(t_{m}^{+};\boldsymbol{p}^{\ast},\boldsymbol{Y}\right)$
and $W^{\ast}\left(t_{N_{rb}}^{-}\right)\coloneqq W^{\ast}\left(T\right)$.
Furthermore, associated with every NN in the sequence $\boldsymbol{f}_{n}^{\ast}\left(\cdot,\boldsymbol{\theta}_{n}^{\ast}\right)\in\mathcal{N}_{n}$
identified in Lemma \ref{lem: Appendix Universal approximation},
we have the corresponding wealth dynamics as per (\ref{eq: W dynamics with NN as control})
that satisfies
\begin{align}
W^{\ast}\left(t_{m+1}^{-};\boldsymbol{\theta}_{n}^{\ast},\boldsymbol{Y}\right) & =W^{\ast}\left(t_{m}^{+};\boldsymbol{\theta}_{n}^{\ast},\boldsymbol{Y}\right)\cdot\sum_{i=1}^{N_{a}}f_{n,i}^{\ast}\left(t_{m},W^{\ast}\left(t_{m}^{+};\boldsymbol{\theta}_{n}^{\ast},\boldsymbol{Y}\right);\boldsymbol{\theta}_{n}^{\ast}\right)\cdot Y_{i}\left(t_{m}\right),\qquad t_{m}\in\mathcal{T},n\in\mathbb{N}.\label{eq: Appendix W dynamics using f*}
\end{align}

The following lemma justifies the use of the notation $W^{\ast}$
in the wealth dynamics (\ref{eq: Appendix W dynamics using f*}).
\begin{lem}
\label{lem: Appendix W dynamics converges}(Convergence to optimal
wealth) Suppose that Assumption \ref{assu: Appendix Convergence assumptions - critical}
and Assumption \ref{assu: Convergence assumptions - convenience}
hold. Let $\boldsymbol{f}_{n}^{\ast}\left(\cdot,\boldsymbol{\theta}_{n}^{\ast}\right)\in\mathcal{N}_{n}$
be the sequence identified in Lemma \ref{lem: Appendix Universal approximation}
such that (\ref{eq: UAT optimal control}) holds. Then the wealth
dynamics $W^{\ast}\left(t;\boldsymbol{\theta}_{n}^{\ast},\boldsymbol{Y}\right)$
associated with each $\boldsymbol{f}_{n}^{\ast}$, obtained as per
(\ref{eq: Appendix W dynamics using f*}), converges to the true optimal
wealth dynamics $W^{\ast}\left(t;\boldsymbol{p}^{\ast},\boldsymbol{Y}\right)$
as $n\rightarrow\infty$ almost surely. In more detail, we have 
\begin{eqnarray}
\lim_{n\rightarrow\infty}W^{\ast}\left(t_{m}^{-};\boldsymbol{\theta}_{n}^{\ast},\boldsymbol{Y}\right) & = & W^{\ast}\left(t_{m}^{-};\boldsymbol{p}^{\ast},\boldsymbol{Y}\right)\quad\textrm{a.s.},\qquad\forall t_{m}\in\mathcal{T},\label{eq: Appendix W_t_m converges}
\end{eqnarray}
and
\begin{eqnarray}
\lim_{n\rightarrow\infty}W^{\ast}\left(T;\boldsymbol{\theta}_{n}^{\ast},\boldsymbol{Y}\right) & = & W^{\ast}\left(T;\boldsymbol{p}^{\ast},\boldsymbol{Y}\right)\quad\textrm{a.s.}\label{eq: Appendix W_T converges}
\end{eqnarray}
\end{lem}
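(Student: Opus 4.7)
The plan is to proceed by induction on the rebalancing index $m$ to establish (\ref{eq: Appendix W_t_m converges}); the claim (\ref{eq: Appendix W_T converges}) then follows upon setting $t_m = T$. The base case is immediate, since $W^{\ast}\left(t_{0}^{-};\cdot,\boldsymbol{Y}\right)=w_{0}$ by construction, independent of which control is applied.

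For the induction step, assume $W^{\ast}\left(t_{m}^{-};\boldsymbol{\theta}_{n}^{\ast},\boldsymbol{Y}\right)\to W^{\ast}\left(t_{m}^{-};\boldsymbol{p}^{\ast},\boldsymbol{Y}\right)$ a.s. as $n\to\infty$. Because $q\left(t_{m}\right)$ is non-random, the analogous convergence holds at $t_{m}^{+}$, so that the feature vectors $\boldsymbol{\phi}_{n}\coloneqq\left(t_{m},W^{\ast}\left(t_{m}^{+};\boldsymbol{\theta}_{n}^{\ast},\boldsymbol{Y}\right)\right)$ converge almost surely to $\boldsymbol{\phi}^{\ast}\coloneqq\left(t_{m},W^{\ast}\left(t_{m}^{+};\boldsymbol{p}^{\ast},\boldsymbol{Y}\right)\right)$. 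Subtracting (\ref{eq: Appendix W dynamics using p*}) from (\ref{eq: Appendix W dynamics using f*}), and adding and subtracting $W^{\ast}\left(t_{m}^{+};\boldsymbol{\theta}_{n}^{\ast},\boldsymbol{Y}\right)\sum_{i}p_{i}^{\ast}\left(\boldsymbol{\phi}^{\ast}\right)Y_{i}\left(t_{m}\right)$, yields the decomposition
\begin{align*}
W^{\ast}\!\left(t_{m+1}^{-};\boldsymbol{\theta}_{n}^{\ast},\boldsymbol{Y}\right)-W^{\ast}\!\left(t_{m+1}^{-};\boldsymbol{p}^{\ast},\boldsymbol{Y}\right) & = W^{\ast}\!\left(t_{m}^{+};\boldsymbol{\theta}_{n}^{\ast},\boldsymbol{Y}\right)\sum_{i=1}^{N_{a}}\left[f_{n,i}^{\ast}\!\left(\boldsymbol{\phi}_{n};\boldsymbol{\theta}_{n}^{\ast}\right)-p_{i}^{\ast}\!\left(\boldsymbol{\phi}^{\ast}\right)\right]Y_{i}\!\left(t_{m}\right)\\
 & \quad +\left[W^{\ast}\!\left(t_{m}^{+};\boldsymbol{\theta}_{n}^{\ast},\boldsymbol{Y}\right)-W^{\ast}\!\left(t_{m}^{+};\boldsymbol{p}^{\ast},\boldsymbol{Y}\right)\right]\sum_{i=1}^{N_{a}}p_{i}^{\ast}\!\left(\boldsymbol{\phi}^{\ast}\right)Y_{i}\!\left(t_{m}\right).
\end{align*}

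The second summand tends to zero almost surely by the inductive hypothesis, since $\sum_{i}p_{i}^{\ast}\left(\boldsymbol{\phi}^{\ast}\right)Y_{i}\left(t_{m}\right)$ is a.s. finite (using $\boldsymbol{p}^{\ast}\in\mathcal{Z}$, so each $p_{i}^{\ast}\in[0,1]$, and integrability of $Y_{i}\left(t_{m}\right)$). For the first summand, the bracketed control-difference is controlled via the triangle inequality
\[
\left|f_{n,i}^{\ast}\!\left(\boldsymbol{\phi}_{n};\boldsymbol{\theta}_{n}^{\ast}\right)-p_{i}^{\ast}\!\left(\boldsymbol{\phi}^{\ast}\right)\right|\leq\sup_{\boldsymbol{\phi}\in\mathcal{D}_{\boldsymbol{\phi}}}\left|f_{n,i}^{\ast}\!\left(\boldsymbol{\phi};\boldsymbol{\theta}_{n}^{\ast}\right)-p_{i}^{\ast}\!\left(\boldsymbol{\phi}\right)\right|+\left|p_{i}^{\ast}\!\left(\boldsymbol{\phi}_{n}\right)-p_{i}^{\ast}\!\left(\boldsymbol{\phi}^{\ast}\right)\right|.
\]
The supremum vanishes as $n\to\infty$ by Lemma \ref{lem: Appendix Universal approximation}, while the second term vanishes a.s. by continuity of $p_{i}^{\ast}\in C\left(\mathcal{D}_{\boldsymbol{\phi}},\mathcal{Z}\right)$ (Assumption \ref{assu: Existence and Continuity of the control}) together with $\boldsymbol{\phi}_{n}\to\boldsymbol{\phi}^{\ast}$ a.s. Since $W^{\ast}\left(t_{m}^{+};\boldsymbol{\theta}_{n}^{\ast},\boldsymbol{Y}\right)\leq w_{\max}$ by Assumption \ref{assu: Convergence assumptions - convenience}(ii) and $Y_{i}\left(t_{m}\right)$ is a.s. finite, multiplication through preserves a.s. convergence to zero, closing the induction.

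The main obstacle is coupling the two sources of error---the deviation of $\boldsymbol{f}_{n}^{\ast}$ from $\boldsymbol{p}^{\ast}$, and the propagated wealth error that determines the argument $\boldsymbol{\phi}_{n}$---at a common feature-vector evaluation. Pointwise convergence of $\boldsymbol{f}_{n}^{\ast}$ alone would not suffice here, since $\boldsymbol{\phi}_{n}$ itself depends on $n$; it is precisely the \emph{uniform} convergence on the compact domain $\mathcal{D}_{\boldsymbol{\phi}}$ provided by Lemma \ref{lem: Appendix Universal approximation} (and enabled by the bounded-wealth assumption (\ref{eq: D_phi bounded domain})) that lets the triangle-inequality splitting close cleanly, avoiding the need for any moment estimates on the sequence $\boldsymbol{\theta}_{n}^{\ast}$.
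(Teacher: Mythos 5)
Your proposal is correct and follows essentially the same route as the paper's proof: induction over rebalancing times, with the key step being the triangle-inequality split of the control error into the uniform approximation error from Lemma \ref{lem: Appendix Universal approximation} and a continuity term for $\boldsymbol{p}^{\ast}$ evaluated at the perturbed feature vector, combined with a.s. finiteness of the returns and boundedness of wealth. The only cosmetic difference is that you make the add-and-subtract decomposition of the wealth recursion fully explicit, whereas the paper first establishes convergence of the product $W^{\ast}\left(t_{m}^{+};\cdot\right)\cdot f_{n,i}^{\ast}\left(\cdot\right)$ and then multiplies by $Y_{i}\left(t_{m}\right)$; the substance is identical.
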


\begin{proof}
Note that (\ref{eq: Appendix W_T converges}) is stated separately
since the terminal time $T$ is not a rebalancing time (see (\ref{eq: Set of rebalancing times}))
and the terminal wealth is critical in the evaluation of the objective
functional. 

At the start of the time horizon $\left[t_{0},T\right]$, we are given
the initial wealth $W\left(t_{0}^{-}\right)=w_{0}>0$. Therefore,
at the first rebalancing time $t_{0}\in\mathcal{T}$, the wealth available
for investment does not depend on the control, so that 
\begin{equation}
w_{0}^{+}\coloneqq w_{0}+q\left(t_{0}\right)=W^{\ast}\left(t_{0}^{+};\boldsymbol{\theta}_{n}^{\ast},\boldsymbol{Y}\right)=W^{\ast}\left(t_{0}^{+};\boldsymbol{p}^{\ast},\boldsymbol{Y}\right),\qquad\forall n\in\mathbb{N}.\label{eq: Appendix w_0 plus}
\end{equation}
Using dynamics (\ref{eq: Appendix W dynamics using p*}) and (\ref{eq: Appendix W dynamics using f*})
to compare the wealth at time $t_{0}+\Delta t=t_{1}\in\mathcal{T}$,
we have
\begin{eqnarray}
\lim_{n\rightarrow\infty}W^{\ast}\left(t_{1}^{-};\boldsymbol{\theta}_{n}^{\ast},\boldsymbol{Y}\right)-W^{\ast}\left(t_{1}^{-};\boldsymbol{p}^{\ast},\boldsymbol{Y}\right) & = & w_{0}^{+}\cdot\sum_{i=1}^{N_{a}}\left[\lim_{n\rightarrow\infty}f_{n,i}^{\ast}\left(t_{0},w_{0}^{+};\boldsymbol{\theta}_{n}^{\ast}\right)-p_{i}^{\ast}\left(t_{0},w_{0}^{+}\right)\right]\cdot Y_{i}\left(t_{0}\right)\nonumber \\
 & = & 0\qquad\textrm{a.s.},\label{eq: Appendix W t_1 converges}
\end{eqnarray}
which follows from Lemma \ref{lem: Appendix Universal approximation}
and the fact that $Y_{i}\left(t_{0}\right)<\infty$ a.s. by assumption
(see definition (\ref{eq: Path Y of asset returns})). 

For purposes of induction, assume that at some $t_{m}\in\mathcal{T}$,
we have
\begin{eqnarray}
\lim_{n\rightarrow\infty}W^{\ast}\left(t_{m}^{-};\boldsymbol{\theta}_{n}^{\ast},\boldsymbol{Y}\right) & = & W^{\ast}\left(t_{m}^{-};\boldsymbol{p}^{\ast},\boldsymbol{Y}\right)\qquad\textrm{a.s.}\label{eq: Appendix W_t_m *maybe* convergence}
\end{eqnarray}
If (\ref{eq: Appendix W_t_m *maybe* convergence}) holds, then we
have 
\begin{equation}
\lim_{n\rightarrow\infty}W^{\ast}\left(t_{m}^{+};\boldsymbol{\theta}_{n}^{\ast},\boldsymbol{Y}\right)=\lim_{n\rightarrow\infty}\left[W^{\ast}\left(t_{m}^{-};\boldsymbol{\theta}_{n}^{\ast},\boldsymbol{Y}\right)+q\left(t_{m}\right)\right]=W^{\ast}\left(t_{m}^{+};\boldsymbol{p}^{\ast},\boldsymbol{Y}\right)\qquad\textrm{a.s.},\label{eq: Appendix W_t_m_plus convergence}
\end{equation}
as well as
\begin{eqnarray}
 &  & \lim_{n\rightarrow\infty}\left|f_{n,i}^{\ast}\left(t_{m},W^{\ast}\left(t_{m}^{+};\boldsymbol{\theta}_{n}^{\ast},\boldsymbol{Y}\right);\boldsymbol{\theta}_{n}^{\ast}\right)-p_{i}^{\ast}\left(t_{m},W^{\ast}\left(t_{m}^{+};\boldsymbol{p}^{\ast},\boldsymbol{Y}\right)\right)\right|\nonumber \\
 & \leq & \lim_{n\rightarrow\infty}\left|f_{n,i}^{\ast}\left(t_{m},W^{\ast}\left(t_{m}^{+};\boldsymbol{\theta}_{n}^{\ast},\boldsymbol{Y}\right);\boldsymbol{\theta}_{n}^{\ast}\right)-p_{i}^{\ast}\left(t_{m},W^{\ast}\left(t_{m}^{+};\boldsymbol{\theta}_{n}^{\ast},\boldsymbol{Y}\right)\right)\right|\nonumber \\
 &  & +\lim_{n\rightarrow\infty}\left|p_{i}^{\ast}\left(t_{m},W^{\ast}\left(t_{m}^{+};\boldsymbol{\theta}_{n}^{\ast},\boldsymbol{Y}\right)\right)-p_{i}^{\ast}\left(t_{m},W^{\ast}\left(t_{m}^{+};\boldsymbol{p}^{\ast},\boldsymbol{Y}\right)\right)\right|\nonumber \\
 & \leq & \lim_{n\rightarrow\infty}\sup_{\boldsymbol{\phi}\in\mathcal{D}_{\boldsymbol{\phi}}}\left|f_{n,i}^{\ast}\left(\boldsymbol{\phi};\boldsymbol{\theta}_{n}^{\ast}\right)-p_{i}^{\ast}\left(\boldsymbol{\phi}\right)\right|\nonumber \\
 & = & 0\qquad\textrm{a.s.},\qquad\forall i=1,...,N_{a},\label{eq: Appendix convergence f_n to p with optimal wealth}
\end{eqnarray}
which is a consequence of Lemma \ref{lem: Appendix Universal approximation}
and the continuity of $\boldsymbol{p}^{\ast}\in C\left(\mathcal{D}_{\boldsymbol{\phi}},\mathcal{Z}\right)$.
From (\ref{eq: Appendix W_t_m_plus convergence}) and (\ref{eq: Appendix convergence f_n to p with optimal wealth}),
we therefore conclude that

\begin{align}
 & \lim_{n\rightarrow\infty}\left|W^{\ast}\left(t_{m}^{+};\boldsymbol{\theta}_{n}^{\ast},\boldsymbol{Y}\right)\cdot f_{n,i}^{\ast}\left(t_{m},W^{\ast}\left(t_{m}^{+};\boldsymbol{\theta}_{n}^{\ast},\boldsymbol{Y}\right);\boldsymbol{\theta}_{n}^{\ast}\right)-W^{\ast}\left(t_{m}^{+};\boldsymbol{p}^{\ast},\boldsymbol{Y}\right)\cdot p_{i}^{\ast}\left(t_{m},W^{\ast}\left(t_{m}^{+};\boldsymbol{p}^{\ast},\boldsymbol{Y}\right)\right)\right|\nonumber \\
= & 0\qquad\textrm{a.s.},\qquad\forall i=1,...,N_{a}.\label{eq: Appendix convergence wealth penultimate step}
\end{align}

Using dynamics (\ref{eq: Appendix W dynamics using p*}) and (\ref{eq: Appendix W dynamics using f*})
to compare the wealth at time $t_{m}+\Delta t=t_{m+1}$, we have
\begin{eqnarray}
\lim_{n\rightarrow\infty}W^{\ast}\left(t_{m+1}^{-};\boldsymbol{\theta}_{n}^{\ast},\boldsymbol{Y}\right)-W^{\ast}\left(t_{m+1}^{-};\boldsymbol{p}^{\ast},\boldsymbol{Y}\right) & = & \lim_{n\rightarrow\infty}W^{\ast}\left(t_{m}^{+};\boldsymbol{\theta}_{n}^{\ast},\boldsymbol{Y}\right)\cdot\sum_{i=1}^{N_{a}}f_{n,i}^{\ast}\left(t_{m},W^{\ast}\left(t_{m}^{+};\boldsymbol{\theta}_{n}^{\ast},\boldsymbol{Y}\right);\boldsymbol{\theta}_{n}^{\ast}\right)\cdot Y_{i}\left(t_{m}\right)\nonumber \\
 &  & -W^{\ast}\left(t_{m}^{+};\boldsymbol{p}^{\ast},\boldsymbol{Y}\right)\cdot\sum_{i=1}^{N_{a}}p_{i}^{\ast}\left(t_{m},W^{\ast}\left(t_{m}^{+};\boldsymbol{p}^{\ast},\boldsymbol{Y}\right)\right)\cdot Y_{i}\left(t_{m}\right)\nonumber \\
 & = & 0\qquad\textrm{a.s.},\label{eq: Appendix convergence of wealth ultimate step}
\end{eqnarray}
which follows from (\ref{eq: Appendix convergence wealth penultimate step})
and $Y_{i}\left(t_{m}\right)<\infty$ a.s. By induction, we therefore
conclude that (\ref{eq: Appendix W_t_m converges}) holds if $t_{m+1}\in\mathcal{T}$
(i.e. if $m<N_{rb}-1$), and (\ref{eq: Appendix W_T converges}) holds
in the case where $t_{m+1}=t_{N_{rb}}=T$ (i.e. $m=N_{rb}-1$).
\end{proof}
The following lemma establishes the convergence of the sequence of
objective functionals using the NN approximations identified in Lemma
\ref{lem: Appendix Universal approximation}. 
\begin{lem}
\label{lem: Appendix J converges}(Convergence of objective functionals)
Suppose that Assumption \ref{assu: Appendix Convergence assumptions - critical}
and Assumption \ref{assu: Convergence assumptions - convenience}
hold. Let $\boldsymbol{f}_{n}^{\ast}\left(\cdot,\boldsymbol{\theta}_{n}^{\ast}\right)\in\mathcal{N}_{n}$
be the sequence identified in Lemma \ref{lem: Appendix Universal approximation}
such that (\ref{eq: UAT optimal control}) holds. Then
\begin{eqnarray}
\lim_{n\rightarrow\infty}J_{n}\left(\boldsymbol{\theta}_{n}^{\ast},\xi;t_{0},w_{0}\right) & = & J\left(\boldsymbol{p}^{\ast},\xi;t_{0},w_{0}\right),\qquad\forall\xi\in\mathbb{R},\label{eq: Appendix J converges}
\end{eqnarray}
where $J_{n}$ is defined in (\ref{eq: J_n}), and $J$ is defined
in (\ref{eq: Objective functional with continuous control p}).
\end{lem}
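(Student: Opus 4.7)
The plan is to combine the almost-sure convergence of terminal wealth established in Lemma \ref{lem: Appendix W dynamics converges} with the joint continuity of $F$ and $G$ from Assumption \ref{assu: Appendix Convergence assumptions - critical}(ii) and the uniform wealth bound $0 \le W \le w_{max}$ from Assumption \ref{assu: Convergence assumptions - convenience}(ii), and then pass to the limit inside the expectation by two applications of the bounded convergence theorem.

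First, I would handle the deterministic quantity appearing as the second argument of $G$. Since $W^*(T;\boldsymbol{\theta}_n^*,\boldsymbol{Y}) \to W^*(T;\boldsymbol{p}^*,\boldsymbol{Y})$ a.s.\ by \eqref{eq: Appendix W_T converges}, and since $0 \le W^*(T;\boldsymbol{\theta}_n^*,\boldsymbol{Y}) \le w_{max}$ uniformly in $n$ by \eqref{eq: assumption boundedness of wealth}, a first application of the bounded convergence theorem yields
\[
\mu_n \;\coloneqq\; E^{t_0,w_0}\bigl[W^*(T;\boldsymbol{\theta}_n^*,\boldsymbol{Y})\bigr] \;\longrightarrow\; \mu^* \;\coloneqq\; E^{t_0,w_0}\bigl[W^*(T;\boldsymbol{p}^*,\boldsymbol{Y})\bigr].
\]

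Next, define the integrand
\[
H_n \;\coloneqq\; F\bigl(W^*(T;\boldsymbol{\theta}_n^*,\boldsymbol{Y}),\xi\bigr) + G\bigl(W^*(T;\boldsymbol{\theta}_n^*,\boldsymbol{Y}),\mu_n,w_0,\xi\bigr),
\]
with $H^*$ defined analogously using $W^*(T;\boldsymbol{p}^*,\boldsymbol{Y})$ and $\mu^*$. By joint continuity of $F$ and $G$, the random argument converges almost surely (Lemma \ref{lem: Appendix W dynamics converges}) and the deterministic argument $\mu_n$ converges to $\mu^*$ from the previous step, so $H_n \to H^*$ a.s. Because $W^*(T;\boldsymbol{\theta}_n^*,\boldsymbol{Y})$ and $\mu_n$ all lie in the compact set $[0,w_{max}]$, continuity of $F$ and $G$ restricted to $[0,w_{max}]\times\{\xi\}$ and $[0,w_{max}]\times[0,w_{max}]\times\{w_0\}\times\{\xi\}$ respectively guarantees that $\{H_n\}$ is uniformly bounded a.s. A second application of the bounded convergence theorem then gives $E^{t_0,w_0}[H_n] \to E^{t_0,w_0}[H^*]$, which is precisely \eqref{eq: Appendix J converges}.

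The main subtlety is routing both modes of convergence through $G$ simultaneously: one argument is a random variable converging only almost surely, while the second argument is an expectation requiring a separate BCT invocation to converge, so joint continuity of $G$ in all four arguments is essential and cannot be dispensed with. Uniform boundedness of wealth from Assumption \ref{assu: Convergence assumptions - convenience}(ii) is what makes both BCT applications immediate; under the relaxation outlined in Remark \ref{rem: Appendix Relaxing Convergence assumptions - convenience}, the two BCT steps would need to be replaced by dominated convergence with explicit integrable majorants constructed through a localization argument, which is notationally heavier but introduces no new conceptual difficulty.
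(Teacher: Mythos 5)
Your proposal is correct and follows essentially the same route as the paper's proof: almost-sure convergence of terminal wealth from Lemma \ref{lem: Appendix W dynamics converges}, continuity of $F$ and $G$, and the wealth bound of Assumption \ref{assu: Convergence assumptions - convenience} feeding into dominated (bounded) convergence. The only difference is presentational --- you spell out explicitly the preliminary convergence of the inner expectation $E^{t_{0},w_{0}}[W^{\ast}(T;\boldsymbol{\theta}_{n}^{\ast},\boldsymbol{Y})]$ appearing as the second argument of $G$, a step the paper subsumes under ``similarly''.
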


\begin{proof}
Let $\xi\in\mathbb{R}$ be arbitrary. By Lemma \ref{lem: Appendix W dynamics converges}
and the continuity of $F$, we have 
\begin{eqnarray}
\lim_{n\rightarrow\infty}F\left(W^{\ast}\left(T;\boldsymbol{\theta}_{n}^{\ast},\boldsymbol{Y}\right),\xi\right) & = & F\left(W^{\ast}\left(T;\boldsymbol{p}^{\ast},\boldsymbol{Y}\right),\xi\right)\qquad\textrm{a.s}.\label{eq: Appendix F converges}
\end{eqnarray}
Therefore, by using the boundedness of wealth as per Assumption \ref{assu: Convergence assumptions - convenience},
the dominated convergence theorem gives
\begin{eqnarray}
\lim_{n\rightarrow\infty}E^{t_{0},w_{0}}\left[F\left(W^{\ast}\left(T;\boldsymbol{\theta}_{n}^{\ast},\boldsymbol{Y}\right),\xi\right)\right] & = & E^{t_{0},w_{0}}\left[F\left(W^{\ast}\left(T;\boldsymbol{p}^{\ast},\boldsymbol{Y}\right),\xi\right)\right].\label{eq: Appendix expectation F converges}
\end{eqnarray}
Similarly, by the continuity of $G$, Lemma \ref{lem: Appendix W dynamics converges},
the boundedness of wealth and the dominated convergence theorem, we
have
\begin{eqnarray}
 &  & \lim_{n\rightarrow\infty}E^{t_{0},w_{0}}\left[G\left(W^{\ast}\left(T;\boldsymbol{\theta}_{n}^{\ast},\boldsymbol{Y}\right),E^{t_{0},w_{0}}\left[W^{\ast}\left(T;\boldsymbol{\theta}_{n}^{\ast},\boldsymbol{Y}\right)\right],w_{0},\xi\right)\right]\nonumber \\
 & = & E^{t_{0},w_{0}}\left[G\left(W^{\ast}\left(T;\boldsymbol{p}^{\ast},\boldsymbol{Y}\right),E^{t_{0},w_{0}}\left[W^{\ast}\left(T;\boldsymbol{p}^{\ast},\boldsymbol{Y}\right)\right],w_{0},\xi\right)\right].\label{eq: Appendix G converges}
\end{eqnarray}
Finally, using the definitions of $J$ in (\ref{eq: Objective functional with continuous control p})
and $J_{n}$ in (\ref{eq: J_n}), we combine (\ref{eq: Appendix expectation F converges})
and (\ref{eq: Appendix G converges}) to conclude (\ref{eq: Appendix J converges}).
\end{proof}

\subsubsection*{Proof of Theorem \ref{thm: Validity of NN approximation}}

Using the preceding results, we are finally in the position to prove
Theorem \ref{thm: Validity of NN approximation}. Note that this proof
also motivates the use of the notation $\boldsymbol{f}_{n}^{\ast}\left(\cdot,\boldsymbol{\theta}_{n}^{\ast}\right)$
and its associated wealth $W^{\ast}\left(T;\boldsymbol{\theta}_{n}^{\ast},\boldsymbol{Y}\right)$
for the sequence of NNs identified in Lemma \ref{lem: Appendix Universal approximation}
and subsequently used in Lemmas \ref{lem: Appendix W dynamics converges}
and \ref{lem: Appendix J converges} above. 

Since $\xi\rightarrow F\left(w,\xi\right)$ and $\xi\rightarrow G\left(w,x,w_{0},\xi\right)$
are convex by Assumption \ref{assu: Appendix Convergence assumptions - critical},
and the convexity is preserved by taking the expectation of $F$,
we have the result that $\xi\rightarrow J_{n}\left(\boldsymbol{\theta}_{n},\xi;t_{0},w_{0}\right)$
and $\xi\rightarrow J\left(\boldsymbol{p},\xi;t_{0},w_{0}\right)$
are also convex, so that the infimum over $\xi\in\mathbb{R}$ in each
case can be attained and is unique. With $\boldsymbol{p}^{\ast}$
still denoting the optimal control, define $\xi^{\ast}$ as the value
\begin{eqnarray}
\xi^{\ast} & \coloneqq & \inf_{\xi\in\mathbb{R}}J\left(\boldsymbol{p}^{\ast},\xi;t_{0},w_{0}\right).\label{eq: Appendix xi star}
\end{eqnarray}

Since $\mathcal{N}_{n}\subset C\left(\mathcal{D}_{\boldsymbol{\phi}},\mathcal{Z}\right)$,
we have, for all $\xi\in\mathbb{R}$ and all $n\in\mathbb{N}$,
\begin{eqnarray}
\inf_{\boldsymbol{\theta}_{n}\in\mathbb{R}^{\nu_{n}}}J_{n}\left(\boldsymbol{\theta}_{n},\xi;t_{0},w_{0}\right)=\inf_{\boldsymbol{f}_{n}\left(\cdot;\boldsymbol{\theta}_{n}\right)\in\mathcal{N}_{n}}J\left(\boldsymbol{f}_{n},\xi;t_{0},w_{0}\right) & \geq & \inf_{\boldsymbol{p}\in C\left(\mathcal{D}_{\boldsymbol{\phi}},\mathcal{Z}\right)}J\left(\boldsymbol{p},\xi;t_{0},w_{0}\right).\label{eq: Appendix J_n greater than}
\end{eqnarray}
Taking the infimum in (\ref{eq: Appendix J_n greater than}) over
$\xi\in\mathbb{R}$, and exchanging the order of minimization, we
therefore have 
\begin{eqnarray}
\inf_{\left(\boldsymbol{\theta}_{n},\xi\right)\in\mathbb{R}^{\nu_{n}+1}}J_{n}\left(\boldsymbol{\theta}_{n},\xi;t_{0},w_{0}\right) & = & \inf_{\boldsymbol{\theta}_{n}\in\mathbb{R}^{\nu_{n}}}\inf_{\xi\in\mathbb{R}}J_{n}\left(\boldsymbol{\theta}_{n},\xi;t_{0},w_{0}\right)\nonumber \\
 & \geq & \inf_{\boldsymbol{p}\in C\left(\mathcal{D}_{\boldsymbol{\phi}},\mathcal{Z}\right)}\inf_{\xi\in\mathbb{R}}J\left(\boldsymbol{p},\xi;t_{0},w_{0}\right),\qquad\forall n\in\mathbb{N}.\label{eq: Appendix J_n greater than with inf}
\end{eqnarray}
Taking limits in (\ref{eq: Appendix J_n greater than with inf}),
we obtain 
\begin{eqnarray}
\lim_{n\rightarrow\infty}\inf_{\left(\boldsymbol{\theta}_{n},\xi\right)\in\mathbb{R}^{\nu_{n}+1}}J_{n}\left(\boldsymbol{\theta}_{n},\xi;t_{0},w_{0}\right) & \geq & \inf_{\xi\in\mathbb{R}}\inf_{\boldsymbol{p}\in C\left(\mathcal{D}_{\boldsymbol{\phi}},\mathcal{Z}\right)}J\left(\boldsymbol{p},\xi;t_{0},w_{0}\right).\label{eq: Appendix limits J_n greater than}
\end{eqnarray}
Now consider specifically the sequence $\boldsymbol{f}_{n}^{\ast}\left(\cdot,\boldsymbol{\theta}_{n}^{\ast}\right)$
identified in Lemma \ref{lem: Appendix Universal approximation} and
the value $\xi^{\ast}$ in (\ref{eq: Appendix xi star}). Since $\boldsymbol{f}_{n}^{\ast}\left(\cdot,\boldsymbol{\theta}_{n}^{\ast}\right)\in\mathcal{N}_{n}$
(so that $\boldsymbol{\theta}_{n}^{\ast}\in\mathbb{R}^{\nu_{n}}$)
and $\xi^{\ast}\in\mathbb{R}$, we have
\begin{eqnarray}
\inf_{\left(\boldsymbol{\theta}_{n},\xi\right)\in\mathbb{R}^{\nu_{n}+1}}J_{n}\left(\boldsymbol{\theta}_{n},\xi;t_{0},w_{0}\right) & \leq & J_{n}\left(\boldsymbol{\theta}_{n}^{\ast},\xi^{\ast};t_{0},w_{0}\right),\qquad\forall n\in\mathbb{N}.\label{eq: Appendix J_n lower than}
\end{eqnarray}
By Lemma \ref{lem: Appendix J converges}, we have
\begin{eqnarray}
\lim_{n\rightarrow\infty}J_{n}\left(\boldsymbol{\theta}_{n}^{\ast},\xi^{\ast};t_{0},w_{0}\right) & = & J\left(\boldsymbol{p}^{\ast},\xi^{\ast};t_{0},w_{0}\right),\quad\textrm{ where }\xi^{\ast}\textrm{ is given by \eqref{eq: Appendix xi star}}.\label{eq: Appendix J_n convergence with xi_star}
\end{eqnarray}
Therefore, taking limits in (\ref{eq: Appendix J_n lower than}) and
using (\ref{eq: Appendix J_n convergence with xi_star}), we obtain
the inequality
\begin{eqnarray}
\lim_{n\rightarrow\infty}\inf_{\left(\boldsymbol{\theta}_{n},\xi\right)\in\mathbb{R}^{\nu_{n}+1}}J_{n}\left(\boldsymbol{\theta}_{n},\xi;t_{0},w_{0}\right) & \leq & \lim_{n\rightarrow\infty}J_{n}\left(\boldsymbol{\theta}_{n}^{\ast},\xi^{\ast};t_{0},w_{0}\right)\nonumber \\
 & = & J\left(\boldsymbol{p}^{\ast},\xi^{\ast};t_{0},w_{0}\right)\nonumber \\
 & = & \inf_{\xi\in\mathbb{R}}\inf_{\boldsymbol{p}\in C\left(\mathcal{D}_{\boldsymbol{\phi}},\mathcal{Z}\right)}J\left(\boldsymbol{p},\xi;t_{0},w_{0}\right).\label{eq: Appendix limits J_n lower than}
\end{eqnarray}
Combining (\ref{eq: Appendix limits J_n greater than}) and (\ref{eq: Appendix limits J_n lower than}),
we therefore have equality in both (\ref{eq: Appendix limits J_n greater than})
and (\ref{eq: Appendix limits J_n lower than}), and obtain
\begin{eqnarray}
\lim_{n\rightarrow\infty}\inf_{\left(\boldsymbol{\theta}_{n},\xi\right)\in\mathbb{R}^{\nu_{n}+1}}J_{n}\left(\boldsymbol{\theta}_{n},\xi;t_{0},w_{0}\right) & = & \inf_{\xi\in\mathbb{R}}\inf_{\boldsymbol{p}\in C\left(\mathcal{D}_{\boldsymbol{\phi}},\mathcal{Z}\right)}J\left(\boldsymbol{p},\xi;t_{0},w_{0}\right),\label{eq: Appendix limits J_n EQUALITY}
\end{eqnarray}
which concludes the proof of Theorem \ref{thm: Validity of NN approximation}.
Finally, the notation $\boldsymbol{f}_{n}^{\ast}\left(\cdot,\boldsymbol{\theta}_{n}^{\ast}\right)$
in Lemma \ref{lem: Appendix Universal approximation} is motivated
by the fact that equality holds in (\ref{eq: Appendix limits J_n lower than}).\qed

\subsection{Proof of Theorem \ref{thm: Validity of NN computational estimate}\label{subsec:Appendix Proof-of-Theorem validity computational}}

We start with the following auxiliary result, which is essentially
a version of the law of large numbers applicable to the current setting.
\begin{lem}
\label{lem: LLN}(Applicable version of the law of large numbers)
Suppose that Assumption \ref{assu: Appendix Convergence assumptions - critical},
Assumption \ref{assu: Convergence assumptions - convenience} and
Assumption \ref{assu: Appendix Convergence assumptions - computational}
hold. Then 
\begin{eqnarray}
\sup_{\boldsymbol{\theta}_{n}\in\mathbb{R}^{\nu_{n}}}\left|\frac{1}{n}\sum_{j=1}^{n}W^{\left(j\right)}\left(T;\boldsymbol{\theta}_{n},\mathcal{Y}_{n}\right)-E^{t_{0},w_{0}}\left[W\left(T;\boldsymbol{\theta}_{n},\boldsymbol{Y}\right)\right]\right| & \overset{P}{\longrightarrow} & 0,\;\textrm{as }n\rightarrow\infty,\label{eq: Appendix LLN wealth}
\end{eqnarray}
and
\begin{eqnarray}
\sup_{\left(\boldsymbol{\theta}_{n},\xi\right)\in\mathbb{R}^{\nu_{n}+1}}\left|\frac{1}{n}\sum_{j=1}^{n}F\left(W^{\left(j\right)}\left(T;\boldsymbol{\theta}_{n},\mathcal{Y}_{n}\right),\xi\right)-E^{t_{0},w_{0}}\left[F\left(W\left(T;\boldsymbol{\theta}_{n},\boldsymbol{Y}\right),\xi\right)\right]\right| & \overset{P}{\longrightarrow} & 0,\;\textrm{as }n\rightarrow\infty.\label{eq: Appendix LLN for F}
\end{eqnarray}
\end{lem}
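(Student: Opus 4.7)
The plan is to establish a uniform law of large numbers over the parametrized family $\{W(T; \boldsymbol{\theta}_n, \cdot) : \boldsymbol{\theta}_n \in \mathbb{R}^{\nu_n}\}$, viewing this as a problem in empirical process theory for a function class whose complexity grows with $n$. Three ingredients are required: uniform boundedness of the integrands, a complexity bound on the NN-induced function class, and a standard uniform LLN on classes with controlled complexity. By Assumption \ref{assu: Convergence assumptions - convenience}(ii), $W(T;\boldsymbol{\theta}_n, \boldsymbol{Y}) \in [0, w_{max}]$ almost surely for every $\boldsymbol{\theta}_n$, which uniformly bounds the integrand in (\ref{eq: Appendix LLN wealth}). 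For (\ref{eq: Appendix LLN for F}), although $\xi$ ranges over $\mathbb{R}$, the convexity and coercivity of $\xi \mapsto F(\cdot,\xi)$ together with the existence of a minimizer allow me to restrict attention to a bounded interval $\xi \in [-M, M]$, so that the continuous function $F$ is bounded on the compact set $[0, w_{max}] \times [-M, M]$.

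For the complexity step, the number of NN parameters satisfies $\nu_n = O(\hbar(n)^2)$ with constants depending only on the fixed quantities $\mathcal{L}^h$, $\eta_X$, and $N_a$. Since both the sigmoid (Assumption \ref{assu: Appendix NN structure assumptions}(ii)) and the softmax (Assumption \ref{assu: Appendix NN structure assumptions}(v)) are smooth and globally bounded, and the wealth $W(T; \boldsymbol{\theta}_n, \boldsymbol{Y})$ is obtained by the finite recursion (\ref{eq: W dynamics with NN as control}) of $N_{rb}$ steps of bounded composition with the asset returns, the induced function class $\mathcal{W}_n = \{\boldsymbol{Y} \mapsto W(T; \boldsymbol{\theta}_n, \boldsymbol{Y}) : \boldsymbol{\theta}_n \in \mathbb{R}^{\nu_n}\}$ has pseudo-dimension (and hence $L^\infty$-covering numbers at scale $\varepsilon$) controlled polynomially in $\nu_n$: standard results for sigmoid NNs (see, e.g., Anthony and Bartlett) yield a pseudo-dimension of order $\nu_n^2$ for a single hidden layer and of order $\nu_n^2 (\mathcal{L}^h)^2$ for deeper NNs. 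The composition with the continuous $F$ and with the extra scalar parameter $\xi$ introduces only a bounded Lipschitz factor and one extra dimension, preserving polynomial growth of the pseudo-dimension.

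The final step is to invoke a uniform LLN for uniformly bounded classes of functions with pseudo-dimension growing with the sample size. Since the paths $\boldsymbol{Y}^{(j)}$ in $\mathcal{Y}_n$ are i.i.d. by Assumption \ref{assu: Appendix Convergence assumptions - computational}(i), a Pollard-type uniform deviation bound (or the empirical process chaining bounds in van der Vaart and Wellner) applied to $\mathcal{W}_n$ yields a supremum deviation of order $O_P\bigl(\sqrt{d_n \log n / n}\bigr)$, where $d_n$ denotes the pseudo-dimension. The rates in Assumption \ref{assu: Appendix Convergence assumptions - computational}(ii), $\hbar(n) = o(n^{1/4})$ for $\mathcal{L}^h = 1$ and $\hbar(n) = o(n^{1/6})$ for $\mathcal{L}^h > 1$, are precisely what is needed to drive $d_n \log n / n$ to zero in each regime, giving (\ref{eq: Appendix LLN wealth}). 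Applying the same argument to the bounded, continuous composition class $\{(\boldsymbol{Y}, \xi) \mapsto F(W(T; \boldsymbol{\theta}_n, \boldsymbol{Y}), \xi)\}$ over the compactified parameter set $\mathbb{R}^{\nu_n} \times [-M, M]$ yields (\ref{eq: Appendix LLN for F}).

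The main obstacle is verifying the pseudo-dimension bound for the iterated wealth map rather than for a ``plain'' NN output: equation (\ref{eq: W dynamics with NN as control}) composes the NN outputs $N_{rb}$ times with the random returns $Y_i(t_m)$, and one must confirm that this time-recursion does not blow up the complexity beyond what the standard NN bounds give, i.e., that $N_{rb}$ (fixed in $n$) only contributes a multiplicative constant to $d_n$. A secondary subtlety is the justification for restricting $\xi$ to a bounded interval in (\ref{eq: Appendix LLN for F}); this depends on the convexity and coercivity of $\xi \mapsto F$ inherited from the specific problem formulations in Subsection \ref{subsec:Investment-objectives}, and would require a short auxiliary argument (or an explicit assumption) to make rigorous in the full generality of Assumption \ref{assu: Appendix Convergence assumptions - critical}(ii).
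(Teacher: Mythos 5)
Your overall strategy --- a uniform law of large numbers over the NN-parametrized function class, controlled via boundedness of wealth and polynomial growth of the class's complexity in $\nu_n$, with the rates in Assumption \ref{assu: Appendix Convergence assumptions - computational}(ii) doing exactly the work of killing the deviation bound --- is the same approach the paper takes. The difference is that the paper does not carry out this argument itself: its entire proof is the observation that the NN formulation requires $\mathcal{O}(\hbar(n))$ evaluations of the exponential function, followed by a citation to Corollary 7.4 and Theorem 4.3 of Tsang and Wong (2020), where the covering-number/complexity machinery you sketch is developed for essentially this setting (including the passage through the $N_{rb}$-step wealth recursion and the extra scalar parameter $\xi$). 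So your proposal is best read as a reconstruction of the cited argument; it is consistent with what the paper relies on.

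Two caveats. First, your restriction of $\xi$ to a compact interval $[-M,M]$ proves a weaker statement than (\ref{eq: Appendix LLN for F}), which takes the supremum over all of $\mathbb{R}$; for a general continuous $F$ convex in $\xi$ the unrestricted supremum can in fact be infinite (e.g.\ $F(w,\xi)=\xi^{2}w$), so either additional structure on $F$ (as in the CVaR case, where the $\xi$-dependence cancels in the empirical-minus-population difference for $|\xi|$ large) or the compactification you propose is genuinely needed --- you correctly flag this, and the paper's one-line proof does not address it at all. Second, a minor bookkeeping slip: for a single hidden layer the parameter count is $\nu_n=\mathcal{O}(\hbar(n))$, not $\mathcal{O}(\hbar(n)^{2})$ (the quadratic count only arises for $\mathcal{L}^{h}>1$ from the hidden-to-hidden weight matrices); your final rates still come out consistent with the $o(n^{1/4})$ and $o(n^{1/6})$ conditions once the pseudo-dimension bound $\mathcal{O}(\nu_n^{2}k^{2})$ in the number of computational units $k$ is applied, but the intermediate claim as written is off.
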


\begin{proof}
Since for any fixed number of hidden layers, our NN formulation also
requires $\mathcal{O}\left(\hbar_{n}\right)$ evaluations of the exponential
function, exactly the same steps as in \cite{TsangWong2020} (specifically,
see Corollary 7.4 and Theorem 4.3 in \cite{TsangWong2020}) can be
used to establish (\ref{eq: Appendix LLN wealth}) and (\ref{eq: Appendix LLN for F}).
\end{proof}
The following lemma establishes a required auxiliary result involving
the function $G$.
\begin{lem}
\label{lem: Convergence of G}(Convergence of $G$ in probability)
Suppose that Assumption \ref{assu: Appendix Convergence assumptions - critical},
Assumption \ref{assu: Convergence assumptions - convenience} and
Assumption \ref{assu: Appendix Convergence assumptions - computational}
hold. Then 
\begin{eqnarray}
\sup_{\left(\boldsymbol{\theta}_{n},\xi\right)\in\mathbb{R}^{\nu_{n}+1}}\left|\frac{1}{n}\sum_{j=1}^{n}G\left(W^{\left(j\right)}\left(T;\boldsymbol{\theta}_{n},\mathcal{Y}_{n}\right),\frac{1}{n}\sum_{k=1}^{n}W^{\left(k\right)}\left(T;\boldsymbol{\theta}_{n},\mathcal{Y}_{n}\right),w_{0},\xi\right)\right.\nonumber \\
\left.-E^{t_{0},w_{0}}\left[G\left(W\left(T;\boldsymbol{\theta}_{n},\boldsymbol{Y}\right),E^{t_{0},w_{0}}\left[W\left(T;\boldsymbol{\theta}_{n},\boldsymbol{Y}\right)\right],w_{0},\xi\right)\right]\begin{array}{c}
\!\!\!\!\!\!\!\!\\
\!\!\!\!\!\!\!\!
\end{array}\right| & \overset{P}{\longrightarrow} & 0,\label{eq: Appendix lemma G convergence}
\end{eqnarray}
as $n\rightarrow\infty$.
\end{lem}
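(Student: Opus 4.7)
The plan is to apply the triangle inequality to separate two distinct sources of error and then treat each separately using tools already developed in the paper. Let $m_n(\boldsymbol{\theta}_n) \coloneqq E^{t_0,w_0}\bigl[W(T;\boldsymbol{\theta}_n,\boldsymbol{Y})\bigr]$ and $\bar{W}_n(\boldsymbol{\theta}_n) \coloneqq \frac{1}{n}\sum_{k=1}^n W^{(k)}(T;\boldsymbol{\theta}_n,\mathcal{Y}_n)$. Adding and subtracting $\frac{1}{n}\sum_{j=1}^n G\bigl(W^{(j)}, m_n(\boldsymbol{\theta}_n), w_0, \xi\bigr)$ inside the supremum splits the quantity of interest into two terms: term (I), the difference between the sample mean of $G$ evaluated at $\bar{W}_n$ versus at the true mean $m_n$ in its second slot, and term (II), the ordinary Monte Carlo error of the sample average of the function $w \mapsto G(w, m_n(\boldsymbol{\theta}_n), w_0, \xi)$. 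It suffices to establish that each of these terms converges uniformly in probability to zero.

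For term (I), I would exploit continuity of $G$ combined with the boundedness of wealth (Assumption \ref{assu: Convergence assumptions - convenience}). Since both $\bar{W}_n(\boldsymbol{\theta}_n)$ and $m_n(\boldsymbol{\theta}_n)$ almost surely lie in the compact interval $[0,w_{max}]$, the restriction of $(w,m) \mapsto G(w,m,w_0,\xi)$ to $[0,w_{max}]^2$ is uniformly continuous for each fixed $\xi$. A modulus-of-continuity bound therefore controls $|(I)|$ by $\omega_{G,\xi}\bigl(\sup_{\boldsymbol{\theta}_n}\lvert\bar{W}_n(\boldsymbol{\theta}_n) - m_n(\boldsymbol{\theta}_n)\rvert\bigr)$, and equation \eqref{eq: Appendix LLN wealth} of Lemma \ref{lem: LLN} forces the inner supremum to zero in probability. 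Uniformity in $\xi$ is deferred to the obstacle discussed below.

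For term (II), I would apply precisely the same parametric uniform law of large numbers underlying \eqref{eq: Appendix LLN for F} in Lemma \ref{lem: LLN}, but now to the continuous function $\tilde{G}(w;\boldsymbol{\theta}_n,\xi) \coloneqq G\bigl(w, m_n(\boldsymbol{\theta}_n), w_0, \xi\bigr)$. Continuity of $m_n(\boldsymbol{\theta}_n)$ in $\boldsymbol{\theta}_n$ follows from the bounded wealth recursion \eqref{eq: W dynamics with NN as control} together with dominated convergence, and the overall parametric complexity of $\tilde{G}$ is of the same order as that of $F$ in \eqref{eq: Appendix LLN for F}, since $m_n$ contributes only a single additional smooth, bounded coordinate. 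The rate condition of Assumption \ref{assu: Appendix Convergence assumptions - computational}(ii) on $\hbar(n)$ therefore permits direct invocation of Corollary 7.4 and Theorem 4.3 of \cite{TsangWong2020} to conclude uniform convergence of $|(II)|$ to zero in probability.

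The main obstacle is that the supremum is taken over all $(\boldsymbol{\theta}_n,\xi) \in \mathbb{R}^{\nu_n+1}$, which is noncompact in $\xi$. Exactly as in the proof of Lemma \ref{lem: LLN}, I would address this by exploiting convexity of $\xi \mapsto G(\cdot,\cdot,\cdot,\xi)$ from Assumption \ref{assu: Appendix Convergence assumptions - critical}(ii), together with the coercive behavior implied by well-posedness of the outer infimum in \eqref{eq: Original problem - CONTINUOUS control}, so that attention may be restricted to a compact $\xi$-region containing the optimizers with probability tending to one; this reduces the remaining task to the compact-parameter uniform LLN already established for $F$. Combining the vanishing of terms (I) and (II) via the triangle inequality then yields \eqref{eq: Appendix lemma G convergence}.
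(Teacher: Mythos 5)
Your proposal is correct and follows essentially the same route as the paper: the identical add-and-subtract decomposition, with term (I) controlled via uniform continuity of $G$ on the compact wealth domain together with \eqref{eq: Appendix LLN wealth} (the paper instead passes through a uniformly convergent sequence of Lipschitz approximants of $w\mapsto G(x,w,w_0,\xi)$, which is the same idea made quantitative), and term (II) controlled by the same uniform law of large numbers of Lemma \ref{lem: LLN} applied with the true mean in the second slot. Your remark about noncompactness of the $\xi$-domain is a point the paper's own proof passes over without comment, so it does not mark a divergence in approach.
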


\begin{proof}
For given values of $\xi\in\mathbb{R}$, $w_{0}>0$ and $w\in\mathbb{R}$,
consider the function $x\rightarrow G\left(x,w,w_{0},\xi\right)$.
By the results of Lemma \ref{lem: LLN}, we have 
\begin{eqnarray}
\sup_{\left(\boldsymbol{\theta}_{n},\xi\right)\in\mathbb{R}^{\nu_{n}+1}}\left|\frac{1}{n}\sum_{j=1}^{n}G\left(W^{\left(j\right)}\left(T;\boldsymbol{\theta}_{n},\mathcal{Y}_{n}\right),w,w_{0},\xi\right)-E^{t_{0},w_{0}}\left[G\left(W\left(T;\boldsymbol{\theta}_{n},\boldsymbol{Y}\right),w,w_{0},\xi\right)\right]\right| & \overset{P}{\longrightarrow} & 0,\label{eq: Appendix LLN for G - fixed second arg}
\end{eqnarray}
as $n\rightarrow\infty$. Keeping $x$ fixed, consider the function
$w\rightarrow G\left(x,w,w_{0},\xi\right):\left[0,w_{max}\right]\rightarrow\mathbb{R}$.
Since $G$ is continuous, there exists a sequence of functions $\left(G_{m}\right)_{m\in\mathbb{N}}$,
where for each $m\in\mathbb{N}$, the function $w\rightarrow G_{m}\left(x,w,w_{0},\xi\right):\left[0,w_{max}\right]\rightarrow\mathbb{R}$
is $L_{m}$-Lipschitz, such that $\left(G_{m}\right)$ converges uniformly
to $G$ on $\left[0,w_{max}\right]$ - see for example \cite{Miculescu2000}.
Therefore, for an arbitrary value of $\epsilon>0$, there exists a
sufficiently large value $\tilde{m}\in\mathbb{N}$ such that 
\begin{align}
\left|G_{\tilde{m}}\left(x,w,w_{0},\xi\right)-G\left(x,w,w_{0},\xi\right)\right|< & \frac{\epsilon}{2},\qquad\forall w\in\left[0,w_{max}\right].\label{eq: Appendix G_m vs G}
\end{align}
Observing that $\frac{1}{n}\sum_{j=1}^{n}W^{\left(j\right)}\left(T;\boldsymbol{\theta}_{n},\mathcal{Y}_{n}\right)\in\left[0,w_{max}\right]$
and by the monotonicity of expectation we also have $E^{t_{0},w_{0}}\left[W\left(T;\boldsymbol{\theta}_{n},\boldsymbol{Y}\right)\right]\in\left[0,w_{max}\right]$,
we use (\ref{eq: Appendix G_m vs G}) to obtain 
\begin{eqnarray}
 &  & \left|G_{\tilde{m}}\left(x,\frac{1}{n}\sum_{j=1}^{n}W^{\left(j\right)}\left(T;\boldsymbol{\theta}_{n},\mathcal{Y}_{n}\right),w_{0},\xi\right)-G\left(x,\frac{1}{n}\sum_{j=1}^{n}W^{\left(j\right)}\left(T;\boldsymbol{\theta}_{n},\mathcal{Y}_{n}\right),w_{0},\xi\right)\right|\nonumber \\
 &  & +\left|G_{\tilde{m}}\left(x,E^{t_{0},w_{0}}\left[W\left(T;\boldsymbol{\theta}_{n},\boldsymbol{Y}\right)\right],w_{0},\xi\right)-G\left(x,E^{t_{0},w_{0}}\left[W\left(T;\boldsymbol{\theta}_{n},\boldsymbol{Y}\right)\right],w_{0},\xi\right)\begin{array}{c}
\!\!\!\!\!\!\!\!\\
\!\!\!\!\!\!\!\!
\end{array}\right|\nonumber \\
 & < & \epsilon,\label{eq: Appendix two terms Gm min G bounded}
\end{eqnarray}
for any given values of $\xi\in\mathbb{R}$ and $w_{0}>0$. In addition,
since $G_{\tilde{m}}$ is $L_{\tilde{m}}$-Lipschitz, we have 
\begin{eqnarray}
 &  & \left|G_{\widetilde{m}}\left(x,\frac{1}{n}\sum_{j=1}^{n}W^{\left(j\right)}\left(T;\boldsymbol{\theta}_{n},\mathcal{Y}_{n}\right),w_{0},\xi\right)-G_{\widetilde{m}}\left(x,E^{t_{0},w_{0}}\left[W\left(T;\boldsymbol{\theta}_{n},\boldsymbol{Y}\right)\right],w_{0},\xi\right)\right|\nonumber \\
 & \leq & L_{\widetilde{m}}\cdot\left|\frac{1}{n}\sum_{j=1}^{n}W^{\left(j\right)}\left(T;\boldsymbol{\theta}_{n},\mathcal{Y}_{n}\right)-E^{t_{0},w_{0}}\left[W\left(T;\boldsymbol{\theta}_{n},\boldsymbol{Y}\right)\right]\right|.\label{eq: Appendix result for G_m Lipschitz}
\end{eqnarray}
Using (\ref{eq: Appendix two terms Gm min G bounded}) and (\ref{eq: Appendix result for G_m Lipschitz})
as well as the triangle inequality, we therefore have
\begin{eqnarray}
 &  & \left|G\left(x,\frac{1}{n}\sum_{j=1}^{n}W^{\left(j\right)}\left(T;\boldsymbol{\theta}_{n},\mathcal{Y}_{n}\right),w_{0},\xi\right)-G\left(x,E^{t_{0},w_{0}}\left[W\left(T;\boldsymbol{\theta}_{n},\boldsymbol{Y}\right)\right]\right),w_{0},\xi\right|\nonumber \\
 & < & \epsilon+L_{\widetilde{m}}\cdot\left|\frac{1}{n}\sum_{j=1}^{n}W^{\left(j\right)}\left(T;\boldsymbol{\theta}_{n},\mathcal{Y}_{n}\right)-E^{t_{0},w_{0}}\left[W\left(T;\boldsymbol{\theta}_{n},\boldsymbol{Y}\right)\right]\right|,\label{eq: Appendix bound for G}
\end{eqnarray}
for any given values of $\xi\in\mathbb{R}$ and $w_{0}>0$. Taking
the supremum over $\left(\boldsymbol{\theta}_{n},\xi\right)\in\mathbb{R}^{\nu_{n}+1}$
in (\ref{eq: Appendix bound for G}), using the result (\ref{eq: Appendix LLN wealth})
from Lemma \ref{lem: LLN} as well as the fact that $\epsilon>0$
was arbitrary, we therefore have 
\begin{eqnarray}
\sup_{\left(\boldsymbol{\theta}_{n},\xi\right)\in\mathbb{R}^{\nu_{n}+1}}\left|G\left(x,\frac{1}{n}\sum_{k=1}^{n}W^{\left(k\right)}\left(T;\boldsymbol{\theta}_{n},\mathcal{Y}_{n}\right),w_{0},\xi\right)-G\left(x,E^{t_{0},w_{0}}\left[W\left(T;\boldsymbol{\theta}_{n},\boldsymbol{Y}\right)\right],w_{0},\xi\right)\right| & \overset{P}{\longrightarrow} & 0.\label{eq: Appendix lemma G convergence- fixed first arg}
\end{eqnarray}

The results (\ref{eq: Appendix LLN for G - fixed second arg}) and
(\ref{eq: Appendix lemma G convergence- fixed first arg}), together
with the triangle inequality, therefore gives 
\begin{eqnarray}
 &  & \sup_{\left(\boldsymbol{\theta}_{n},\xi\right)\in\mathbb{R}^{\nu_{n}+1}}\left|\frac{1}{n}\sum_{j=1}^{n}G\left(W^{\left(j\right)}\left(T;\boldsymbol{\theta}_{n},\mathcal{Y}_{n}\right),\frac{1}{n}\sum_{k=1}^{n}W^{\left(k\right)}\left(T;\boldsymbol{\theta}_{n},\mathcal{Y}_{n}\right),w_{0},\xi\right)\right.\nonumber \\
 &  & \qquad\qquad\left.\qquad\qquad-E^{t_{0},w_{0}}\left[G\left(W\left(T;\boldsymbol{\theta}_{n},\boldsymbol{Y}\right),E^{t_{0},w_{0}}\left[W\left(T;\boldsymbol{\theta}_{n},\boldsymbol{Y}\right)\right],w_{0},\xi\right)\right]\begin{array}{c}
\!\!\!\!\!\!\!\!\\
\!\!\!\!\!\!\!\!
\end{array}\right|\nonumber \\
 & \leq & \frac{1}{n}\sum_{j=1}^{n}\sup_{\left(\boldsymbol{\theta}_{n},\xi\right)\in\mathbb{R}^{\nu_{n}+1}}\left|\frac{1}{n}\sum_{j=1}^{n}G\left(W^{\left(j\right)}\left(T;\boldsymbol{\theta}_{n},\mathcal{Y}_{n}\right),\frac{1}{n}\sum_{k=1}^{n}W^{\left(k\right)}\left(T;\boldsymbol{\theta}_{n},\mathcal{Y}_{n}\right),w_{0},\xi\right)\right.\nonumber \\
 &  & \qquad\qquad\left.\qquad\qquad-G\left(W^{\left(j\right)}\left(T;\boldsymbol{\theta}_{n},\mathcal{Y}_{n}\right),E^{t_{0},w_{0}}\left[W\left(T;\boldsymbol{\theta}_{n},\boldsymbol{Y}\right)\right],w_{0},\xi\right)\begin{array}{c}
\!\!\!\!\!\!\!\!\\
\!\!\!\!\!\!\!\!
\end{array}\right|\nonumber \\
 &  & +\sup_{\left(\boldsymbol{\theta}_{n},\xi\right)\in\mathbb{R}^{\nu_{n}+1}}\left|\frac{1}{n}\sum_{j=1}^{n}G\left(W^{\left(j\right)}\left(T;\boldsymbol{\theta}_{n},\mathcal{Y}_{n}\right),E^{t_{0},w_{0}}\left[W\left(T;\boldsymbol{\theta}_{n},\boldsymbol{Y}\right)\right],w_{0},\xi\right)\right.\nonumber \\
 &  & \qquad\qquad\left.\qquad\qquad-E^{t_{0},w_{0}}\left[G\left(W\left(T;\boldsymbol{\theta}_{n},\boldsymbol{Y}\right),E^{t_{0},w_{0}}\left[W\left(T;\boldsymbol{\theta}_{n},\boldsymbol{Y}\right)\right],w_{0},\xi\right)\right]\begin{array}{c}
\!\!\!\!\!\!\!\!\\
\!\!\!\!\!\!\!\!
\end{array}\right|\nonumber \\
 & \overset{P}{\longrightarrow} & 0\qquad\textrm{as }n\rightarrow\infty.\label{eq: Final step convergence of G in probability}
\end{eqnarray}
\end{proof}

\subsubsection*{Proof of Theorem \ref{thm: Validity of NN computational estimate}}

The expression in (\ref{eq: Result thm validity of NN computational estimate}),
together with the triangle inequality, imply that

\begin{eqnarray}
 &  & \left|\inf_{\left(\boldsymbol{\theta}_{n},\xi\right)\in\mathbb{R}^{\nu_{n}+1}}\hat{J}_{n}\left(\boldsymbol{\theta}_{n},\xi;t_{0},w_{0},\mathcal{Y}_{n}\right)-\inf_{\xi\in\mathbb{R}}\inf_{\boldsymbol{p}\in C\left(\mathcal{D}_{\boldsymbol{\phi}},\mathcal{Z}\right)}J\left(\boldsymbol{p},\xi;t_{0},w_{0}\right)\right|\nonumber \\
 & \leq & \left|\inf_{\left(\boldsymbol{\theta}_{n},\xi\right)\in\mathbb{R}^{\nu_{n}+1}}\hat{J}_{n}\left(\boldsymbol{\theta}_{n},\xi;t_{0},w_{0},\mathcal{Y}_{n}\right)-\inf_{\left(\boldsymbol{\theta}_{n},\xi\right)\in\mathbb{R}^{\eta_{n}+1}}J_{n}\left(\boldsymbol{\theta}_{n},\xi;t_{0},w_{0}\right)\right|\label{eq: Appendix stochastic error}\\
 &  & +\left|\inf_{\left(\boldsymbol{\theta}_{n},\xi\right)\in\mathbb{R}^{\nu_{n}+1}}J_{n}\left(\boldsymbol{\theta}_{n},\xi;t_{0},w_{0}\right)-\inf_{\xi\in\mathbb{R}}\inf_{\boldsymbol{p}\in C\left(\mathcal{D}_{\boldsymbol{\phi}},\mathcal{Z}\right)}J\left(\boldsymbol{p},\xi;t_{0},w_{0}\right)\right|.\label{eq: Appendix approximation error}
\end{eqnarray}

Using the definitions of $\hat{J}_{n}\left(\boldsymbol{\theta}_{n},\xi;t_{0},w_{0},\mathcal{Y}_{n}\right)$
in (\ref{eq: J_n_HAT approx}) and $J_{n}\left(\boldsymbol{\theta}_{n},\xi;t_{0},w_{0}\right)$
in (\ref{eq: J_n}), the expression (\ref{eq: Appendix stochastic error})
gives 
\begin{eqnarray}
 &  & \left|\inf_{\left(\boldsymbol{\theta}_{n},\xi\right)\in\mathbb{R}^{\nu_{n}+1}}\hat{J}_{n}\left(\boldsymbol{\theta}_{n},\xi;t_{0},w_{0},\mathcal{Y}_{n}\right)-\inf_{\left(\boldsymbol{\theta}_{n},\xi\right)\in\mathbb{R}^{\eta_{n}+1}}J_{n}\left(\boldsymbol{\theta}_{n},\xi;t_{0},w_{0}\right)\right|\nonumber \\
 & \leq & \sup_{\left(\boldsymbol{\theta}_{n},\xi\right)\in\mathbb{R}^{\nu_{n}+1}}\left|\hat{J}_{n}\left(\boldsymbol{\theta}_{n},\xi;t_{0},w_{0},\mathcal{Y}_{n}\right)-J_{n}\left(\boldsymbol{\theta}_{n},\xi;t_{0},w_{0}\right)\begin{array}{c}
\!\!\!\!\!\!\!\!\\
\!\!\!\!\!\!\!\!
\end{array}\right|\nonumber \\
 & \leq & \sup_{\left(\boldsymbol{\theta}_{n},\xi\right)\in\mathbb{R}^{\nu_{n}+1}}\left|\frac{1}{n}\sum_{j=1}^{n}F\left(W^{\left(j\right)}\left(T;\boldsymbol{\theta}_{n},\mathcal{Y}_{n}\right),\xi\right)-E^{t_{0},w_{0}}\left[F\left(W\left(T;\boldsymbol{\theta}_{n},\boldsymbol{Y}\right),\xi\right)\right]\right|\label{eq: Appendix stoch error ito F}\\
 &  & +\sup_{\left(\boldsymbol{\theta}_{n},\xi\right)\in\mathbb{R}^{\nu_{n}+1}}\left|\frac{1}{n}\sum_{j=1}^{n}G\left(W^{\left(j\right)}\left(T;\boldsymbol{\theta}_{n},\mathcal{Y}_{n}\right),\frac{1}{n}\sum_{k=1}^{n}W^{\left(k\right)}\left(T;\boldsymbol{\theta}_{n},\mathcal{Y}_{n}\right),w_{0},\xi\right)\right.\nonumber \\
 &  & \qquad\qquad\left.\qquad\qquad-E^{t_{0},w_{0}}\left[G\left(W\left(T;\boldsymbol{\theta}_{n},\boldsymbol{Y}\right),E^{t_{0},w_{0}}\left[W\left(T;\boldsymbol{\theta}_{n},\boldsymbol{Y}\right)\right],w_{0},\xi\right)\right]\begin{array}{c}
\!\!\!\!\!\!\!\!\\
\!\!\!\!\!\!\!\!
\end{array}\right|.\label{eq: Appendix stoch error ito G}
\end{eqnarray}
As per Lemma \ref{lem: LLN} and Lemma \ref{lem: Convergence of G},
(\ref{eq: Appendix stoch error ito F}) and (\ref{eq: Appendix stoch error ito G})
converge to zero in probability as $n\rightarrow\infty$. As a result,
since (\ref{eq: Appendix stochastic error}) therefore converges to
zero in probability as $n\rightarrow\infty$ and, by Theorem \ref{thm: Validity of NN approximation},
(\ref{eq: Appendix approximation error}) converges to zero as $n\rightarrow\infty$,
we conclude that the result (\ref{eq: Result thm validity of NN computational estimate})
of Theorem \ref{thm: Validity of NN computational estimate} holds.
\qed

\section{NN approach: Selected practical considerations\label{sec: Appendix B - NN approach - practical considerations}}

We summarize some practical considerations with respect to the NN
approach:

\begin{enumerate}[label=(\roman*)]

\item Constructing training and testing datasets $\mathcal{Y}_{n}$
and $\mathcal{Y}_{\hat{n}}^{test}$: Since these sets correspond to
finite samples of $\boldsymbol{Y}$ and $\boldsymbol{Y}^{test}$,
any data generation technique generating paths of underlying asset
returns can be used for the construction of training and testing data
sets. As illustrated in Section \ref{sec:Numerical results}, data
generation techniques like (i) Monte Carlo simulation of parametric
asset dynamics, (ii) block bootstrap resampling of empirical returns,
or for example (iii) GAN-generated synthetic returns can all be employed
without difficulty, but we emphasize that the approach remains agnostic
regarding the underlying data generation methodology. Note that the
underlying data generation assumptions typically differ for $\mathcal{Y}_{n}$
and $\mathcal{Y}_{\hat{n}}^{test}$, respectively, depending on for
example the time periods of empirical data considered for in-sample
and out-of-sample testing. 

As for the number of paths $n$ in each of $\mathcal{Y}_{n}$ and
$\mathcal{Y}_{\hat{n}}^{test}$, experiments show that in the case
of measures of tail risk in the objectives such as CVaR (see (\ref{eq: MCV objective})),
a significantly larger number of paths are required in order to obtain
a sufficiently large sample of tail outcomes in the training and testing
data, than for example in cases where variance is the risk measure.
To give a concrete examples, at least 2 million paths in the training
set of the NN in Subsection \ref{subsec: Ground truth problem MCV}
is required to produce reliable results for the CVaR, whereas 1 million
paths in the training set of the NN in Subsection \ref{subsec:Ground-truth: DSQ with cont rebal}
are more than sufficient to obtain reliable results.

\item Depth (number of hidden layers $\mathcal{L}^{h}$) and width
(number of nodes in each hidden layer $\hbar\left(n\right)$) of the
NN: As the examples in Section \ref{sec:Numerical results} show,
remarkably accurate can be obtained with NNs no deeper than 2 hidden
layers and a relatively small number of nodes in each hidden layer.
For objectives involving more complex investment strategies such as
MCV and Mean - Semi-variance (where, even in the case of two assets,
the behavior of the optimal strategy is clearly more complex than
in the case of for example the MV-optimal strategy), experiments show
that two hidden layers lead to stable and reliable results, with the
number of hidden nodes in each hidden layer chosen to be slightly
more than the number of assets, for example $\hbar\left(n\right)=N_{a}+2$.
For objectives such as DSQ and MV, a single hidden layer is often
sufficient. 

\item Activation functions: As highlighted in Assumption \ref{assu: Appendix NN structure assumptions},
we use logistic sigmoid activations as a concrete example for convergence
analysis purposes, but that these theoretical results can be modified
for any of the commonly-used activations (see for example \cite{SonodaMurata2017}).
Note that since NNs of one or two hidden layers were found to be very
effective in solving the problems under consideration, we did not
encounter any problems related to vanishing or exploding gradients
in the case of logistic sigmoid activations. However, if deeper NNs
are required, activation functions could be changed to e.g. ReLU or
ELU without affecting the theoretical foundations for the proposed
approach.

\item For the solution of (\ref{eq: V_n_HAT approx}) by gradient
descent, we used the Gadam algorithm of \cite{GranzioEtAl2020}. This
is simply a combination of the Adam algorithm (\cite{KingmaBa2015})
with tail iterate averaging for improved convergence properties and
variance reduction (\cite{PolyakJuditsky1992,MuckeEtAl2019,NeuRosasco2018}).
For the Adam algorithm component, the default algorithm parameters
of \cite{KingmaBa2015} performed well in our setting, typically with
no more than 50,000 SGD steps. Note that the mini-batch size selected
depends on the problem to be solved: we found that mini-batch sizes
of at least 1,000 paths of the training data set $\mathcal{Y}_{n}$
are required for measures of tail risk in the objective (such as CVaR),
since smaller batch sizes typically means that the tail of the returns
distribution is not sufficiently well represented in choosing the
descent direction, leading to unreliable results in ground truth analyses. 

While the technical results of Section \ref{sec:Numerical results}
formally do not require continuous differentiability (in addition
to continuity) of the functions $F$ and $G$, improved convergence
properties of the SGD algorithm can be obtained if the objective is
at least continuously differentiable (see for example \cite{ShapiroWardi1996}).
For implementation purposes, we can therefore smooth objectives like
(\ref{eq: MCV objective}) in a straightforward way, by for example
replacing $\max\left(x,0\right)$ in (\ref{eq: MCV objective}) with
a continuously differentiable approximation used in \cite{AlexanderColemanLi2006},
\begin{eqnarray}
\max\left(x,0\right) & \simeq & \begin{cases}
x, & \textrm{if }x>\lambda_{mcv},\\
\frac{1}{4\lambda_{mcv}}x^{2}+\frac{1}{2}x+\frac{1}{4}\lambda_{mcv}, & \textrm{if }-\lambda_{mcv}\leq x\leq\lambda_{mcv}\\
0, & \textrm{otherwise},
\end{cases},\label{eq: psi CVAR min}
\end{eqnarray}
where $\lambda_{mcv}$ is some small smoothing parameter (e.g. $\lambda_{mcv}=10^{-3})$. 

In addition to considering the smoothing of certain objectives, minor modifications to objective functions to avoid (mathematical) ill-posedness may be desirable in certain situations. For  example, in the case of the OSQ objective (\ref{eq: OSQ objective}), the term $\epsilon W(\cdot)$ is added to ensure the problem remains well-posed even if $W(t) \gg \gamma$. In this case, when implementing the numerical solution, small values of $\epsilon$ (for example $\epsilon = 10^{-6}$ was chosen in the numerical results of Section \ref{sec:Numerical results}) do not have a noticeable effect on either the summary statistics or the optimal controls.

\end{enumerate}

\section{Additional parameters for numerical results\label{sec: Appendix Parameters-for-numerical results}}

In this appendix, additional parameters related to the numerical results
of Section \ref{sec:Numerical results} are discussed.

The historical returns data for the basic assets such as the T-bills/bonds
and the broad market index were obtained from the CRSP \footnote{Calculations were based on data from the Historical Indexes 2020©,
Center for Research in Security Prices (CRSP), The University of Chicago
Booth School of Business. Wharton Research Data Services was used
in preparing this article. This service and the data available thereon
constitute valuable intellectual property and trade secrets of WRDS
and/or its third party suppliers.}, whereas factor data for Size and Value (see \cite{FamaFrench1992,FamaFrench2015a})
were obtained from Kenneth French's data library\footnote{See https://mba.tuck.dartmouth.edu/pages/faculty/ken.french/data\_library.html}
(KFDL). The detailed time series sourced for each asset is as follows:
\begin{enumerate}
\item T30 (30-day Treasury bill): CRSP, monthly returns for 30-day Treasury
bill.
\item B10 (10-year Treasury bond): CRSP, monthly returns for 10-year Treasury
bond.\footnote{ {\color{black}The 10-year Treasury index was 
constructed from monthly returns from CRSP back to 1941. The data for
1926-1941 were interpolated from annual returns in \cite{Homer_2015}} }

\item Market (broad equity market index): CRSP, monthly returns, including
dividends and distributions, for a capitalization-weighted index consisting
of all domestic stocks trading on major US exchanges (the VWD index).
\item Size (Portfolio of small stocks): KFDL, ``Portfolios Formed on Size'',
which consists of monthly returns on a capitalization-weighted index
consisting of the firms (listed on major US exchanges) with market
value of equity, or market capitalization, at or below the 30th percentile
(i.e. smallest 30\%) of market capitalization values of NYSE-listed
firms.
\item Value (Portfolio of value stocks): KFDL, ``Portfolios Formed on Book-to-Market'',
which consists of monthly returns on a capitalization-weighted index
of the firms (listed on major US exchanges) consisting of the firms
(listed on major US exchanges) with book-to-market value of equity
ratios at or above the 70th percentile (i.e. highest 30\%) of book-to-market
ratios of NYSE-listed firms.
\end{enumerate}
The historical asset returns time series are inflation-adjusted using
inflation data from the US Bureau of Labor Statistics\footnote{The annual average CPI-U index, which is based on inflation data for
urban consumers, were used - see \texttt{http://www.bls.gov.cpi}}.

For the purposes of obtaining the parameters for (\ref{eq: Parametric dynamics of underlying assets})
in Subsections (\ref{subsec:Ground-truth: DSQ with cont rebal}) and
(\ref{subsec: Ground truth problem MCV}), we use the same calibration
methodology as outlined in \cite{DangForsyth2016,ForsythVetzal2016},
and assume the jump dynamics of the \cite{KouOriginal} model. 

In particular, we assume that in the dynamics (\ref{eq: Parametric dynamics of underlying assets}),
$\log\vartheta_{i}$ has a asymmetric double-exponential distribution,
\begin{eqnarray}
f_{\vartheta_{i}}\left(\vartheta_{i}\right) & = & \nu_{i}\zeta_{i,1}\vartheta_{i}^{-\zeta_{i,1}-1}\mathbb{I}_{\left[\vartheta_{i}\geq1\right]}\left(\vartheta_{i}\right)+\left(1-\nu_{i}\right)\zeta_{i,2}\vartheta_{i}^{\zeta_{i,2}-1}\mathbb{I}_{\left[0\leq\vartheta_{i}<1\right]}\left(\vartheta_{i}\right),\label{eq: Jump pdf - KOU}
\end{eqnarray}
where $\upsilon_{i}\in\left[0,1\right]\textrm{ and }\zeta_{i,1}>1,\zeta_{i,2}>0$.
In (\ref{eq: Jump pdf - KOU}), $\nu_{i}$ denotes the probability
of an upward jump given that a jump occurs. The resulting parameters
are obtained using the filtering technique for the calibration of
jump diffusion processes - see \cite{DangForsyth2016,ForsythVetzal2016}
for the relevant methodological details. For calibration purposes,
a jump threshold equal to 3 has been used in the methodology of \cite{DangForsyth2016}.

Table \ref{eq: Jump pdf - KOU} and Table \ref{tab: Params for ground truth MCV}
summarize the parameters for the asset dynamics for Subsections (\ref{subsec:Ground-truth: DSQ with cont rebal})
and (\ref{subsec: Ground truth problem MCV}), respectively.

\noindent 
\begin{table}[!hbt]
\caption{\label{tab: Params for ground truth DSQ with cont rebal} Calibrated,
inflation-adjusted parameters for asset dynamics in Subsection \ref{subsec:Ground-truth: DSQ with cont rebal}:
Ground truth - $DSQ\left(\gamma\right)$ with continuous rebalancing.
In this example, the first asset is assumed to be a risk-free asset,
so we set $\mu_{1}=r$, while the second asset follows jump dynamics.
The parametric asset returns are (trivially) uncorrelated, and parameters
are based on the inflation-adjusted returns of the T30 and VWD time
series, respectively, over the period 1926:01 to 2019:12}

\noindent \centering{}{\footnotesize{}}%
\begin{tabular}{|c||c|c|c|c|c|c|}
\hline 
{\footnotesize{}Parameter} & {\footnotesize{}$\mu$$_{i}$} & {\footnotesize{}$\sigma$$_{i}$} & {\footnotesize{}$\lambda_{i}$} & {\footnotesize{}$\upsilon_{i}$} & {\footnotesize{}$\zeta_{i,1}$} & {\footnotesize{}$\zeta_{i,2}$}\tabularnewline
\hline 
{\footnotesize{}Asset 1 (T30)} & {\footnotesize{}0.0043} & {\footnotesize{}-} & {\footnotesize{}-} & {\footnotesize{}-} & {\footnotesize{}-} & {\footnotesize{}-}\tabularnewline
\hline 
{\footnotesize{}Asset 2(VWD)} & {\footnotesize{}0.0877} & {\footnotesize{}0.1459} & {\footnotesize{}0.3191} & {\footnotesize{}0.2333} & {\footnotesize{}4.3608} & {\footnotesize{}5.504}\tabularnewline
\hline 
\end{tabular}{\footnotesize\par}
\end{table}

\noindent 
\begin{table}[!hbt]
\caption{\label{tab: Params for ground truth MCV} Calibrated, inflation-adjusted
parameters for asset dynamics in Subsection \ref{subsec: Ground truth problem MCV}:
Ground truth - problem $MCV\left(\rho\right)$. In this example, there
are two assets with jump dynamics (see \cite{ForsythVetzal2022CutLosses}),
with parameters based on the inflation-adjusted returns of the T30
and VWD time series over the period 1926:01 to 2019:12. The Brownian
motions in (\ref{eq: Parametric dynamics of underlying assets}) have
correlation $dZ_{1}dZ_{2}=\rho_{1,2}dt$.}

\noindent \centering{}{\footnotesize{}}%
\begin{tabular}{|c||c|c|c|c|c|c|c|}
\hline 
{\footnotesize{}Parameter} & {\footnotesize{}$\mu$$_{i}$} & {\footnotesize{}$\sigma$$_{i}$} & {\footnotesize{}$\lambda_{i}$} & {\footnotesize{}$\upsilon_{i}$} & {\footnotesize{}$\zeta_{i,1}$} & {\footnotesize{}$\zeta_{i,2}$} & {\footnotesize{}$\rho_{1,2}$}\tabularnewline
\hline 
{\footnotesize{}Asset 1 (T30)} & {\footnotesize{}0.0045} & {\footnotesize{}0.0130} & {\footnotesize{}0.5106} & {\footnotesize{}0.3958} & {\footnotesize{}65.85} & {\footnotesize{}57.75} & {\footnotesize{}0.08228}\tabularnewline
\hline 
{\footnotesize{}Asset 2(VWD)} & {\footnotesize{}0.0877} & {\footnotesize{}0.1459} & {\footnotesize{}0.3191} & {\footnotesize{}0.2333} & {\footnotesize{}4.3608} & {\footnotesize{}5.504} & {\footnotesize{}0.08228}\tabularnewline
\hline 
\end{tabular}{\footnotesize\par}
\end{table}

\end{appendices}
\end{document}